\documentclass[11pt]{article}

\bibliographystyle{plainurl}
\usepackage{geometry,fullpage}
\usepackage{titling}
\usepackage{bm}
\tolerance2000
\usepackage{amsthm,amssymb,amsmath,amsfonts,rotate}
\usepackage{graphicx,algorithmic,algorithm}
\usepackage[]{datetime}
\usepackage{aliascnt}

\everymath{\color{MidnightBlack}}

\usepackage[pdftex,
backref=true,
plainpages = true,
pdfpagelabels,
hyperfootnotes=true,
pdfpagemode=FullScreen,
bookmarks=true,
bookmarksopen = true,
bookmarksnumbered = true,
breaklinks = true,
hyperfigures,
pagebackref,
urlcolor = magenta,
urlcolor = MidnightBlack,
anchorcolor = green,
hyperindex = true,
colorlinks = true,
linkcolor = black!30!blue,
citecolor = black!30!green
]{hyperref}

\RequirePackage[T1]{fontenc}
\RequirePackage[utf8]{inputenc}
\RequirePackage{amssymb}
\usepackage{alphabeta}
\usepackage{textcomp}
\DeclareUnicodeCharacter{2192}{\ifmmode\to\else\textrightarrow\fi}
\DeclareUnicodeCharacter{2203}{\ensuremath\exists}
\DeclareUnicodeCharacter{183}{\cdot}
\DeclareUnicodeCharacter{2200}{\forall}
\DeclareUnicodeCharacter{2264}{\leq}
\DeclareUnicodeCharacter{2265}{\geq}
\DeclareUnicodeCharacter{8614}{\mathbin{\mapsto}}
\DeclareUnicodeCharacter{8656}{\Leftarrow}
\DeclareUnicodeCharacter{8657}{\Uparrow}
\DeclareUnicodeCharacter{8658}{\Rightarrow}
\DeclareUnicodeCharacter{8659}{\Downarrow}
\DeclareUnicodeCharacter{8669}{\rightsquigarrow}
\newcommand{\eqdef}{\stackrel{{\scriptsize\rm def}}{=}}
\DeclareUnicodeCharacter{8797}{\eqdef}
\DeclareUnicodeCharacter{8870}{\vdash}
\DeclareUnicodeCharacter{8873}{\Vdash}
\DeclareUnicodeCharacter{8871}{\models}
\DeclareUnicodeCharacter{9121}{\lceil}
\DeclareUnicodeCharacter{9123}{\lfloor}
\DeclareUnicodeCharacter{9124}{\rceil}
\DeclareUnicodeCharacter{9126}{\rfloor}
\DeclareUnicodeCharacter{9655}{\triangleright}
\DeclareUnicodeCharacter{9665}{\triangleleft}
\DeclareUnicodeCharacter{9671}{\diamond}
\DeclareUnicodeCharacter{9675}{\circ}
\DeclareUnicodeCharacter{10178}{\bot}
\DeclareUnicodeCharacter{10214}{\llbracket} 
\DeclareUnicodeCharacter{10215}{\rrbracket} 
\DeclareUnicodeCharacter{10229}{\longleftarrow}
\DeclareUnicodeCharacter{10230}{\longrightarrow}
\DeclareUnicodeCharacter{10231}{\longleftrightarrow}
\DeclareUnicodeCharacter{10232}{\Longleftarrow}
\DeclareUnicodeCharacter{10233}{\Longrightarrow}
\DeclareUnicodeCharacter{10234}{\Longleftrightarrow}
\DeclareUnicodeCharacter{10236}{\longmapsto}
\DeclareUnicodeCharacter{10238}{\Longmapsto} 
\DeclareUnicodeCharacter{10503}{\Mapsto}    
\DeclareUnicodeCharacter{10971}{\mathrel{\not\hspace{-0.2em}\cap}}
\DeclareUnicodeCharacter{65294}{\ldotp}
\DeclareUnicodeCharacter{65372}{\mid}

\usepackage{lmodern}
\usepackage{amsmath,ifthen,eurosym}
\usepackage{wrapfig,cite}
\usepackage{latexsym,amssymb,url}
\usepackage{subcaption}
\usepackage{cleveref}

\usepackage[svgnames]{xcolor}

\usepackage{color}

\definecolor{MidnightBlack}{rgb}{0.1,0.1,0.30}
\definecolor{MidnightBlue}{rgb}{0.1,0.1,0.44}
\definecolor{Black}{rgb}{0,0, 0}
\definecolor{Blue}{rgb}{0, 0 ,1}
\definecolor{Red}{rgb}{1, 0 ,0}
\definecolor{White}{rgb}{1, 1, 1}
\definecolor{Grey}{rgb}{.6, .6, .6}
\definecolor{Mygreen}{rgb}{.0, .7, .0}
\definecolor{Yellow}{rgb}{.55,.55,0}
\definecolor{Mustard}{rgb}{1.0, 0.86, 0.35}
\definecolor{applegreen}{rgb}{0.55, 0.71, 0.0}
\definecolor{darkturquoise}{rgb}{0.0, 0.81, 0.82}
\definecolor{celestialblue}{rgb}{0.29, 0.59, 0.82}
\definecolor{green_yellow}{rgb}{0.68, 1.0, 0.18}
\definecolor{crimsonglory}{rgb}{0.75, 0.0, 0.2}
\definecolor{darkmagenta}{rgb}{0.30, 0.0, 0.30}
\definecolor{internationalorange}{rgb}{1.0, 0.31, 0.0}
\definecolor{darkorange}{rgb}{1.0, 0.55, 0.0}

\newcommand{\mnb}[1]{{\color{MidnightBlue}#1}}

\usepackage{tikz}
\usetikzlibrary{calc,3d}
\usetikzlibrary{intersections,decorations.pathmorphing,shapes,decorations.pathreplacing,fit}

\usepackage{todonotes}

\usepackage{amsmath,latexsym,amsthm,amsfonts,amssymb,%
	 ifthen,color,wrapfig,hyperref,rotate,lmodern,aliascnt,%
	 datetime,graphicx,algorithmic,algorithm,enumerate,enumitem,%
	todonotes,cleveref,bm,subcaption,tabularx,colortbl,xspace%
}

\newcommand{\bd}{{\sf bd}}
\newcommand{\yes}{{\sf yes}}
\newcommand{\no}{{\sf no}}

\newcommand{\remove}[1]{}

\newcommand{\bigmid}{\;\big|\;}
\newcommand{\cupall}{\pmb{\pmb{\bigcup}}}

\setlength\marginparwidth{2cm}

\newcommand{\pretp}{\preceq_{\sf tm}}

\newcounter{func}
\newcommand{\newfun}[1]{f_{\refstepcounter{func}\label{#1}\thefunc}}
\newcommand{\funref}[1]{\hyperref[#1]{f_{\ref*{#1}}}} 

\newcounter{con}

\newcommand{\conref}[1]{\hyperref[#1]{c_{\ref*{#1}}}} 

\newcommand{\bound}[1]{\textbf{#1}}

\newcommand{\tw}{{\sf tw}}


\usepackage{dsfont}

\usepackage{float}
\usepackage{tikz,pgf}
\usetikzlibrary{backgrounds}
\usetikzlibrary{calc,3d}

\tikzset{red node/.style={draw=red, circle, fill = red, minimum size = 4pt, inner sep = 0pt}}
\tikzset{yellow node/.style={draw=yellow, circle, fill = yellow, minimum size = 4pt, inner sep = 0pt}}
\tikzset{blue node/.style={draw=celestialblue, circle, fill =celestialblue, minimum size = 4pt, inner sep = 0pt}}
\tikzset{triangle/.style = { regular polygon, regular polygon sides=3, rotate=180}}
\tikzset{small red/.style={draw=red, triangle, fill = red, minimum size = 2pt, inner sep = 0pt}}

\tikzset{black node/.style={draw, circle, fill = black, minimum size = 3pt, inner sep = 0pt}}
\tikzset{small black node/.style={draw, circle, fill = black, minimum size = 3pt, inner sep = 0pt}}
\tikzset{model node/.style={draw=celestialblue, circle, fill = celestialblue, minimum size = 5pt, inner sep = 0pt}}
\tikzset{model node small/.style={draw=celestialblue, circle, fill = celestialblue, minimum size = 3pt, inner sep = 0pt}}
\tikzset{rep node/.style={draw=red, circle, fill = red, minimum size = 3pt, inner sep = 0pt}}
\tikzset{track node 1/.style={draw, circle, fill = black, minimum size = 2pt, inner sep = 0pt}}
\tikzset{track node 2/.style={draw=black!30!white, circle, fill = black!30!white, minimum size = 2pt, inner sep = 0pt}}
\tikzset{track node 3/.style={draw=black!10!white, circle, fill = black!10!white, minimum size = 2pt, inner sep = 0pt}}


\newcommand{\mynewtheorem}[2]{
	\newaliascnt{#1}{dummy}
	\newtheorem{#1}[#1]{#2}
	\aliascntresetthe{#1}
}

\theoremstyle{plain}
\mynewtheorem{theorem}{Theorem}
\mynewtheorem{proposition}{Proposition}
\mynewtheorem{corollary}{Corollary}
\mynewtheorem{lemma}{Lemma}

\theoremstyle{definition}
\mynewtheorem{definition}{Definition}
\theoremstyle{remark}
\mynewtheorem{sublemma}{Sublemma}
\mynewtheorem{claim}{Claim}
\mynewtheorem{remark}{Remark}
\mynewtheorem{fact}{Fact}
\mynewtheorem{conjecture}{Conjecture}
\mynewtheorem{question}{Question}
\mynewtheorem{observation}{Observation}
\mynewtheorem{problem}{Problem}
\mynewtheorem{note}{Note}

\newcommand{\bigO}[1]{\mathcal O(#1)}
\newcommand{\NP}{{\sf NP}\xspace}
\newcommand{\FPT}{{\sf FPT}\xspace}

\newcommand{\W}{{\sf W}\xspace}
\newcommand{\frR}{{\frak{R}}}

%


\usepackage{titling}
\thanksmarkseries{arabic}
\newcommand*\samethanks[1][\value{footnote}]{\footnotemark[#1]}

\providecommand{\customgenericname}{}

\begin{document}

\title{$k$-apices of minor-closed graph classes.\! II. Parameterized algorithms\thanks{A conference version of this paper appeared in the \emph{Proceedings of the 47th International Colloquium on Automata, Languages and Programming (\textbf{ICALP}), volume 168 of LIPICs, pages 95:1--95:20, \textbf{2020}}.}}

\author{\bigskip 	Ignasi Sau\thanks{LIRMM, Universit\'e de Montpellier, CNRS, Montpellier, France. Supported  by   the ANR projects DEMOGRAPH (ANR-16-CE40-0028), ESIGMA (ANR-17-CE23-0010), ELIT (ANR-20-CE48-0008), the French-German Collaboration ANR/DFG Project UTMA (ANR-20-CE92-0027), and the French Ministry of Europe and Foreign Affairs, via the Franco-Norwegian project PHC AURORA.  Emails:  \texttt{ignasi.sau@lirmm.fr}, \texttt{sedthilk@thilikos.info}.}\and
	Giannos Stamoulis\thanks{LIRMM, Universit\'e de Montpellier, Montpellier, France. Email: \texttt{gstamoulis@lirmm.fr}.}
	\and
	Dimitrios  M. Thilikos\samethanks[2]
}
\date{}

\maketitle

\begin{abstract}
	\noindent Let ${\cal G}$ be a minor-closed graph class. We say that a graph $G$ is a {\em $k$-apex} of ${\cal G}$ if  $G$ contains a  set $S$ of at most $k$ vertices such that  $G\setminus S$ belongs to ${\cal G}$. We denote by ${\cal A}_k ({\cal G})$ the set of all graphs that are $k$-apices of ${\cal G}.$ In the first paper of this series we obtained upper bounds on the size of the graphs in the minor-obstruction set of ${\cal A}_k ({\cal G})$, i.e., the minor-minimal set of graphs not belonging to ${\cal A}_k ({\cal G}).$
	In this article we provide an algorithm that,  given a graph $G$  on $n$ vertices, runs in time $2^{{\sf poly}(k)}\cdot n^3$ and either returns a set $S$ certifying that $G \in {\cal A}_k ({\cal G})$,  or reports that $G \notin {\cal A}_k ({\cal G})$. Here ${\sf poly}$ is a polynomial function whose degree depends on the maximum size of a minor-obstruction of ${\cal G}.$ In the special case where ${\cal G}$ excludes some apex graph as a minor,  we give an alternative  algorithm running in  $2^{{\sf poly}(k)}\cdot n^2$-time.

	\bigskip


	\noindent \textbf{Keywords}: graph minors; parameterized algorithms; graph modification problems; irrelevant vertex technique; Flat Wall Theorem.
\end{abstract}

\newpage

\tableofcontents

\newpage

\section{Introduction}
\label{label_deliberations}

Graph modification problems are fundamental in algorithmic graph theory. Typically,  such a problem is determined by a graph class ${\cal G}$ and some prespecified set ${\cal M}$ of {\sl  local}  modifications,
such as vertex/edge removal or edge addition/contraction or combinations of them,
and the question  is, given
a graph $G$ and an integer $k,$ whether it is possible to
transform $G$ to a graph in ${\cal G}$ by applying $k$ modification operations from ${\cal M}.$
A plethora of graph problems can be formulated
for different instantiations of ${\cal G}$ and ${\cal M}.$
Applications span diverse topics such as computational biology, computer vision, machine learning, networking, and sociology~\cite{FominSM15grap}.
As reported
by Roded Sharan in~\cite{Sharan02grap}, already in 1979 Garey and Johnson mentioned 18 different types of modification problems
\cite[Section A1.2]{GareyJ79comp}. For more on graph modification problems,   see~\cite{FominSM15grap,BodlaenderHL14grap} as well as the running survey in~\cite{CrespelleDFG13asur}. In this paper we focus our attention on the vertex deletion operation. We say that a graph $G$ is a {\em $k$-apex} of a graph class ${\cal G}$ if there is a set $S\subseteq V(G)$ of size at most $k$ such that the removal of $S$ from $G$ results in a graph in ${\cal G}.$
In other words, we consider the following meta-problem.

\begin{center}
	\fbox{
		\begin{minipage}{13.2cm}
			\noindent\mnb{\sc  Vertex Deletion to ${\cal G}$}\\
			\noindent\textbf{Input:}~~A graph $G$ and  a non-negative integer $k.$\\
			\textbf{Objective:}~~Find, if it exists, a set $S\subseteq V(G),$  certifying that  $G$ is $k$-apex of  ${\cal G}.$
		\end{minipage}
	}
\end{center}

\noindent To illustrate  the  expressive power of \mnb{\sc Vertex Deletion to ${\cal G}$}, if ${\cal G}$ is the class of  edgeless (resp. acyclic, planar,  bipartite, (proper) interval, chordal) graphs, we obtain the \mnb{\textsc{Vertex Cover}} (resp. \mnb{\textsc{Feedback Vertex Set}}, \mnb{\textsc{Vertex Planarization}}, \mnb{\textsc{Odd Cycle Transversal}}, \mnb{\sc (proper) Interval Vertex Deletion}, \mnb{\sc Chordal Vertex Deletion}) problem.

By the classical result of Lewis and Yannakakis~\cite{LewisY80then}, \mnb{\sc  Vertex Deletion to ${\cal G}$} is \NP-hard for every non-trivial graph class ${\cal G}.$ To circumvent its intractability, we study it from the parameterized complexity point of view and we parameterize
it by the number $k$ of  vertex deletions.
%
In this setting, the most desirable behavior is the existence of an algorithm running in time $f(k) \cdot n^{\bigO{1}},$ where $f$ is a computable function depending only on $k.$ Such an algorithm is called \emph{fixed-parameter tractable}, or \FPT-algorithm for short, and a parameterized problem admitting an \FPT-algorithm is said to belong to the parameterized complexity class  \FPT. Also, the function $f$ is called {\em parametric dependence} of the corresponding \FPT-algorithm, and the challenge is to design \FPT-algorithms
with small parametric dependencies~\cite{CyganFKLMPPS15para,DowneyF13fund,FlumG06para,Niedermeier06invi}.

Unfortunately, we cannot hope for the existence of \FPT-algorithms for every graph class ${\cal G}.$ Indeed, the problem is \W-hard\footnote{Implying that an \FPT-algorithm would result in an unexpected complexity collapse; see~\cite{DowneyF13fund}.} for some classes ${\cal G}$ that are closed under induced subgraphs~\cite{Lokshtanov08whee} or, even worse, \NP-hard, for $k=0,$ for every class ${\cal G}$ whose recognition problem is \NP-hard, such as
some classes closed under subgraphs or induced subgraphs (for instance 3-colorable graphs), edge contractions~\cite{BrouwerV87cont}, or induced minors~\cite{FellowsKMP95thec}.

On the positive side, a very relevant subset of classes of graphs {\sl does} allow for \FPT-algorithms. These are classes ${\cal G}$ that are closed under minors\footnote{A graph $H$ is a \emph{minor} of a graph $G$ if it can be obtained from a subgraph of $G$ by contracting edges, see~\autoref{label_incurability} for the formal definitions.}, or \emph{minor-closed}.  To see this, we define ${\cal A}_{k}({\cal G})$
as the class of the $k$-apices of ${\cal G},$ i.e., the \yes-instances of  \mnb{\sc  Vertex Deletion to ${\cal G}$},  and observe that if ${\cal G}$ is minor-closed
then the same holds for ${\cal A}_{k}({\cal G})$ for every $k\geq 0.$ This, in turn, implies that for every $k,$
${\cal A}_{k}({\cal G})$ can be characterized by a set ${\cal F}_{k}$ of minor-minimal graphs that are not in ${\cal A}_{k}({\cal G})$;
we call these graphs the {\em obstructions} of ${\cal A}_{k}({\cal G})$ and we know that they are finite because of the Robertson and Seymour's theorem~\cite{RobertsonS04XX}. In other words,
we know that the size of the obstruction set of ${\cal A}_{k}({\cal G})$ is bounded by some function of $k.$
Then one can decide whether a graph $G$ belongs to ${\cal A}_{k}({\cal G})$
by checking whether $G$ excludes all members of the obstruction set of ${\cal A}_{k}({\cal G}),$ and this can be checked by using the \FPT-algorithm in~\cite{RobertsonS95XIII} (see also~\cite{FellowsL88nonc}).

As the Robertson and Seymour's theorem~\cite{RobertsonS04XX} does {\sl not} construct ${\cal F}_{k},$ the aforementioned argument is not constructive, i.e., it is not able to construct the claimed \FPT-algorithm. An important step towards the constructibility of
such an  \FPT-algorithm was done by Adler et al.~\cite{AdlerGK08comp}, who proved that
${\cal F}_{k}$ is effectively computable.
In the first paper of this series~\cite{SauST21kapiI} we give an {\sl explicit upper bound} on the size of the graphs in ${\cal F}_{k},$ namely we prove that every graph in ${\cal F}_{k}$  has size bounded by an exponential tower of height four of a polynomial function in $k,$  whose degree depends on the size of the minor-obstructions of ${\cal G}.$ 
The focus of the current paper is on the parametric dependence of \FPT-algorithms to solve the \mnb{\sc  Vertex Deletion to ${\cal G}$} problem, i.e., for recognizing the class ${\cal A}_{k}({\cal G})$.


The task of specifying (or even optimizing) this parametric dependence
for different instantiations of ${\cal G}$ occupied a considerable part of research in parameterized algorithms.
The most general  result in this direction
states that, for every $t,$  there is some contant $c$ such that if the graphs in ${\cal G}$  have treewidth at most $t,$ then \mnb{\sc  Vertex Deletion to ${\cal G}$}
admits an \FPT-algorithm that runs in time
$c^{k}\cdot n^{{\cal O}(1)}$~\cite{FominLMS12plan,KimLPRRSS16line}.
Reducing the constant $c$ in this running time has  attracted  research on particular problems
such as
\mnb{\sc Vertex Cover}\cite{ChenKX10impr} (with $c=1.2738$),
\mnb{\sc Feedback Vertex Set}~\cite{KociumakaP14fast} (with $c=3.619$), \mnb{\sc Apex-Pseudoforest}~\cite{BodlaenderOO18afas} (with $c=3$), \mnb{\sc Pathwidth 1 Vertex Deletion} (with $c=4.65
$)\cite{CyganPPW12anim}, or \mnb{\sc Pumpkin Vertex Deletion}~\cite{JoretPSST14hitt}.
The first step towards
a parameterized algorithm for  \mnb{\sc  Vertex Deletion to ${\cal G}$} for cases where  ${\cal G} $  has  unbounded treewidth was  done  in~\cite{MarxS07obta} and later in~\cite{Kawarabayashi09plan} for the \mnb{\sc Vertex Planarization} problem, and the best parameterized dependence for this problem is $2^{{\cal O}(k\cdot \log k )} \cdot n,$ achieved by Jansen et al.~\cite{JansenLS14anea}.
These results were later  extended by Kociumaka and Marcin Pilipczuk~\cite{KociumakaP19dele}, who proved  that
if ${\cal G}_{g}$ is the class of graphs of Euler genus at most $g,$ then
\mnb{\sc Vertex Deletion to ${\cal G}_{g}$} admits a $2^{{\cal O}_{g}(k^2\cdot \log k)}\cdot n^{{\cal O}(1)}$-time\footnote{Given a tuple $\textbf{t}=(x_{1},\ldots,x_{\ell})\in \Bbb{N}^{\ell}$ and two functions $\chi,\psi: \Bbb{N}
	\rightarrow \Bbb{N},$
we write  $\chi(n)={\cal O}_{\textbf{t}}(\psi(n))$ in order to denote that there exists a computable
function $\phi:\Bbb{N}^{\ell} \rightarrow \Bbb{N}$
such that  $\chi(n)={\cal O}( \phi(\textbf{t})\cdot \psi(n)).$} algorithm.


\paragraph{Our results.} In this paper we give an explicit
\FPT-algorithm for
\mnb{\sc Vertex Deletion to ${\cal G}$} for {\sl every} fixed
minor-closed graph class ${\cal G}.$ In particular, our main results are the following.

\begin{theorem}\label{label_preliminaries}
	If  ${\cal G}$ is a minor-closed graph class, then \mnb{\sc  Vertex Deletion to ${\cal G}$} admits an algorithm running in time $2^{{\sf poly}(k)} \cdot n^3,$ for some polynomial {\sf poly} whose degree depends  on ${\cal G}.$
\end{theorem}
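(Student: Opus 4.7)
The plan is to apply the classical win/win paradigm: either $G$ has treewidth polynomially bounded in $k$, in which case we solve the problem directly by dynamic programming, or $G$ contains a large flat wall whose interior yields an \emph{irrelevant vertex} that can be safely deleted without affecting membership in ${\cal A}_{k}({\cal G})$. Let $h$ denote the maximum size of a minor-obstruction of ${\cal G}$ and let $f_{1}(k)$ denote the upper bound on the size of the minor-obstructions of ${\cal A}_{k}({\cal G})$ established in the first paper of this series.

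First I would compute a constant-factor approximation of a tree decomposition of $G$ in time $2^{\bigO{\tw(G)}}\cdot n^{\bigO{1}}$. If $\tw(G) \leq p(k)$ for a suitable polynomial $p$ depending on $h$, then \mnb{\sc Vertex Deletion to ${\cal G}$} is solved by dynamic programming on the decomposition: testing membership in ${\cal G}$ reduces to checking absence of finitely many fixed minors, so a representation-based DP tracking partial minor-models together with the remaining deletion budget runs in time $2^{{\sf poly}(k)}\cdot n$.

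If $\tw(G) > p(k)$, then by the polynomial-bound version of the Grid Minor Theorem $G$ contains a wall of size polynomial in $k$, which can be extracted in time $n^{{\cal O}(1)}$. I would then invoke the algorithmic Flat Wall Theorem: either a $K_{t}$-minor is found for $t$ large enough that a short direct obstruction argument returns \no, or we obtain a flat subwall $W$ of size polynomial in $k$. The heart of the argument is an \emph{irrelevant-vertex lemma}: provided $W$ is sufficiently large compared to $f_{1}(k)$, there is a vertex $v$ at the center of $W$ such that $G\in{\cal A}_{k}({\cal G})$ iff $G\setminus v\in{\cal A}_{k}({\cal G})$. The intuition is that the many disjoint parallel paths of a large flat wall allow one to reroute every minor-model of an obstruction of ${\cal A}_{k}({\cal G})$ so as to avoid~$v$. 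Locating such a $v$, deleting it, and iterating yields the algorithm.

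Each iteration removes at least one vertex, so there are at most $n$ of them; each costs at most $n^{{\cal O}(1)}$ time for finding the wall, certifying its flatness, and locating $v$ (with care, $\bigO{n^{2}}$ per step), which gives the claimed $2^{{\sf poly}(k)}\cdot n^{3}$ total. The main technical obstacle is an \emph{explicit quantitative} irrelevant-vertex lemma in which the required wall size is only polynomial in $k$ — without such polynomial control the parametric dependence would collapse into a tower of exponentials, as in earlier uses of this technique. A secondary hurdle is the careful bookkeeping needed to avoid recomputing the tree decomposition and flat wall from scratch at every iteration, which is necessary to achieve $n^{3}$ rather than $n^{4}$ or worse.
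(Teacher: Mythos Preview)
Your proposal has a genuine structural gap: you are missing both \emph{iterative compression} and the \emph{branching step}, which are precisely what allow the paper to keep the wall size polynomial in $k$.

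Concretely, when you apply the Flat Wall Theorem to $G$ itself, you must take $t$ growing with $k$ (say $t = s_{\cal F}+k$, so that a $K_t$-minor certifies a \no-instance). The resulting flat wall then comes with an apex set $A$ of size polynomial in $k$, and the irrelevant-vertex lemma (\autoref{label_perspicacity}) needs the wall to be $\funref{label_preoccupations}(|A|,\ell_{\cal F})$-homogeneous with respect to $2^{A}$. Extracting such a homogeneous subwall costs a factor exponential in $|A|$ in the wall height, and the height bound $\funref{label_objectivement}(|A|,\ell_{\cal F},3,k)$ itself involves $f_{\sf ul}(16|A|+12\ell_{\cal F})$; with $|A|={\sf poly}(k)$ neither of these is polynomial in $k$. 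Your alternative route---bounding the wall by $f_{1}(k)$, the obstruction size of ${\cal A}_{k}({\cal G})$---is a tower of height four, as you yourself note; so you have identified the obstacle but not removed it.

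The paper's fix is the following two-step mechanism. First, iterative compression gives a set $S$ of size $k{+}1$ with $G\setminus S\in{\bf exc}({\cal F})$, so the Flat Wall Theorem can be run on $G\setminus S$ with $t=s_{\cal F}$ \emph{constant}; the apex set $A$ then has size $\funref{label_schematization}(s_{\cal F})={\cal O}_{s_{\cal F}}(1)$. Second, one computes the set $A^{\star}\subseteq S\cup A$ of vertices with many neighbours among the $p$-internal bags of a canonical partition of the wall. If $|A^{\star}|\geq a_{\cal F}$, \autoref{label_disviluppato} guarantees every solution meets $A^{\star}$, and one \emph{branches} over its ${\cal O}_{s_{\cal F}}(k)$ elements. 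Otherwise $|A^{\star}|<a_{\cal F}$, and \autoref{label_stereotypical} finds a subwall whose influence is touched by fewer than $a_{\cal F}$ vertices of $S\cup A$; only now is the irrelevant-vertex lemma invoked, with an apex set of size bounded by a constant depending solely on ${\cal F}$, which is what makes the required wall height polynomial in $k$. Your outline has no analogue of this branching case, and without it the irrelevant-vertex step cannot be made to run in $2^{{\sf poly}(k)}$ time.
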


We say that a graph $H$ is an {\em apex} graph if it is a 1-apex of the class of planar graphs.

\begin{theorem}\label{label_entrepreneur}
	If  ${\cal G}$ is a minor-closed graph class excluding some  apex graph, then \mnb{\sc  Vertex Deletion to ${\cal G}$} admits an algorithm running in time $2^{{\sf poly}(k)} \cdot n^2,$ for some polynomial {\sf poly} whose degree depends  on ${\cal G}.$
\end{theorem}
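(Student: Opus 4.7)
The plan is to follow the classical \emph{irrelevant vertex} paradigm in a win/win fashion, exploiting the strong structural simplifications available when $\mathcal{G}$ excludes an apex graph. Namely: if the input $G$ has treewidth bounded in terms of $k$, then one solves the problem by dynamic programming on a tree decomposition; otherwise one finds a large wall, detects a vertex inside it whose deletion does not affect membership in $\mathcal{A}_k(\mathcal{G})$, and iterates. The outer loop deletes at most $O(n)$ irrelevant vertices, so to reach $2^{\mathsf{poly}(k)}\cdot n^2$ it will be essential that each iteration runs in linear time $2^{\mathsf{poly}(k)}\cdot n$, and that the dynamic programming at the bottom also runs in time $2^{\mathsf{poly}(k)}\cdot n$.

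For the dynamic programming step, by the main result of the first paper in the series~\cite{SauST21kapiI}, the obstruction set $\mathcal{F}_k$ of $\mathcal{A}_k(\mathcal{G})$ is finite and its elements have size bounded by an explicit function of $k$. Hence membership in $\mathcal{A}_k(\mathcal{G})$ is expressible in a way that admits an \FPT-algorithm on bounded-treewidth graphs whose parametric dependence is $2^{\mathsf{poly}(k)}$ (via standard automata/monadic-second-order tools applied to the finitely many minor-obstructions, or via direct dynamic programming that records the $\mathcal{G}$-status of the torso minus at most $k$ deleted vertices per bag). Combined with a constant-factor approximation of treewidth computed in linear time, this yields a $2^{\mathsf{poly}(k)}\cdot n$-time algorithm whenever $\tw(G)=O(\mathsf{poly}(k))$.

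For the high-treewidth regime, we invoke the Flat Wall Theorem. The crucial point is that since $\mathcal{G}$ excludes some apex graph $H$, any graph $G\in \mathcal{A}_k(\mathcal{G})$ excludes the apex graph obtained from $H$ by joining it to $k$ universal vertices; therefore the flat wall produced by the theorem is truly flat, with no apex vertices attached to it once we remove a bounded-size apex set. Inside a large enough flat wall (of height $\mathsf{poly}(k)$), a central bounded-radius subgraph can be shown to be \emph{irrelevant}: any optimal deletion set of size $\le k$ can be chosen to avoid it, because its interior admits an essentially unique linkage pattern to the outside, so deleting its central vertex cannot spoil membership in $\mathcal{A}_k(\mathcal{G})$. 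In the apex-excluding regime this irrelevant vertex can be located in time $2^{\mathsf{poly}(k)}\cdot n$ (as opposed to $2^{\mathsf{poly}(k)}\cdot n^2$ in the general case), since we avoid the costly sub-routines that deal with apices and topological linkage in general minor-closed classes; this is the reason the bound improves from $n^3$ (Theorem~\ref{label_preliminaries}) to $n^2$.

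Putting the pieces together, the algorithm repeatedly either concludes via DP (when treewidth drops) or deletes one irrelevant vertex and continues, using $2^{\mathsf{poly}(k)}\cdot n$ time per iteration and at most $n$ iterations, for a total of $2^{\mathsf{poly}(k)}\cdot n^2$. The main obstacle I anticipate is the linear-time implementation of both the Flat Wall Theorem and the irrelevant vertex certification in the apex-minor-free case: while conceptually the proof is the classical one of Robertson–Seymour as adapted by Kawarabayashi and others, turning it into a genuinely linear-time routine (with explicit $2^{\mathsf{poly}(k)}$ dependence rather than a tower) requires careful engineering — in particular replacing disjoint-paths subroutines by faster variants available under apex-exclusion, and propagating the ``partial solution'' structure across iterations so that the tree decomposition need not be recomputed from scratch each time.
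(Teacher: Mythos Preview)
Your outline has the right win/win skeleton but misses two load-bearing pieces of the actual argument, and one of your claims is incorrect.

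\textbf{First gap: branching is still needed.} You assume that whenever the treewidth is large one can \emph{always} extract an irrelevant vertex. The paper does not do this, and for good reason: even when $\mathcal{G}$ excludes an apex graph, the Flat Wall Theorem applied to $G$ still produces an apex set $A$, and vertices of $A$ (or of an unknown solution $S$) may be highly connected to the compass of every candidate flat subwall, obstructing the irrelevant-vertex argument. The paper handles this with a third step (its Step~3): using flow techniques it checks, for each vertex $y$ outside the wall, whether $y$ has many internally vertex-disjoint paths into the $p$-internal bags of a canonical partition; if so, \autoref{label_disviluppato} guarantees $y$ lies in \emph{every} solution, and one recurses on $(G\setminus y,k-1)$. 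This step costs $\Theta(n^{2})$ per application but is invoked at most $k$ times, which is why the total is $2^{\mathsf{poly}(k)}\cdot n^{2}$ rather than $2^{\mathsf{poly}(k)}\cdot n$. Your proposal has no mechanism for this case and no argument that it cannot occur.

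\textbf{Second gap: the role of apex-exclusion.} Your sentence ``the flat wall produced by the theorem is truly flat, with no apex vertices attached to it'' is not what apex-exclusion buys. The flat wall still comes with an apex set $A$ of size $\funref{label_schematization}(s_{\mathcal{F}})$. What the apex-minor-free hypothesis gives is \autoref{label_persuadieran}: any single apex vertex can be adjacent to only boundedly many deep bags of the canonical partition (otherwise one builds the excluded apex graph as a minor). This, combined with \autoref{label_stereotypical}, lets one locate a subwall whose influence is disjoint from $A$, and \emph{that} is where the irrelevant vertex lives. More importantly, this replaces iterative compression: in the general algorithm (\autoref{label_preliminaries}) one needs a solution $S$ in hand to guarantee $G\setminus S$ is $K_{s_{\mathcal F}}$-minor-free before invoking the Flat Wall Theorem, costing an extra factor of $n$; here the combination of \autoref{label_improvements} (which internally handles the possibility that $G$ is a no-instance) and the flow-based branching of Step~3 avoids that.

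\textbf{A smaller issue.} Your first option for the bounded-treewidth DP --- checking the obstructions $\mathcal{F}_{k}$ of $\mathcal{A}_{k}(\mathcal{G})$ --- would not give $2^{\mathsf{poly}(k)}$ dependence, since by~\cite{SauST21kapiI} those obstructions have size a tower of height four in $k$. Your second option (direct DP tracking deletions per bag) is essentially \autoref{label_confiscating} and is what the paper uses.
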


In \autoref{label_ressentiment} we explain how the algorithms of \autoref{label_preliminaries} and  \autoref{label_entrepreneur}  can be modified in order to apply to a series of variants of \mnb{\sc  Vertex Deletion to ${\cal G}$}.
\smallskip

\paragraph{Our techniques.} We provide here just a very succinct enumeration of the techniques that we use in order to achieve \autoref{label_preliminaries} and \autoref{label_entrepreneur}; a more detailed description with the corresponding definitions is provided, along with the algorithms, in the next sections.

Our starting point to prove \autoref{label_preliminaries} is to use the standard iterative compression technique of Reed et al.~\cite{ReedSV04find} (\autoref{label_countenances}).
This allows us to assume that we have at hand a slightly too large set $S \subseteq V(G)$ such that $G \setminus S \in {\cal G}.$
We run the algorithm of \autoref{label_dishonorable} from~\cite{SauST21amor} that (since $G\setminus S\in {\cal G}$) either concludes that the treewidth of $G$ is polynomially bounded by $k,$ or finds a large {\sl flat} wall $W$ together with an apex set $A.$
In the first case, we use the main algorithmic result of Baste et al.~\cite{BasteST20acom} (\autoref{label_confiscating}) to solve the problem parameterized by treewidth, achieving the claimed running time.
\autoref{label_dishonorable} is an improved version of the original ``Flat Wall Theorem'' of Robertson and Seymour~\cite{RobertsonS95XIII}, whose proof is based on the recent results of Kawarabayashi et al.~\cite{KawarabayashiTW18anew}, which we state using the framework that we recently introduced in~\cite{SauST21amor}. This framework is presented in \autoref{label_forestalling} and provides the formal definitions of a series of
combinatorial concepts such as paintings and renditions (\autoref{label_klammerausdrucks}), flatness pairs and  tilts (\autoref{label_souhaiteroient}), as well
as a notion of wall homogeneity (\autoref{label_definitionen}) alternative to the one given in~\cite{RobertsonS95XIII}. All these concepts
are extensively  used in our proofs, as well as in those in the first article of this series~\cite{SauST21kapiI}.

%
%

Once we have the large  flat wall $W$ and the apex set $A$,
we see how many vertices of $S\cup A$ have enough neighbors in the ``interior'' of $W$.
Two possible scenarios may occur. If the ``interior'' of $W$ has enough  neighbors in the set  $S \cup A,$ we apply a  combinatorial result of~\cite{SauST21kapiI} (\autoref{label_disviluppato}), based on the notion of {\sl canonical partition} of a wall, that guarantees that every possible solution should intersect $S\cup A,$ and we can branch on it.

On the other hand, if the interior of $W$ has few neighbors in $S\cup A$, we find in $W$ a packing of an appropriate number of pairwise disjoint large enough subwalls (\autoref{label_stereotypical}) and we find
a subwall whose interior  has few (a function not depending on $k$) neighbors in $S \cup A$.
We then argue that we can define from it a {\sl flat} wall  in which we can apply the irrelevant vertex technique of Robertson and Seymour~\cite{RobertsonS95XIII} (\autoref{label_mitinbegriffen}).
%
%
We stress that this flat subwall  is not  precisely a subwall
of $W$ but a tiny ``tilt'' of a subwall of $W,$ a concept introduced in~\cite{SauST21amor} that is necessary for our proofs. In order to apply  the irrelevant vertex technique, the main combinatorial tool is \autoref{label_perspicacity}, which as been proved in~\cite{SauST21kapiI} and that is an enhancement of a result of Baste et al.~\cite{BasteST20acom}, as we discuss in \autoref{label_desencajaron}.


\smallskip

In order to achieve the improved running time claimed in \autoref{label_entrepreneur}, we do {\sl not} use iterative compression. Instead, we directly invoke \autoref{label_improvements}, which is a variation of \cite[Lemma 11]{SauST21amor} and whose proof uses~\cite{PerkovicR00anim,AlthausZ19opti,KawarabayashiK20line,AdlerDFST11fast}, that either reports that we have a \no-instance, or concludes that the treewidth of $G$ is polynomially bounded by $k,$ or finds a large wall $W$ in $G.$
If the treewidth is small, we proceed as above. If a large wall is found, we apply \autoref{label_dishonorable} and we now distinguish two cases. If a large flat wall is found, we find an irrelevant vertex using again \autoref{label_appressavamo}. Otherwise, inspired by an idea of Marx and Schlotter~\cite{MarxS07obta}, we exploit the fact that ${\cal G}$ excludes an apex graph, and we use flow techniques to either find a vertex that should belong to the solution, or to conclude that we are dealing with a \no-instance.

\paragraph{Organization of the paper.}
In \autoref{label_reproduisent} we give some basic  definitions and preliminary results. In \autoref{label_forestalling} we introduce  flat walls along with all the concepts and results
around the Flat Wall Theorem, using the framework of \cite{SauST21amor}.
In \autoref{label_compensating} we present several algorithmic and combinatorial results that will be used in the  algorithms, when finding an irrelevant vertex or when applying the branching step. In \autoref{label_pizzighettona} and \autoref{label_demonstrieren} we present the main algorithms claimed in \autoref{label_preliminaries} and \autoref{label_entrepreneur}, respectively.
In \autoref{label_ressentiment} we explain how to modify our algorithms so to deal with a series of variants of the \mnb{\sc Vertex Deletion to ${\cal G}$} problem.
We conclude in \autoref{label_grundgesetzen} with some directions for further research.

\section{Definitions and preliminary results}\label{label_reproduisent}
Our first step is to restate the problem in a more convenient way. We next  give some basic definitions and preliminary results.

\subsection{Restating  the problem}\label{label_interrupting}

Let ${\cal F}$ be a finite non-empty collection of non-empty graphs. We use  ${\cal F}\leq_{\sf m}  G$ to denote that some graph in ${\cal F}$ is a minor of $G.$ 

Given a graph class ${\cal G},$ its {\em minor obstruction set} is defined as the set of all minor-minimal graphs that are not in ${\cal G}$, and is denoted by ${\bf obs}({\cal G}).$
Given a finite non-empty collection of non-empty graphs ${\cal F},$ we denote by ${\bf exc}({\cal F})$ as the set containing every graph ${\cal G}$ that excludes all graphs in ${\cal F}$ as minors.

Let ${\cal G}$ be a minor-closed graph class and ${\cal F}$ be its obstruction set.
Clearly, \mnb{\sc  Vertex Deletion to ${\cal G}$} is the same problem as
asking, given a graph $G$ and some $k\in \Bbb{N},$ for a vertex set $S$ of at most $k$ vertices
such that $G\setminus S\in {\bf exc}({\cal F}).$
Following the terminology of~\cite{BasteST20hittI,BasteST20hittII,BasteST20hittIII,BasteST20acom,FominLPSZ20hitt,FominLMS12plan,KimLPRRSS16line,KimST18data},
we call this problem \mnb{\sc  ${\cal F}$-M-Deletion}.

\paragraph{Some conventions.}
In what follows we always denote by ${\cal F}$
the set ${\bf obs}({\cal G})$ of the instantiation of \mnb{\sc  Vertex Deletion to ${\cal G}$} that we consider.
Notice that, given a graph $G$ and an integer $k,$ $(G,k)$ is a \yes-instance of
\mnb{\sc  ${\cal F}$-M-Deletion} if and only if $G\in {\cal A}_k ({\bf exc}({\cal F})).$
Given a graph $G,$ we define its {\em apex number} to be the smallest integer $a$ for which
$G$ is an $a$-apex of the class of planar graphs.
Also, we define the {\em detail} of $G,$ denoted by ${\sf detail}(G),$
to be the maximum among $|E(G)|$ and $|V(G)|.$
We define three constants depending on ${\cal F}$
that will be used throughout the paper whenever we
consider such a collection ${\cal F}.$ We define  $a_{\cal F}$ as  the minimum
apex number of a graph in ${\cal F},$  we set
$s_{\cal F}=\max\{|V(H)|\mid H\in {\cal F}\},$  and we set $\ell_{\cal F}=\max\{ {\sf detail}(H)|\mid H\in{\cal F}\}.$
Unless stated otherwise, we denote by $n$ and $m$
the number of vertices and edges, respectively, of the graph under consideration.
We can always assume that $G$ has ${\cal O}_{s_{\cal F}}(k\cdot n)$ edges, otherwise we can directly conclude that $(G,k)$ is a \no-instance (for this, use the fact that  graphs excluding some graph as a minor are sparse \cite{Kostochka82lowe,Thomason01thee}).

\subsection{Preliminaries}
\label{label_incurability}

\paragraph{Sets and integers.}\label{label_eruditissime}
We denote by $\Bbb{N}$ the set of non-negative integers.
Given two integers $p$ and $q,$ the set $[p,q]$ contains every integer $r$ such that $p\leq r\leq q.$
For an integer $p\geq 1,$ we set $[p]=[1,p]$ and $\Bbb{N}_{\geq p}=\Bbb{N}\setminus [0,p-1].$
Given a non-negative integer $x,$
we denote by ${\sf odd}(x)$ the minimum odd number that is not smaller than $x.$
For a set $S,$ we denote by $2^{S}$ the set of all subsets of $S$ and, given an integer $r\in[|S|],$
we denote by $\binom{S}{r}$ the set of all subsets of $S$ of size $r$ and by $\binom{S}{\leq r}$ the set of all subsets of $S$ of size at most $r.$
If ${\cal S}$ is a collection of objects where the operation $\cup$ is defined,
then we denote $\cupall {\cal S}=\bigcup_{X\in {\cal S}}X.$

\paragraph{Basic concepts on graphs.}\label{label_shipwrightson}
All graphs considered in this paper are undirected, finite, and without loops or multiple edges.
We use standard graph-theoretic notation and we refer the reader to \cite{Diestel10grap} for any undefined terminology.
Let $G$ be a graph. We say that a pair $(L,R)\in 2^{V(G)}\times 2^{V(G)}$ is a {\em separation} of $G$
if $L\cup R=V(G)$ and there is no edge in $G$ between $L\setminus R$ and $R\setminus L.$
Given a vertex $v\in V(G),$ we denote by $N_{G}(v)$ the set of vertices of $G$ that are adjacent to $v$ in $G.$
A vertex $v \in V(G)$ is \emph{isolated} if $N_G(v) = \emptyset.$
For $S \subseteq V(G),$ we set $G[S]=(S,E\cap{S \choose 2} )$
and use the shortcut $G \setminus S$ to denote $G[V(G) \setminus S].$
Given a vertex $v\in V(G)$ of degree two with neighbors $u$ and $w,$ we define the {\em dissolution} of $v$
to be the operation of deleting $v$ and, if $u$ and $w$ are not adjacent, adding the edge $\{u,w\}.$
Given two graphs $H,G,$ we say that $H$ is a {\em dissolution} of $G$
if $H$ can be obtained from $G$ after dissolving vertices of $G.$
Given an edge $e=\{u,v\}\in E(G),$ we define the {\em subdivision} of $e$
to be the operation of deleting $e,$ adding a new vertex $w$ and making it adjacent to $u$ and $v.$
Given two graphs $H$ and $G,$ we say that $H$ is a {\em subdivision} of $G$
if $H$ can be obtained from $G$ after subdividing edges of $G.$

\paragraph{Treewidth.}
A \emph{tree decomposition} of a graph~$G$
is a pair~$(T,\chi)$ where $T$ is a tree and $\chi: V(T)\to 2^{V(G)}$
such that
\begin{itemize}
	\item $\bigcup_{t \in V(T)} \chi(t) = V(G),$
	\item for every edge~$e$ of~$G$ there is a $t\in V(T)$ such that
	      $\chi(t)$
	      contains both endpoints of~$e,$ and
	\item for every~$v \in V(G),$ the subgraph of~${T}$
	      induced by $\{t \in V(T)\mid {v \in \chi(t)}\}$ is connected.
\end{itemize}
The {\em width} of $(T,\chi)$ is equal to $\max\big\{\left|\chi(t)\right|-1 \bigmid t\in V(T)\big\}$
and the {\em treewidth} of $G$, denoted by $\tw(G)$,  is the minimum width over all tree decompositions of $G.$

To compute a tree decomposition of a graph of bounded treewidth, in the proof of \autoref{label_improvements} in \autoref{label_demonstrieren}
we will use the single-exponential $5$-approximation algorithm for treewidth
of Bodlaender et al. \cite[Theorem VI]{BodlaenderDDFLP16ackn}.

\begin{proposition}\label{label_panathinaiko}
	There is an algorithm that, given an graph $G$ and an integer $k,$
	outputs either a report that $\tw(G)>k,$ or a tree decomposition of $G$ of width at most $5k+4.$
	Moreover, this algorithm runs in $2^{{\cal O}(k)} \cdot n$-time.
\end{proposition}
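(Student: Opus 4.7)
The plan is to adopt the recursive divide-and-conquer framework based on balanced separators that has been the workhorse for approximation algorithms for treewidth since Robertson and Seymour. The guiding combinatorial fact is that any graph $H$ with $\tw(H)\le k$ admits, for every vertex set $W\subseteq V(H)$, a \emph{balanced separator} of $W$ in $H$, i.e., a set $X\subseteq V(H)$ of size at most $k+1$ such that every connected component of $H\setminus X$ contains at most $|W|/2$ vertices of $W$. I would design the main recursion to operate on pairs $(H,W)$, where $H$ is an induced subgraph of $G$ and $W\subseteq V(H)$ is a set of ``interface'' vertices with $|W|\le 4k+4$, destined to appear in a common bag of the output decomposition. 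At each recursive call, the algorithm invokes a subroutine that, given $(H,W)$ and $k$, either reports that $\tw(H)>k$ or produces a set $X\subseteq V(H)$ with $|X|\le k+1$ that is a $\tfrac{3}{4}$-balanced separator of $W$ in $H$. Given such an $X$, we create a new bag $W\cup X$ of size at most $5k+4+1$, and for each connected component $C$ of $H\setminus(W\cup X)$ we recurse on $(H[C\cup N_H(C)],N_H(C))$.

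The subroutine itself is the hard part and is where single-exponential dependence must be earned. The approach is to guess, via a carefully structured branching scheme, the interaction of a hypothetical balanced separator with $W$ and with a small set of ``critical'' vertices; after the guess, the remaining problem reduces to a collection of minimum $s$\nobreakdash-$t$ cut computations that can be solved in linear time by standard max-flow techniques. To avoid a blow-up of $n^{O(k)}$, one exploits the observation, due to Bodlaender et al., that after each branching step the instance strictly simplifies (e.g.\ via a vertex deletion or a torso-style contraction) so that the branching tree has size $2^{O(k)}$, and the work at each leaf is linear in the current instance size. If any branch successfully certifies the existence of a separator, we return it; if all branches fail to produce a separator of size at most $k+1$, this witnesses $\tw(H)>k$ and we report accordingly.

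For the global running time, the recursion tree has depth $O(\log n)$ due to the $\tfrac34$-balanced property of the separators produced, but naively summing the $2^{O(k)}\cdot|V(H)|$ work per node gives an additional $\log n$ factor. To achieve $2^{O(k)}\cdot n$, I would use the standard amortization trick: charge the work of each recursive call to the vertices that belong to $V(H)\setminus W$ and argue that each vertex of $G$ is charged only a constant number of times across the whole recursion, since once a vertex enters the ``interface'' part $W$ of some subproblem it is removed from the recursive instance at the next level.

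The main obstacle I expect is making the separator subroutine single-exponential in $k$ and linear in $n$ simultaneously: producing balanced separators in the classical way via LP-rounding or polynomial-time approximation yields only $n^{O(1)}$ dependence on $n$ or worse than single-exponential dependence on $k$. The right way around it is the branch-and-reduce scheme described above, which is precisely the key technical contribution of \cite{BodlaenderDDFLP16ackn}; once that subroutine is in hand, the outer recursion and the $5k+4$ width bound follow from bookkeeping the sizes $|W|\le 4k+4$ and $|X|\le k+1$ across recursive calls.
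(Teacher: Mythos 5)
The paper does not prove this proposition at all: it is imported as a black box, namely Theorem VI of Bodlaender, Drange, Dregi, Fomin, Lokshtanov and Pilipczuk \cite{BodlaenderDDFLP16ackn}, as the sentence immediately preceding the statement says explicitly. You are therefore attempting to re-derive an imported result rather than to reconstruct an argument that appears in the paper.

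Assessed on its own merits, your outer recursion scheme, with the invariant $|W|\le 4k+4$, a separator $X$ with $|X|\le k+1$, and a new bag $W\cup X$ of size at most $5k+5$, is indeed the right skeleton for correctness and for the width bound. The running-time analysis, however, has a genuine gap. First, depth $O(\log n)$ does not follow from balancing the separator only with respect to $W$; the components of $H\setminus(W\cup X)$ can be nearly as large as $H$, so you would already need to balance with respect to a weight function that also charges $V(H)$. Second, and more seriously, the amortization you propose does not yield $2^{O(k)}\cdot n$: a vertex stays in $V(H)\setminus W$ for as many recursion levels as it takes until it is absorbed into some separator, and this number of levels is not bounded by a constant, so the charging gives at best $2^{O(k)}\cdot n$ \emph{per level}, i.e.\ $2^{O(k)}\cdot n\log n$ overall. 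That $n\log n$ bound is exactly what Bodlaender et al.\ prove for the plain recursive scheme (their $3$-approximation). Their linear-time $5$-approximation is obtained by a qualitatively different mechanism: shrink the graph by a constant factor (contract a matching or delete a structured vertex set, in the spirit of Bodlaender's linear-time exact algorithm and of the Perković--Reed reduction that this paper separately invokes in \autoref{label_preguntarnos}), recurse on the reduced graph, then lift and repair the decomposition. Eliminating the $\log n$ factor is the central technical contribution of \cite{BodlaenderDDFLP16ackn} and cannot be waved away as ``standard amortization.''
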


\paragraph{Contractions and minors.}\label{label_undifferentiated}
The \emph{contraction} of an edge $e = \{u,v\}$ of a simple graph $G$ results in a simple graph $G'$
obtained from $G \setminus \{u,v\}$ by adding a new vertex $uv$ adjacent to all the vertices
in the set $N_G(u) \cup N_G(v)\setminus \{u,v\}.$
A graph $G'$ is a \emph{minor} of a graph $G,$ denoted by $G'\leq_{\sf m}G,$
if $G'$ can be obtained from $G$ by a sequence of vertex removals, edge removals, and edge contractions.
If only edge contractions are allowed, we say that $G'$ is a \emph{contraction} of $G.$
Given two graphs $H$ and $G,$ if $H$ is a minor of $G$ then for every vertex $v\in V(H)$ there is
a set of vertices in $G$ that are the endpoints of the edges of $G$ contracted towards creating $v.$
We call this set {\em model} of $v$ in $G.$
Recall that, given a finite collection of graphs ${\cal F}$ and a graph $G,$
we use notation ${\cal F}\leq_{\sf m} G$ to denote that some graph in ${\cal F}$ is a minor of $G.$

We present here the main result of Baste et al.~\cite{BasteST20acom}, which we will use  in order to solve {\sc ${\cal F}$-M-Deletion} on instances of  treewidth bounded by an appropriate function of $k$.
%

\begin{proposition}\label{label_confiscating}
	Let ${\cal F}$ be a finite collection of graphs.
	There exists an algorithm that, given a triple $(G,\tw,k)$ where $G$ is a graph of treewidth at most $\tw$
	and $k$ is a non-negative integer, it outputs, if it exists, a vertex set $S$ of $G$ of size at most $k$
	such that $G\setminus S\in {\bf exc}({\cal F}).$
	Moreover, this algorithm runs in  $2^{{\cal O}_{s_{\cal F}}(\tw \cdot \log  \tw)}\cdot n$-time.
\end{proposition}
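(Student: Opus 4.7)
The plan is to run a dynamic programming algorithm over a tree decomposition. I would first invoke \autoref{label_panathinaiko} on $G$ with parameter $\tw$: since the input guarantees $\tw(G)\leq \tw$, this returns in time $2^{\mathcal{O}(\tw)}\cdot n$ a tree decomposition $(T,\chi)$ of width at most $5\tw+4$, which I would then convert in linear time into a nice tree decomposition with $\mathcal{O}(\tw \cdot n)$ nodes of the standard five types (leaf, introduce-vertex, introduce-edge, forget, join). The DP processes the nodes bottom-up; for each node $t$ I denote by $G_t$ the subgraph of $G$ induced by the union of bags in the subtree rooted at $t$.

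At every node $t$ the table is indexed by \emph{characteristics} of the form $(S_t, k_t, (\sigma_H)_{H\in {\cal F}})$, where $S_t \subseteq \chi(t)$ is the intersection of a hypothetical solution $S$ with the current bag, $k_t \in [0,k]$ is the number of solution vertices already placed in $V(G_t)$, and for every obstruction $H\in{\cal F}$ the signature $\sigma_H$ encodes a partial minor embedding of a subgraph of $H$ into $G_t \setminus S_t$ as seen from the bag. Concretely, $\sigma_H$ consists of a subgraph $H'\subseteq H$ (the part of $H$ already realized as a minor inside $G_t\setminus S_t$), an assignment $\lambda:\chi(t)\setminus S_t \to V(H')\cup \{\bot\}$ declaring, for every undeleted bag vertex, the branch set it belongs to (or $\bot$ if it lies outside the partial model), together with the subset of edges of $E(H)\setminus E(H')$ that remain to be realized outside $G_t$. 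An entry is set to \textsf{true} iff a witness deletion set $S'\subseteq V(G_t)$ with $S'\cap\chi(t)=S_t$ and $|S'|=k_t$ exists such that for every $H$ there is a partial minor model in $G_t\setminus S'$ whose shadow on the bag matches $\sigma_H$.

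Transitions follow the customary pattern. At introduce-vertex nodes, the new vertex is either added to $S_t$ (incrementing $k_t$), left outside every model (\ie assigned $\lambda(v)=\bot$), or opened as a fresh branch set for some $H$. At introduce-edge nodes, each edge whose endpoints lie in branch sets already visible in the bag may trigger the realization of a new edge in some $H'$. At forget nodes, every vertex dropped from the bag must either have been deleted, placed in an already-completed branch set, or certify the closure of its branch set; also, all pending edges incident to a branch set losing its last bag-representative must be resolved. At join nodes one merges two subtables by keeping only pairs of signatures that agree on $S_t$ and on $\lambda$, and whose realized-edge sets combine consistently into a subgraph of $H$. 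An entry at the root with $k_{\rm root}\leq k$ and empty $\sigma_H$ for all $H$ certifies that the minor obstructions are avoided, and a standard traceback recovers the set $S$.

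The main obstacle, and the source of the $\tw\log\tw$ term in the exponent, is the bookkeeping at join nodes together with a tight count of characteristics per bag. For a fixed $H\in{\cal F}$, the choice of $H'$ and pending-edge set contributes a constant $2^{\mathcal{O}(s_{\cal F}^2)}$ factor, while the labeling $\lambda$ contributes at most $(s_{\cal F}+2)^{\tw+1}$ possibilities, which is $2^{\mathcal{O}_{s_{\cal F}}(\tw)}$; the additional $\log \tw$ factor appears because at a join node one must account for all ways of pairing branch-set identities coming from the two children (equivalently, one must consider, for each branch set, an ordering/matching structure of size up to $\tw+1$), which inflates the state count by a $\tw^{\mathcal{O}(\tw)} = 2^{\mathcal{O}(\tw\log \tw)}$ factor. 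Combining over at most $|{\cal F}|$ obstructions, at most $2^{\tw+1}$ choices of $S_t$, and $k+1$ values of $k_t$, each of the $\mathcal{O}(\tw \cdot n)$ nodes has $2^{\mathcal{O}_{s_{\cal F}}(\tw\log \tw)}$ characteristics, and each join can be processed in time polynomial in this count; the total running time is $2^{\mathcal{O}_{s_{\cal F}}(\tw\log \tw)}\cdot n$, as claimed.
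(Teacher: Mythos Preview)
The paper does not prove this proposition at all; it is quoted verbatim as ``the main result of Baste et al.~\cite{BasteST20acom}'' and used as a black box. So there is nothing to compare your attempt against in this manuscript.

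That said, your sketch contains a genuine gap in the semantics of the dynamic programming table. You index entries by a tuple $(S_t,k_t,(\sigma_H)_{H\in{\cal F}})$ and declare an entry \textsf{true} when there exists a deletion set $S'$ together with \emph{one} partial minor model of each $H$ whose shadow equals $\sigma_H$. But the goal is to certify that \emph{no} $H\in{\cal F}$ is a minor of $G\setminus S$, which is a universal statement over all possible models. Knowing that one particular partial model with a prescribed shadow exists tells you nothing about whether some \emph{other} partial model, with a different shadow, could be completed once the rest of the graph is attached. To decide minor-freeness you must record enough information about $(G_t\setminus S',\chi(t)\setminus S_t)$ as a boundaried graph to determine, for every possible extension, whether an $H$-minor arises; a single witnessing $\sigma_H$ is insufficient. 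The standard remedy is to track the full \emph{folio} (or an equivalent representative) of the boundaried subgraph, i.e., the set of all boundaried topological/minor models of bounded detail that embed in it; this is exactly the machinery developed in~\cite{BasteST20hittI,BasteST20acom} and echoed in \autoref{label_surprendront} of the present paper.

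Relatedly, your account of where the $\tw\log\tw$ comes from is not right. In your table the labelling $\lambda$ is already part of the state, so at a join you merely intersect compatible states; there is no extra ``pairing of branch-set identities'' that costs $\tw^{{\cal O}(\tw)}$. The genuine source of the $\tw\log\tw$ exponent in~\cite{BasteST20acom,BasteST20hittI} is the size of the representative/folio space for $t$-boundaried graphs with $t={\cal O}(\tw)$, together with the connectivity bookkeeping inside that space, and establishing that bound is the nontrivial content of those papers.
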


\section{Flat walls}\label{label_forestalling}
In this section we deal with flat walls, using the framework of \cite{SauST21amor}.
More precisely, in \autoref{label_zapateadores}, we introduce walls and several notions concerning them.
In \autoref{label_klammerausdrucks}, we provide the definitions of a rendition and a painting.
Using the above notions, in \autoref{label_souhaiteroient}, we define flat walls and provide some results about them, including the Flat Wall Theorem (namely, the version proved by Kawarabayashi et al. \cite{KawarabayashiTW18anew}) and its algorithmic version restated in the ``more accurate'' framework of  \cite{SauST21amor}.
Finally, in \autoref{label_definitionen}, we present the notion of homogeneity and an algorithm from \cite{SauST21amor} that allows us to detect a homogenous flat wall ``inside'' a given flat wall of ``big enough'' height.
We note that the definitions of this section can also be found in \cite{SauST21amor,SauST21kapiI}.

\subsection{Walls and subwalls}\label{label_zapateadores}
We start with some basic definitions about walls.

\paragraph{Walls.}
Let  $k,r\in\Bbb{N}.$ The
\emph{$(k\times r)$-grid} is the
graph whose vertex set is $[k]\times[r]$ and two vertices $(i,j)$ and $(i',j')$ are adjacent if and only if $|i-i'|+|j-j'|=1.$
An  \emph{elementary $r$-wall}, for some odd integer $r\geq 3,$ is the graph obtained from a
$(2 r\times r)$-grid
with vertices $(x,y)
	\in[2r]\times[r],$
after the removal of the
``vertical'' edges $\{(x,y),(x,y+1)\}$ for odd $x+y,$ and then the removal of
all vertices of degree one.
Notice that, as $r\geq 3,$  an elementary $r$-wall is a planar graph
that has a unique (up to topological isomorphism) embedding in the plane $\Bbb{R}^{2}$
such that all its finite faces are incident to exactly six
edges.
The {\em perimeter} of an elementary $r$-wall is the cycle bounding its infinite face,
while the cycles bounding its finite faces are called {\em bricks}.
Also, the vertices
in the perimeter of an elementary $r$-wall that have degree two are called {\em pegs},
while the vertices $(1,1), (2,r), (2r-1,1), (2r,r)$ are called {\em corners} (notice that the corners are also pegs).

\begin{figure}[h]
	\begin{center}
		\includegraphics[width=11cm]{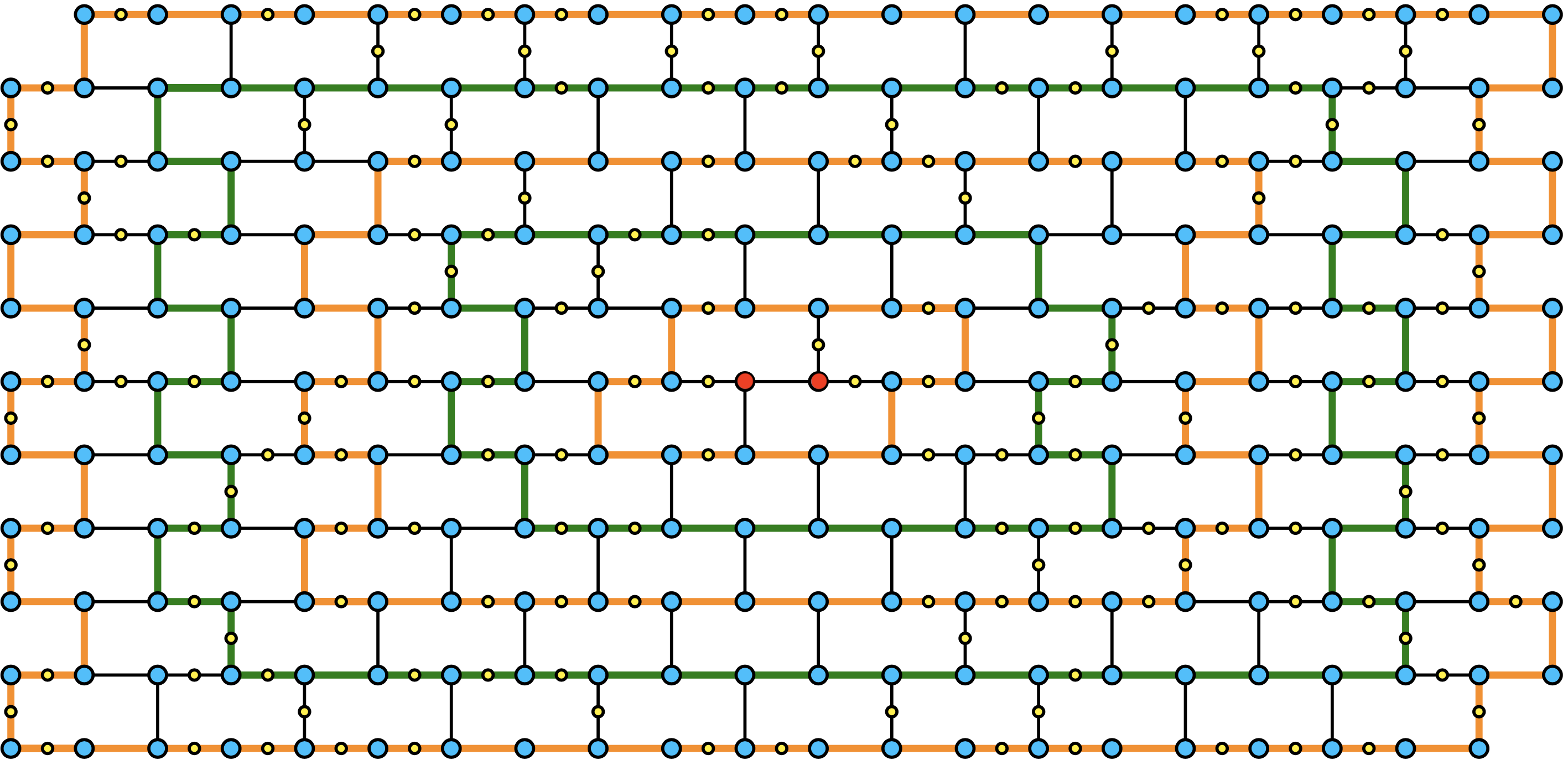}
	\end{center}
	\caption{An $11$-wall and its five layers, depicted in alternating orange and green. The central vertices of the wall are depicted in red.}
	\label{label_estatuderato}
\end{figure}

An {\em $r$-wall} is any graph $W$ obtained from an elementary $r$-wall $\bar{W}$
after subdividing edges (see \autoref{label_estatuderato}). A graph $W$ is a {\em wall} if it is an $r$-wall for some odd $r\geq 3$
and we refer to $r$ as the {\em height} of $W.$ Given a graph $G,$
a {\em wall of} $G$ is a subgraph of $G$ that is a wall.
We insist that, for every $r$-wall, the number $r$ is always odd.

We call the vertices of degree three of a wall $W$ {\em 3-branch vertices}.
A cycle of $W$ is a {\em brick} (resp. the {\em perimeter}) of $W$
if its 3-branch vertices are the vertices of a brick (resp. the perimeter) of $\bar{W}.$
We denote by ${\cal C}(W)$ the set of all cycles of $W.$
We  use $D(W)$ in order to denote the perimeter of the  wall $W.$
A brick of $W$ is {\em internal} if it is disjoint from $D(W).$

\paragraph{Subwalls.}
Given an elementary $r$-wall $\bar{W},$ some odd $i\in \{1,3,\ldots,2r-1\},$ and $i'=(i+1)/2,$
the {\em $i'$-th  vertical path} of $\bar{W}$  is the one whose
vertices, in order of appearance, are $(i,1),(i,2),(i+1,2),(i+1,3),
	(i,3),(i,4),(i+1,4),(i+1,5),
	(i,5),\ldots,(i,r-2),(i,r-1),(i+1,r-1),(i+1,r).$
Also, given some $j\in[2,r-1]$ the {\em $j$-th horizontal path} of $\bar{W}$
is the one whose
vertices, in order of appearance, are $(1,j),(2,j),\ldots,(2r,j).$

A \emph{vertical} (resp. \emph{horizontal}) path of an $r$-wall $W$ is one
that is a subdivision of a  vertical (resp. horizontal) path of $\bar{W}.$
Notice that the perimeter of an $r$-wall $W$
is uniquely defined regardless of the choice of the elementary $r$-wall $\bar{W}.$
A {\em subwall} of $W$ is any subgraph $W'$ of  $W$
that is an $r'$-wall, with $r' \leq r,$ and such the vertical (resp. horizontal) paths of $W'$ are subpaths of the
	{vertical} (resp. {horizontal}) paths of $W.$

\paragraph{Layers.}
The {\em layers} of an $r$-wall $W$  are recursively defined as follows.
The first layer of $W$ is its perimeter.
For $i=2,\ldots,(r-1)/2,$ the $i$-th layer of $W$ is the $(i-1)$-th layer of the subwall $W'$
obtained from $W$ after removing from $W$ its perimeter and
removing recursively all occurring vertices of degree one.
We refer to the $(r-1)/2$-th layer as the {\em inner layer} of $W.$
The {\em central vertices} of an $r$-wall are its two branch vertices  that do not belong to any of its layers.
See \autoref{label_estatuderato} for an illustration of the notions defined above.

\paragraph{Central walls.}
Given an $r$-wall $W$ and an odd $q\in\Bbb{N}_{\geq 3}$ where $q\leq r,$
we define the {\em central $q$-subwall} of $W,$ denoted by $W^{(q)},$
to be the $q$-wall obtained from $W$ after removing
its first $(r-q)/2$ layers and all occurring vertices of degree one.

\paragraph{Tilts.}
The {\em interior} of a wall $W$ is the graph obtained
from $W$ if we remove from it all edges of $D(W)$ and all vertices
of $D(W)$ that have degree two in $W.$
Given two walls $W$ and $\tilde{W}$ of a graph $G,$
we say that $\tilde{W}$ is a {\em tilt} of $W$ if $\tilde{W}$ and $W$ have identical interiors.
\medskip

The following result is derived from \cite{AdlerDFST11fast}. We will use it in the improved algorithm of \autoref{label_entrepreneur} in \autoref{label_demonstrieren}, in order to find a wall in a graph of bounded treewidth, given a tree decomposition of it.
\begin{proposition}\label{label_sleepwalkers}
	There is an algorithm that, given a graph $G$ on $m$ edges,
	a graph $H$ on $h$ edges  without isolated vertices,
	and a tree decomposition of $G$ of width at most $k,$ it outputs,
	if it exists, a minor of $G$ isomorphic to $H.$
	Moreover, this algorithm runs in $2^{{\cal O}(k\log k)}\cdot h^{{\cal O}(k)}\cdot 2^{{\cal O}(h)}\cdot m$-time.
\end{proposition}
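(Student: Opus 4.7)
The plan is to prove the claim via standard dynamic programming on a nice tree decomposition, following the approach of \cite{AdlerDFST11fast}. First I would convert the given tree decomposition of $G$ of width at most $k$ into a nice tree decomposition of the same width with $O(m)$ nodes in linear time; since $H$ has no isolated vertices, $|V(H)|\leq 2h$, so we may freely treat all bounds as functions of $h$ and $k$ only.

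At each node $t$ with bag $X_t$ and induced subgraph $G_t$ on the vertices introduced at or below $t$, the DP maintains the set of all \emph{partial models} of $H$ into $G_t$ that are still ``active'' at $X_t$. A state consists of: (i) a labeling $\phi\colon X_t\to V(H)\cup\{\bot\}$ recording which branch vertex of $H$ each bag vertex currently models (or $\bot$ if unused); (ii) a partition of $\phi^{-1}(V(H))$ whose classes are the connected components (inside $G_t$, restricted to vertices with the same label) of the partial branch sets that still meet $X_t$; (iii) for each $v\in V(H)$, a flag indicating whether the branch set of $v$ has already been completed strictly below $t$ (``closed''); and (iv) a subset $F\subseteq E(H)$ of the $H$-edges already witnessed by an adjacency of two labeled components in $G_t$. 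Encoding the partition as a labeling of bag vertices by component identifiers from $[k+1]$ (up to renaming), the number of states at any node is bounded by
\[
(h+1)^{k+1}\cdot (k+1)^{k+1}\cdot 2^{2h}\cdot 2^{h}
\;=\;h^{O(k)}\cdot 2^{O(k\log k)}\cdot 2^{O(h)}.
\]
Transitions at introduce-vertex, introduce-edge, and forget nodes are routine (guessing the new label, optionally adding an $H$-edge to $F$ when the endpoints of the new edge carry adjacent $H$-labels, or removing the forgotten vertex and certifying that any component it supported alone gets closed for that label at most once). At a join node we combine two states that agree on $\phi$ by taking the transitive closure of the union of their bag-partitions, unioning the edge-sets $F_1\cup F_2$, and OR-ing the closed-flags, rejecting any pair that would close the same branch vertex twice.

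The main obstacle, and the essential point taken from \cite{AdlerDFST11fast}, is to keep the state count single-exponential in $k\log k$ at the join step: a naive enumeration of partitions of the bag would incur a Bell-number blowup, so one must represent partitions compactly and process joins in time polynomial in the number of states through a careful matching of compatible labelings. Summing the per-node work over the $O(m)$ nodes of the decomposition yields the claimed $2^{O(k\log k)}\cdot h^{O(k)}\cdot 2^{O(h)}\cdot m$ bound. Finally, to return the minor itself rather than a bit, I would store, alongside each accepted state, a backpointer to the pair (or singleton) of child states that generated it; at the root, a witnessing state in which all $H$-vertices are closed and $F=E(H)$ allows us to reconstruct the branch sets and the chosen edges by a top-down traversal of these backpointers, in time linear in the total description size of the model and hence within the stated budget.
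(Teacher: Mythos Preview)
The paper does not prove this proposition; it simply states it as ``derived from \cite{AdlerDFST11fast}'' and uses it as a black box. Your sketch is precisely an outline of the dynamic programming approach of that reference, so in that sense you are doing exactly what the paper relies on.

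Your outline is essentially correct, with two small imprecisions worth noting. First, a nice tree decomposition (with introduce-edge nodes) has $O(kn+m)$ nodes rather than $O(m)$; you need to observe that isolated vertices of $G$ can be discarded (since $H$ has none), giving $n\le 2m$ and hence $O(km)$ nodes, and then absorb the extra $k$ factor into $2^{O(k\log k)}$. Second, when you close a branch vertex $v$ at a forget node, you must also verify that every edge of $H$ incident to $v$ already lies in $F$; you only mention the ``close at most once'' check. With these fixes the sketch matches the cited result and the claimed running time.
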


\subsection{Paintings and renditions}
\label{label_klammerausdrucks}
In this subsection we present the notions of renditions and paintings, originating in the work of Robertson and Seymour \cite{RobertsonS95XIII}.
The definitions presented here were introduced by Kawarabayashi et al. \cite{KawarabayashiTW18anew} (see also \cite{BasteST20acom,SauST21amor}).
\paragraph{Paintings.}
A {\em closed} (resp. {\em open}) {\em disk} is a set homeomorphic to the set
$\{(x,y)\in \Bbb{R}^{2}\mid x^{2}+y^{2}\leq 1\}$ (resp. $\{(x,y)\in \Bbb{R}^{2}\mid x^{2}+y^{2}< 1\}$).
Let $\Delta$ be a closed disk.
Given a subset $X$ of $\Delta,$ we
denote its closure by $\bar{X}$ and its boundary by $\bd(X).$
A {\em {$\Delta$}-painting} is a pair $\Gamma=(U,N)$
where
\begin{itemize}
	\item  $N$ is a finite set of points of $\Delta,$
	\item $N \subseteq U \subseteq \Delta,$ and
	\item $U \setminus  N$ has finitely many arcwise-connected  components, called {\em cells}, where, for every cell $c,$
	      \begin{itemize}
		      \item[$\circ$] the closure $\bar{c}$ of $c$
		            is a closed disk
		            and
		      \item[$\circ$]  $|\tilde{c}|\leq 3,$ where $\tilde{c}:=\bd(c)\cap N.$
	      \end{itemize}
\end{itemize}
We use the  notation $U(\Gamma) := U,$
$N(\Gamma) := N$  and denote the set of cells of $\Gamma$
by $C(\Gamma).$
For convenience, we may assume that each cell  of $\Gamma$ is an open disk of $\Delta.$

Notice that, given a $\Delta$-painting $\Gamma,$
the pair $(N(\Gamma),\{\tilde{c}\mid c\in C(\Gamma)\})$  is a hypergraph whose hyperedges have cardinality at most three and  $\Gamma$ can be seen as a plane embedding of this hypergraph in $\Delta.$

\paragraph{Renditions.}
Let $G$ be a graph and let $\Omega$ be a cyclic permutation of a subset of $V(G)$ that we denote by $V(\Omega).$ By an {\em $\Omega$-rendition} of $G$ we mean a triple $(\Gamma, \sigma, \pi),$ where
\begin{itemize}
	\item[(a)] $\Gamma$ is a $\Delta$-painting for some closed disk $\Delta,$
	\item[(b)] $\pi: N(\Gamma)\to V(G)$ is an injection, and
	\item[(c)] $\sigma$ assigns to each cell $c \in  C(\Gamma)$ a subgraph $\sigma(c)$ of $G,$ such that
	      \begin{enumerate}
		      \item[(1)] $G=\bigcup_{c\in C(\Gamma)}\sigma(c),$
		      \item[(2)]  for distinct $c, c' \in  C(\Gamma),$  $\sigma(c)$ and $\sigma(c')$  are edge-disjoint,
		      \item[(3)] for every cell $c \in  C(\Gamma),$ $\pi(\tilde{c}) \subseteq V (\sigma(c)),$
		      \item[(4)]  for every cell $c \in  C(\Gamma),$
		            $V(\sigma(c)) \cap \bigcup_{c' \in  C(\Gamma) \setminus  \{c\}}V(\sigma(c')) \subseteq \pi(\tilde{c}),$ and
		      \item[(5)]  $\pi(N(\Gamma)\cap \bd(\Delta))=V(\Omega),$ such that the points
		            in $N(\Gamma)\cap \bd(\Delta)$ appear in $\bd(\Delta)$ in the same ordering
		            as their images, via $\pi,$ in $\Omega.$
	      \end{enumerate}
\end{itemize}

\subsection{Flatness pairs}
\label{label_souhaiteroient}
In this subsection we define the notion of a flat wall.
The definitions given in this subsection are originating in \cite{SauST21amor}.
We refer the reader to that paper for a more detailed exposition of these definitions and the reasons for which we introduced them.
We use the more accurate framework of \cite{SauST21amor} concerning flat walls, instead of that of \cite{KawarabayashiTW18anew},
in order to be able to use tools that are developed in \cite{SauST21amor} and \cite{SauST21kapiI} and will be useful in future applications as well.
\paragraph{Flat walls.}
Let $G$ be a graph and let $W$ be an $r$-wall  of $G,$ for some odd integer $r\geq 3.$
We say that a pair $(P,C)\subseteq D(W)\times D(W)$ is a {\em choice
		of pegs and corners for $W$} if $W$ is the subdivision of an  elementary $r$-wall $\bar{W}$
where $P$ and
$C$ are the pegs and the corners of $\bar{W},$ respectively (clearly, $C\subseteq P$).
To get more intuition, notice that a wall $W$ can occur in several ways from the elementary wall $\bar{W},$
depending on the way the vertices in the perimeter of $\bar{W}$ are subdivided.
Each of them gives a different selection $(P,C)$ of pegs and corners of $W.$

We say that $W$ is a {\em flat $r$-wall}
of $G$ if there is a separation $(X,Y)$ of $G$ and a choice  $(P,C)$
of pegs and corners for $W$ such that:
\begin{itemize}
	\item $V(W)\subseteq Y,$
	\item  $P\subseteq X\cap Y\subseteq V(D(W)),$ and
	\item  if $\Omega$ is the cyclic ordering of the vertices $X\cap Y$ as they appear in $D(W),$
	      then there exists an $\Omega$-rendition $(\Gamma,\sigma,\pi)$ of  $G[Y].$
\end{itemize}
We say that $W$ is a {\em flat wall}
of $G$ if it is a flat $r$-wall for some odd integer $r \geq 3.$

\paragraph{Flatness pairs.}
Given the above, we  say that  the choice of the 7-tuple $\frak{R}=(X,Y,P,C,\Gamma,\sigma,\pi)$
{\em certifies that $W$ is a flat wall of $G$}.
We call the pair $(W,\frak{R})$ a {\em flatness pair} of $G$ and define
the {\em height} of the pair $(W,\frak{R})$ to be the height of $W.$
We use the term {\em cell of} $\frak{R}$ in order to refer to the cells of $\Gamma.$

We call the graph $G[Y]$ the {\em $\frak{R}$-compass} of $W$ in $G,$
denoted by ${\sf compass}_{\frak{R}}(W).$
We can assume that ${\sf compass}_{\frR} (W)$ is connected, updating $\frR$ by removing from $Y$ the vertices of all the connected components of ${\sf compass}_\frR (W)$
except of the one that contains $W$ and including them in $X$ ($\Gamma, \sigma, \pi$ can also be easily modified according to the removal of the aforementioned vertices from $Y$).
We define the  {\em flaps} of the wall $W$ in $\frak{R}$ as
${\sf flaps}_{\frak{R}}(W):=\{\sigma(c)\mid c\in C(\Gamma)\}.$
Given a flap $F\in {\sf flaps}_{\frak{R}}(W),$ we define its {\em base}
as $\partial F:=V(F)\cap \pi(N(\Gamma)).$
A  cell $c$ of ${\frR}$ is {\em untidy} if  $\pi(\tilde{c})$ contains a vertex
$x$ of ${W}$ such that two of the edges of ${W}$ that are incident to $x$ are edges of $\sigma(c).$ Notice that if $c$ is untidy then  $|\tilde{c}|=3.$
A cell $c$ of $\frR$ is {\em tidy} if it is not untidy.

\paragraph{Cell classification.}
Given a cycle $C$ of ${\sf compass}_{\frak{R}}(W),$ we say that
$C$ is {\em $\frak{R}$-normal} if it is {\sl not} a subgraph of a flap $F\in {\sf flaps}_{\frak{R}}(W).$
Given an $\frak{R}$-normal cycle $C$ of ${\sf compass}_{\frak{R}}(W),$
we call a cell $c$ of $\frak{R}$ {\em $C$-perimetric} if
$\sigma(c)$ contains some edge of $C.$
Notice that if $c$ is $C$-perimetric, then $\pi(\tilde{c})$ contains two points $p,q\in N(\Gamma)$
such that  $\pi(p)$ and $\pi(q)$ are vertices of $C$ where one,
say $P_{c}^{\rm in},$ of the two $(\pi(p),\pi(q))$-subpaths of $C$ is a subgraph of $\sigma(c)$ and the other,
denoted by $P_{c}^{\rm out},$  $(\pi(p),\pi(q))$-subpath contains at most one internal vertex of $\sigma(c),$
which should be the (unique) vertex $z$ in $\partial\sigma(c)\setminus\{\pi(p),\pi(q)\}.$
We pick a $(p,q)$-arc $A_{c}$ in $\hat{c}:={c}\cup\tilde{c}$ such that  $\pi^{-1}(z)\in A_{c}$ if and only if $P_{c}^{\rm in}$ contains
the vertex $z$ as an internal vertex.

We consider the circle  $K_{C}=\cupall\{A_{c}\mid \mbox{$c$ is a $C$-perimetric cell of $\frak{R}$}\}$
and we denote by $\Delta_{C}$ the closed disk bounded by $K_{C}$  that is contained in  $\Delta.$
A cell $c$ of $\frak{R}$ is called {\em $C$-internal} if $c\subseteq \Delta_{C}$
and is called {\em $C$-external} if $\Delta_{C}\cap c=\emptyset.$
Notice that  the cells of $\frak{R}$ are partitioned into  $C$-internal,  $C$-perimetric, and  $C$-external cells.

Let $c$ be a tidy $C$-perimetric cell of $\frak{R}$ where $|\tilde{c}|=3.$ Notice that $c\setminus A_{c}$ has two arcwise-connected components and one of them is an open disk $D_{c}$ that is a subset of $\Delta_{C}.$
If the closure $\overline{D}_{c}$  of $D_{c}$ contains only two points of $\tilde{c}$ then we call the cell $c$ {\em $C$-marginal}.

\paragraph{Influence.}
For every $\frak{R}$-normal cycle $C$ of ${\sf compass}_{\frak{R}}(W)$ we define the set
$${\sf influence}_{\frak{R}}(C)=\{\sigma(c)\mid \mbox{$c$ is a cell of $\frak{R}$ that is not $C$-external}\}.$$
%
%

A wall $W'$  of ${\sf compass}_{\frak{R}}(W)$  is \emph{$\frak{R}$-normal} if $D(W')$ is  $\frak{R}$-normal.
Notice that every wall of $W$ (and hence every subwall of $W$) is an $\frak{R}$-normal wall of ${\sf compass}_{\frak{R}}(W).$ We denote by ${\cal S}_{\frak{R}}(W)$ the set of all $\frak{R}$-normal walls of ${\sf compass}_{\frak{R}}(W).$ Given a wall $W'\in {\cal S}_{\frak{R}}(W)$ and a cell $c$ of $\frak{R}$,
we say that $c$ is {\em $W'$-perimetric/internal/external/marginal} if $c$ is  $D(W')$-perimetric/internal/external/marginal, respectively.
We also use $K_{W'},$ $\Delta_{W'},$ ${\sf influence}_{\frak{R}}(W')$ as shortcuts
for $K_{D(W')},$ $\Delta_{D(W')},$ ${\sf influence}_{\frak{R}}(D(W')),$ respectively.

\paragraph{Regular flatness pairs.}
We call a  flatness pair $(W,\frak{R})$ of a graph $G$ {\em regular}
if none of its cells is $W$-external, $W$-marginal, or untidy.

\paragraph{Tilts of flatness pairs.}
Let $(W,\frak{R})$ and $(\tilde{W}',\tilde{\frak{R}}')$  be two flatness pairs of a graph $G$
and let $W'\in {\cal S}_{\frak{R}}(W).$
We assume that ${\frak{R}}=(X,Y,P,C,\Gamma,\sigma,\pi)$
and $\tilde{\frak{R}}'=(X',Y',P',C',\Gamma',\sigma',\pi').$
We say that   $(\tilde{W}',\tilde{\frak{R}}')$   is a {\em $W'$-tilt}
of $(W,\frak{R})$ if
\begin{itemize}
	\item $\tilde{\frak{R}}'$ does not have $\tilde{W}'$-external cells,
	\item  $\tilde{W}'$ is a tilt of $W',$
	\item  the set of $\tilde{W}'$-internal  cells of  $\tilde{\frak{R}}'$ is the same as the set of $W'$-internal
	      cells of ${\frak{R}}$ and their images via $\sigma'$ and ${\sigma}$ are also the same,
	\item ${\sf compass}_{\tilde{\frak{R}}'}(\tilde{W}')$ is a subgraph of $\cupall{\sf influence}_{{\frak{R}}}(W'),$ and
	\item if $c$ is a cell in $C(\Gamma') \setminus C(\Gamma),$ then $|\tilde{c}| \leq 2.$
\end{itemize}

The next observation follows from the third item above and the fact that the cells corresponding to flaps
containing a central vertex of $W'$ are all internal (recall that the height of a wall is always at least three).

\begin{observation}\label{label_stepdaughter}
	Let $(W,\frR)$ be a flatness pair of a graph $G$ and $W'\in{\cal S}_{\frR}(W).$
	For every $W'$-tilt $(\tilde{W}',\tilde{\frR}')$ of $(W,\frR),$ the central vertices of $W'$ belong to the vertex set of ${\sf compass}_{\tilde{\frR}'}(\tilde{W}').$
\end{observation}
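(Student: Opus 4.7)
The plan is to unpack the definition of a $W'$-tilt and reduce the statement to the following geometric claim: for every central vertex $v$ of $W'$, each cell $c$ of $\frR$ with $v\in V(\sigma(c))$ is necessarily $W'$-\emph{internal}. Once this claim is granted, the conclusion is immediate from the third bullet of the definition of a $W'$-tilt: the set of $\tilde{W}'$-internal cells of $\tilde{\frR}'$ coincides with the set of $W'$-internal cells of $\frR$, and the images under $\sigma'$ and $\sigma$ agree on them. Consequently, the cell $c$ is also a cell of $\tilde{\frR}'$ with $\sigma'(c)=\sigma(c)$, so $v\in V(\sigma'(c))\subseteq V(\bigcup_{c''\in C(\Gamma')}\sigma'(c''))=V({\sf compass}_{\tilde{\frR}'}(\tilde{W}'))$, which is exactly what has to be shown.

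To prove the claim, pick a cell $c$ of $\frR$ with $v\in V(\sigma(c))$; such a cell exists by property (1) in the definition of an $\Omega$-rendition, applied to the rendition certifying $(W,\frR)$. I would first argue that $c$ is not $W'$-external: being central, $v$ is a $3$-branch vertex of $W'$ that lies strictly in the interior of the topological disk $\Delta_{W'}$ bounded by the closed curve $K_{W'}$ associated with $D(W')$, so any cell containing $v$ in its flap (other than along its boundary) must lie inside $\Delta_{W'}$, ruling out external cells. Next, I would rule out that $c$ is $W'$-perimetric: in that case $\sigma(c)$ would contain an edge of $D(W')$, forcing $v$ to be either on $D(W')$ or connected to it through $\sigma(c)$ without leaving $c$, contradicting the fact that $v$ is a central vertex of $W'$ which, because the height of $W'$ is at least three, is separated from $D(W')$ by at least one full layer of bricks of $W'$. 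Hence $c$ is neither external nor perimetric, i.e., it is $W'$-internal.

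The main obstacle in the argument is formalising the geometric separation between $v$ and $D(W')$ in the rendition, i.e., making rigorous the statement that a central vertex of a wall of height at least three cannot share a flap with a vertex of $D(W')$ except through $\pi(\tilde{c})$, and that such a flap must lie in $\Delta_{W'}$. This is a routine consequence of the definitions of $C$-perimetric, $C$-internal, and $C$-external cells and the planarity encoded by the painting $\Gamma$, together with condition (4) of the rendition (the only shared vertices between flaps are images of the boundary points of cells). Everything else is bookkeeping with the definition of a $W'$-tilt.
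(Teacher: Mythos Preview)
Your proposal is correct and follows essentially the same approach as the paper: the paper justifies the observation in a single sentence immediately preceding it, namely that it ``follows from the third item above and the fact that the cells corresponding to flaps containing a central vertex of $W'$ are all internal (recall that the height of a wall is always at least three).'' You have unpacked exactly this argument, supplying more detail on why such cells cannot be $W'$-external or $W'$-perimetric.
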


Also, given a regular flatness pair $(W,\frR)$ of a graph $G$ and a $W'\in {\cal S}_{\frak{R}}(W),$
for every $W'$-tilt $(\tilde{W}', \tilde{\frR}')$ of $(W,\frak{R}),$ by definition none of its cells is $\tilde{W}'$-external, $\tilde{W}'$-marginal, or untidy -- thus, $(\tilde{W}', \tilde{\frR}')$ is regular.
Therefore, regularity of a flatness pair is a property that its tilts ``inherit''.

\begin{observation}\label{label_ressemblances}
	If $(W,\frak{R})$ is a regular flatness pair, then for every $W'\in {\cal S}_{\frak{R}}(W),$ every $W'$-tilt of $(W,\frak{R})$ is also regular.
\end{observation}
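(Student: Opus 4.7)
The plan is to separately verify the three defining clauses of regularity for the tilt $(\tilde{W}',\tilde{\frR}')$: that no cell of $\tilde{\frR}'$ is $\tilde{W}'$-external, $\tilde{W}'$-marginal, or untidy. The first clause is immediate, since the first bullet in the definition of a $W'$-tilt explicitly states that $\tilde{\frR}'$ has no $\tilde{W}'$-external cells; no use of the regularity of $(W,\frR)$ is needed here.

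For the other two clauses, I would split the cells of $\tilde{\frR}'$ into ``new'' cells, i.e., those in $C(\Gamma')\setminus C(\Gamma)$, and ``surviving'' cells, i.e., those in $C(\Gamma)\cap C(\Gamma')$. For a new cell $c$, the fifth bullet of the tilt definition yields $|\tilde{c}|\le 2$. Since being $C$-marginal is only defined for cells with $|\tilde{c}|=3$, and since the paragraph defining untidy cells explicitly notes that untidiness forces $|\tilde{c}|=3$, no new cell can be $\tilde{W}'$-marginal or untidy.

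For surviving cells, the key input is the third bullet of the tilt definition: the $\tilde{W}'$-internal cells of $\tilde{\frR}'$ are exactly the $W'$-internal cells of $\frR$, and $\sigma$ and $\sigma'$ agree on them. Such a cell is $\tilde{W}'$-internal and therefore not $\tilde{W}'$-perimetric, so in particular it is not $\tilde{W}'$-marginal. Moreover, $W'$ is an $\frR$-normal subwall of $W$, so $\tilde{W}'$ and $W'$ share an interior that lies inside the graph $W$; hence edges of $\tilde{W}'$ inside $\sigma'(c)=\sigma(c)$ are also edges of $W$ inside $\sigma(c)$, and an untidiness witness for $\tilde{W}'$ on $c$ would directly give an untidiness witness for $W$ on $c$ in $\frR$, contradicting regularity. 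Any surviving cell that is not $\tilde{W}'$-internal must be $\tilde{W}'$-perimetric, again because $\tilde{W}'$-external cells are forbidden by the first bullet, and a completely analogous inheritance argument, using that $\frR$ has no $W$-marginal cells, settles these as well.

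The main obstacle I anticipate is the last inheritance step, where one has to identify, for a surviving $\tilde{W}'$-perimetric cell $c$, the objects $A_c$, $D_c$, and the relevant pair of points in $\tilde{c}$ used to define marginality, and show that they coincide with, or are faithful deformations of, their analogues built in $\frR$ with respect to $W$. This is the only place where the topological bookkeeping of the tilt is used in a non-trivial way; the rest of the argument is a direct cross-reference between the five bullets defining a $W'$-tilt and the three bullets defining regularity.
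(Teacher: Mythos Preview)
Your proposal follows the same route as the paper, which in fact offers no formal proof at all: the paper merely says ``by definition none of its cells is $\tilde{W}'$-external, $\tilde{W}'$-marginal, or untidy'' and records the statement as an observation. Your case split into ``new'' cells (handled by the fifth tilt bullet, since $|\tilde{c}|\le 2$ rules out both marginality and untidiness) and ``surviving'' cells (handled via the third tilt bullet) is exactly the unpacking the paper leaves implicit.

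One small correction worth noting: you write that ``$W'$ is an $\frR$-normal subwall of $W$'', but by definition $W'\in{\cal S}_{\frR}(W)$ is only an $\frR$-normal wall of ${\sf compass}_{\frR}(W)$, not necessarily a subwall of $W$ itself. Your untidiness-inheritance step (``edges of $\tilde{W}'$ inside $\sigma'(c)=\sigma(c)$ are also edges of $W$'') relies on this conflation. In the paper's actual uses of the observation (e.g., when passing to $W_i$-tilts in the algorithm, or in the output of \autoref{label_highlighting}) the wall $W'$ \emph{is} a subwall of $W$, so your argument goes through there; but to match the generality of the statement you would need to argue tidiness of surviving internal cells without assuming $W'\subseteq W$. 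The paper, for its part, does not spell this out either.
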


We next present one of the main results of \cite{SauST21amor}.

\begin{proposition}
	\label{label_protectively}
	There exists an algorithm that, given a graph $G,$ a flatness pair $({W},{\frak{R}})$ of $G,$ and a wall $W'\in {\cal S}_{\frak{R}}(W),$ outputs  a  $W'$-tilt of $({W},{\frak{R}})$ in  ${\cal O}(n+m)$-time.
\end{proposition}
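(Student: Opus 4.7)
The plan is to construct the $W'$-tilt $(\tilde{W}',\tilde{\frR}')$ by restricting the painting $\Gamma$ of $\frR$ to the closed disk $\Delta_{W'}$ bounded by the circle $K_{W'}$, and then slightly modifying the perimeter of $W'$ to turn it into a legitimate tilt $\tilde{W}'$ whose rendition has no external cells. The key geometric picture is that $K_{W'}$ cuts $\Delta$ into two parts, and everything outside $\Delta_{W'}$ is discarded, while everything inside is kept verbatim; only the $W'$-perimetric cells need surgery, and each contributes a single new sub-cell with at most two boundary nodes, which is exactly what the definition of a $W'$-tilt demands.

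First, I would identify the $W'$-perimetric cells of $\frR$ by a single scan: for each edge of $D(W')$, locate (via $\sigma$) the flap containing it, and thereby enumerate the perimetric cells together with their arcs $A_c$, whose concatenation yields $K_{W'}$. This classifies every cell of $\frR$ as $W'$-internal, $W'$-perimetric, or $W'$-external in $\bigO{n+m}$ time. I would then form $\Gamma'$ by keeping every $W'$-internal cell and every node of $N(\Gamma)$ in $\Delta_{W'}$ intact, discarding every $W'$-external cell, and for each $W'$-perimetric cell $c$ replacing $c$ by the open disk $c\cap\Delta_{W'}$. Because the arc $A_c$ enters and leaves $\bar{c}$ at two points of $\tilde{c}$, the new cell $c\cap\Delta_{W'}$ inherits at most two nodes from $N(\Gamma)$ plus possibly the two endpoints of $A_c$ on $K_{W'}$; choosing $N(\Gamma')$ as $(N(\Gamma)\cap\Delta_{W'})$ augmented only by the endpoints of the $A_c$ that are not already nodes ensures every cell in $C(\Gamma')\setminus C(\Gamma)$ has boundary of size at most two, as required.

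Next, I would define $\tilde{W}'$ as the tilt of $W'$ obtained by replacing, for each perimetric cell $c$, the subpath $P_c^{\rm in}$ (lying in $\sigma(c)$) by a suitable $(\pi(p),\pi(q))$-path inside $\sigma(c)\cap\cupall{\sf influence}_{\frR}(W')$ that is consistent with the chosen arc $A_c$; the interior of $\tilde{W}'$ is untouched, so $\tilde{W}'$ is indeed a tilt of $W'$. I would then set $Y':=V(\cupall{\sf influence}_{\frR}(W'))$, $X':=(V(G)\setminus Y')\cup (V(\tilde{W}')\cap\bd(\Delta_{W'}))$, let $P'$ and $C'$ be the pegs and corners of $\tilde{W}'$ induced by the new perimeter, and define $\sigma'$ on new cells by restricting $\sigma(c)$ to the portion associated with $c\cap\Delta_{W'}$. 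With $\pi'$ the restriction/extension of $\pi$ to $N(\Gamma')$ using the natural map on the new nodes, the 7-tuple $\tilde{\frR}'=(X',Y',P',C',\Gamma',\sigma',\pi')$ certifies flatness of $\tilde{W}'$, and by construction has no $\tilde{W}'$-external cells, internal cells identical to the $W'$-internal cells of $\frR$, and ${\sf compass}_{\tilde{\frR}'}(\tilde{W}')\subseteq\cupall{\sf influence}_{\frR}(W')$.

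The main obstacle is verifying that the perimetric-cell surgery simultaneously yields (i) cells $c\cap\Delta_{W'}$ with at most two boundary nodes and (ii) a genuine tilt $\tilde{W}'$ whose interior equals that of $W'$; this requires a careful case split on the size of $\tilde{c}$ (whether $|\tilde{c}|=2$ or $|\tilde{c}|=3$) and on whether the middle vertex of $\partial\sigma(c)$ lies on $P_c^{\rm in}$ or $P_c^{\rm out}$, using the choice of $A_c$ made in the cell classification. Once the local correctness of this modification is established, the global guarantees of a $W'$-tilt follow directly from the definitions of $K_{W'}$ and $\Delta_{W'}$, and the $\bigO{n+m}$ running time is immediate since every step processes each cell and each vertex/edge of $G$ a constant number of times, and $|C(\Gamma)|$, $|N(\Gamma)|$, and $\sum_{c}|V(\sigma(c))|+|E(\sigma(c))|$ are all $\bigO{n+m}$.
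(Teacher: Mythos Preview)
The paper does not prove this proposition; it is quoted verbatim as ``one of the main results of \cite{SauST21amor}'' and no argument is given here. Consequently there is nothing in the present paper to compare your attempt against.

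That said, your outline is the natural construction and matches what one would expect the proof in \cite{SauST21amor} to do: keep the $W'$-internal cells untouched, discard the $W'$-external ones, and perform local surgery on each $W'$-perimetric cell along the arc $A_c$ to produce the new boundary cells and the tilted perimeter of $\tilde{W}'$. Your identification of the main obstacle (the case analysis on $|\tilde{c}|\in\{2,3\}$ and on whether the third node $z$ lies on $P_c^{\rm in}$ or $P_c^{\rm out}$) is exactly right, and the linear running time is clear once one observes that each cell is visited a constant number of times.

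One small inaccuracy worth fixing: you write that the new cell $c\cap\Delta_{W'}$ ``inherits at most two nodes from $N(\Gamma)$ plus possibly the two endpoints of $A_c$ on $K_{W'}$,'' but the endpoints $p,q$ of $A_c$ are already elements of $\tilde{c}\subseteq N(\Gamma)$, so they are not additional. The real issue is whether the third node $z$ (when $|\tilde{c}|=3$) ends up on the boundary of the new cell; the choice of $A_c$ in the paper's cell-classification paragraph is made precisely so that $z$ lies on $A_c$ exactly when it must, which is what forces $|\tilde{c}'|\leq 2$ for the new cell $c'$. You should make this dependence on the specific choice of $A_c$ explicit in the surgery step.
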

%

We present here the Flat Wall Theorem and, in particular, the version proved by Kawarabayashi et al.~\cite[Theorem 1.5]{KawarabayashiTW18anew}.
This result will be used in the proof of correctness of the algorithm of \autoref{label_entrepreneur}.

\begin{proposition}\label{label_propositional}
	There are two functions  $\newfun{label_discernimiento}:\Bbb{N}\to \Bbb{N}$  and
	$\newfun{label_schematization}:\Bbb{N}\to \Bbb{N},$ where the images of $\funref{label_discernimiento}$ are odd numbers, such that if $r$ is an odd integer in $\Bbb{N}_{\geq 3},$ $t\in\Bbb{N}_{\geq 1},$
	$G$ is a graph that does not contain $K_t$ as a minor,  and  $W$ is an $\funref{label_discernimiento}(t)\cdot r$-wall of $G,$
	then there is a set $A\subseteq V(G)$ where $|A|\leq \funref{label_schematization}(t)$
	and a flatness pair $(\tilde{W}',\tilde{\frak{R}}')$ of $G\setminus A$ of height $r.$
	Moreover $\funref{label_discernimiento}(t)={\cal O}(t^{26})$ and $\funref{label_schematization}(t)={\cal O}(t^{24}).$
\end{proposition}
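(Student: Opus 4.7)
The statement is a restatement, in the flatness-pair framework of \autoref{label_souhaiteroient}, of the polynomial Flat Wall Theorem of Kawarabayashi, Thomas, and Wollan \cite{KawarabayashiTW18anew}. My plan is to derive it by combining their theorem directly with the translation formalism of \cite{SauST21amor}, rather than re-deriving the combinatorial heart of the argument from scratch. The underlying combinatorial proof is driven by a dichotomy coming from the tangle of order $\Theta(\funref{label_discernimiento}(t)\cdot r)$ induced by the huge wall $W$.

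Concretely, I would first fix a subwall $W^{\star}$ of $W$ of height $\Theta(r)$ whose perimeter $D(W^{\star})$ lies ``well inside'' $W$, and study the structure of $G$ restricted to the side of $D(W^{\star})$ that lies in the induced tangle. The dichotomy asserts that one of two cases holds: either (a) there is a vertex set $A$ of size at most $\funref{label_schematization}(t)$ whose removal leaves a planar-like embedding of this side into a disk bounded (topologically) by $D(W^{\star})$, which is precisely the rendition data $(\Gamma,\sigma,\pi)$ demanded by the definition of a flatness pair; or (b) one can locate a \emph{cross} over $W^{\star}$, that is, two disjoint paths that are internally disjoint from $W^{\star}$ and link its perimeter in a topologically non-planar fashion.

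The main engine is to exclude case (b) globally. One iterates the dichotomy on a packing of disjoint subwalls of $W$: if no such subwall is flat modulo an apex set of size at most $\funref{label_schematization}(t)$, then one accumulates many pairwise disjoint crosses which, combined with the grid substructure of $W$, can be routed into disjoint branch sets of a $K_t$-minor, contradicting the $K_t$-minor-freeness of $G$. The polynomial bounds $\funref{label_discernimiento}(t)={\cal O}(t^{26})$ and $\funref{label_schematization}(t)={\cal O}(t^{24})$ replace the tower-type bounds of the original Robertson--Seymour argument by exploiting the polynomial unique-linkage theorem proved in \cite{KawarabayashiTW18anew}. The hardest step is precisely this accounting: controlling how many rows and columns of $W$ are ``spent'' per cross so that the resulting $K_t$-minor is built from only $\funref{label_discernimiento}(t)\cdot r$ rows/columns rather than a super-polynomial number.

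Once the theorem of \cite{KawarabayashiTW18anew} is invoked, obtaining the required output in the formalism used here becomes routine: one translates the rendition produced by their theorem into the 7-tuple $\frak{R}=(X,Y,P,C,\Gamma,\sigma,\pi)$ of \autoref{label_souhaiteroient} -- this is exactly the translation carried out in \cite{SauST21amor} -- and rounds the height of the resulting flat subwall up to the next odd integer, which justifies the assumption that the images of $\funref{label_discernimiento}$ are odd. Moreover, \autoref{label_protectively} provides an algorithmic wrapper that, if needed, delivers the sought pair $(\tilde{W}',\tilde{\frak{R}}')$ as an honest $W'$-tilt.
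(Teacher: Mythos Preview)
Your identification is correct: in the paper this proposition is not proved at all but simply quoted as \cite[Theorem~1.5]{KawarabayashiTW18anew}, restated in the flatness-pair language of \cite{SauST21amor}. Your plan to invoke that theorem directly and then translate the output rendition into the $7$-tuple $\frak{R}$ is exactly the intended reading; the additional sketch of the cross/packing dichotomy and the polynomial accounting is accurate background on how \cite{KawarabayashiTW18anew} works, but it goes beyond what the paper itself does and is not needed here.
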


We conclude this subsection with the following result from \cite{SauST21amor} that allows us to find a regular flatness pair in a minor-free graph of ``big enough'' treewidth.

\begin{proposition}\label{label_dishonorable}
	There is a function   $\newfun{label_inconsummabile}:\Bbb{N}\to \Bbb{N}$    and
	an algorithm that receives as  input  a graph $G,$ an odd integer $r\geq 3,$ and a  $t\in\Bbb{N}_{\geq 1},$ and  outputs, in time $2^{{\cal O}_{t}(r^2)}\cdot n$, one of the following:
	\begin{itemize}
		\item a report  that $K_{t}$ is a minor of $G,$
		\item a tree decomposition of $G$ of width at most $\funref{label_inconsummabile}(t)\cdot r,$ or
		\item a set $A\subseteq V(G)$ with $|A|\leq \funref{label_schematization}(t)$ and a regular flatness pair $(W,\frak{R})$ of $G\setminus A$ of height $r.$
		      (Here $\funref{label_schematization}(t)$ is the function of \autoref{label_propositional}.)
	\end{itemize}
	Moreover, $\funref{label_inconsummabile}(t)=2^{{\cal O}(t^2 \log t)}.$
\end{proposition}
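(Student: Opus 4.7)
The plan is to combine the treewidth approximation algorithm with an algorithmic version of the Flat Wall Theorem, and then post-process the resulting flat wall to enforce regularity. As a first step, I would invoke the $5$-approximation of \autoref{label_panathinaiko} with target width proportional to $\funref{label_inconsummabile}(t) \cdot r$ (for a function $\funref{label_inconsummabile}$ to be chosen at the end of the argument). If the routine succeeds, we directly return the produced tree decomposition, which is the second output and costs $2^{O_t(r)}\cdot n$ time. Otherwise, the routine certifies that $\tw(G) > \funref{label_inconsummabile}(t) \cdot r$, and we proceed to the large-treewidth regime.

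In that regime, I would use a linear-time excluded-grid/wall theorem for $K_t$-minor-free graphs: choosing $\funref{label_inconsummabile}(t)$ large enough, a graph of such treewidth must contain either $K_t$ as a minor or an $\funref{label_discernimiento}(t)\cdot r$-wall $W$ that can be extracted in $2^{O_t(r^2)}\cdot n$ time. In the first case we output the $K_t$-minor report; in the second case we feed $W$ into the algorithmic version of the Flat Wall Theorem (namely \autoref{label_propositional}), obtaining an apex set $A$ with $|A| \leq \funref{label_schematization}(t)$ together with a flatness pair $(\tilde{W}',\tilde{\frak{R}}')$ of $G\setminus A$ of height $r$. At this point the existential content of the proposition has been met, and only regularity remains to be secured.

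The main obstacle is enforcing regularity, which requires eliminating $\tilde{W}'$-external, $\tilde{W}'$-marginal, and untidy cells. To deal with untidy cells I would apply a local re-routing inside the offending flap: the two wall edges witnessing untidiness can be pushed to the flap boundary by a constant-time modification that affects neither the interior of the wall nor the rendition of the other cells. To remove external and marginal cells I would pass to a suitable central subwall $W''$ of $\tilde{W}'$ of height $r$ (starting the pipeline with a slightly inflated target so this is possible) and invoke \autoref{label_protectively} to obtain a $W''$-tilt $(\tilde{W}'',\tilde{\frak{R}}'')$. By definition of a tilt, $\tilde{\frak{R}}''$ has no $\tilde{W}''$-external cells and its compass is confined to the influence of $W''$, so cells that were previously external or marginal disappear; combined with the tidiness re-routing above, this yields a regular flatness pair, and by \autoref{label_ressemblances} regularity is preserved if any further tilting is needed.

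The running time is dominated by the $2^{O_t(r^2)}\cdot n$ wall-extraction step, since the treewidth approximation, the Flat Wall Theorem application, and the tilt computation of \autoref{label_protectively} each run in at most that time. Tracking the constants through the pipeline, the bound $\funref{label_inconsummabile}(t) = 2^{O(t^2\log t)}$ emerges by composing $\funref{label_discernimiento}(t) = O(t^{26})$ and $\funref{label_schematization}(t) = O(t^{24})$ from \autoref{label_propositional} with the $2^{O(t^2\log t)}$ treewidth threshold that forces a large wall in a $K_t$-minor-free graph. The delicate point in the argument is choosing $\funref{label_inconsummabile}$ simultaneously large enough to trigger the wall-extraction step and small enough to keep regularity-enforcing passes from eroding the height below $r$.
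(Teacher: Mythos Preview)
The paper does not prove this proposition; it is quoted as a black box from \cite{SauST21amor} (see the sentence introducing it: ``the following result from \cite{SauST21amor}''). So there is no in-paper proof to compare your sketch against, and I can only evaluate the sketch on its own terms.

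Your outline has the right shape, but the step you treat as routine is precisely the technical content of the proposition. After \autoref{label_panathinaiko} reports $\tw(G)>\funref{label_inconsummabile}(t)\cdot r$, you have \emph{no} tree decomposition in hand, so you cannot invoke \autoref{label_sleepwalkers} to extract a wall, and \autoref{label_bureaucracies} is purely existential. You write ``an $\funref{label_discernimiento}(t)\cdot r$-wall $W$ that can be extracted in $2^{O_t(r^2)}\cdot n$ time'' as if this were available off the shelf; it is not, and the paper's own \autoref{label_improvements}---the closest analogue actually proved here---spends a full page on a recursive Perković--Reed compression scheme (\autoref{label_preguntarnos}) exactly to get around this obstacle. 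Similarly, \autoref{label_propositional} is stated only existentially in this paper; you cannot ``feed $W$ into'' it and receive $A$ and $(\tilde W',\tilde{\frak R}')$ algorithmically without pointing to an algorithmic Flat Wall Theorem, which is again part of what \cite{SauST21amor} supplies.

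Your regularity discussion is partly right: tilting via \autoref{label_protectively} does remove external cells, and the condition ``$|\tilde c|\le 2$ for new cells'' prevents the tilt from introducing marginal or untidy cells on the new perimeter. But \autoref{label_ressemblances} goes the wrong way for you---it says tilts of regular pairs stay regular, not that tilting makes a pair regular---and your treatment of untidy cells (``a constant-time modification that affects neither the interior of the wall nor the rendition'') is hand-waving: untidiness records how wall edges thread through a flap, and repairing it generally means re-choosing part of the wall, not just relabelling the rendition. This is handled carefully in \cite{SauST21amor} and is not a one-liner.
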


We note that the result of \cite{SauST21amor} also returns a tree decomposition of the flatness pair. However, this additional output is not needed in the algorithms of this paper.

\subsection{Homogeneous walls}\label{label_definitionen}
We first present some definitions on boundaried graphs and folios that will be used to define the notion of homogeneous walls.
Following this, we present some results concerning homogeneous walls that are key ingredients in the application of the irrelevant vertex technique in our proofs.

\paragraph{Boundaried graphs.}
Let $t\in\Bbb{N}.$
A \emph{$t$-boundaried graph} is a triple $\bound{G} = (G,B,\rho)$ where $G$ is a graph, $B \subseteq V(G),$ $|B| = t,$ and
$\rho : B \to [t]$ is a bijection.
We  say that  $\textbf{G}_1=(G_1,B_1,\rho_1)$ and $\textbf{G}_{2}=(G_2,B_2,\rho_2)$
are {\em isomorphic} if there is an isomorphism from $G_{1}$ to $G_{2}$
that extends the bijection $\rho_{2}^{-1}\circ \rho_{1}.$
The triple $(G,B,\rho)$ is a {\em boundaried graph} if it is a $t$-boundaried graph for some $t\in\Bbb{N}.$
As in~\cite{RobertsonS95XIII} (see also \cite{BasteST20acom}), we define the {\em detail} of a boundaried graph
$\bound{G} = (G,B,\rho)$ as  ${\sf detail}(\bound{G}):=\max\{|E(G)|,|V(G)\setminus B|\}.$
We denote by ${\cal B}^{(t)}$ the set of all (pairwise non-isomorphic)  $t$-boundaried graphs and by ${\cal B}_{\ell}^{(t)}$ the set of all (pairwise non-isomorphic) $t$-boundaried graphs with detail at most $\ell.$
We also set ${\cal B}=\bigcup_{t\in\Bbb{N}}{\cal B}^{(t)}.$

We define the {\em treewidth} of a boundaried graph ${\bf G}=(G,B,\rho),$ denoted by $\tw({\bf G}),$ as the minimum width of a tree decomposition $(T,\chi)$ of $G$ for which there is some $u\in V(T)$ such that $B\subseteq \chi(u).$
Notice that the treewidth of a $t$-boundaried graph is always lower-bounded by $t-1.$

\paragraph{Folios.}
We say that $(M,T)$   is a {\em {\sf tm}-pair} if $M$ is  a graph, $T\subseteq V(M),$ and  all vertices in
$V(M)\setminus T$ have degree two. We denote by ${\sf diss}(M,T)$ the graph obtained
from  $M$ by {dissolving} all vertices  in $V(M)\setminus T.$
A {\em {\sf tm}-pair} of a graph $G$  is a  {\em {\sf tm}-pair}  $(M,T)$ where $M$ is a subgraph of $G.$
We call the vertices in $T$ {\em branch} vertices of $(M,T).$
We need to deal with topological minors for the notion of homogeneity defined below, on which  the statement of~\cite[Theorem 5.2]{BasteST20acom} relies.
If $\textbf{M}=(M,B,\rho)\in{\cal B}$ and   $T\subseteq V(M)$ with $B\subseteq T,$ we  call  $(\textbf{M},T)$ a {\em {\sf btm}-pair}
and we  define  ${\sf diss}(\textbf{M},T)=({\sf diss}(M, T),B,\rho).$ Note that we do not permit dissolution of boundary vertices, as we consider all of them to be branch vertices. If $\textbf{G}=(G,B,\rho)$ is a boundaried graph and $(M,T)$ is a  {\sf tm}-pair of $G$
where $B\subseteq T,$  then we say that
$(\textbf{M},T),$ where $\textbf{M}=(M,B,\rho),$ is a   {\em {\sf btm}-pair} of $\textbf{G}=(G,B,\rho).$
Let $\textbf{G}_{1},{\bf G}_{2}$ be two boundaried graphs.
We say that $\textbf{G}_{1}$ is a {\em topological minor}
of $\textbf{G}_{2},$ denoted by $\textbf{G}_{1}\pretp\textbf{G}_{2},$ if
$\textbf{G}_{2}$ has a {\sf btm}-pair $(\textbf{M},T)$
such that  ${\sf diss}(\textbf{M},T)$ is isomorphic to $\textbf{G}_{1}.$
Given a $\textbf{G}\in {\cal B} $ and a positive integer $\ell$, we define the {\em $\ell$-folio} of ${\bf G}$
as
$${\ell}\mbox{\sf-folio}(\textbf{G})=\{\textbf{G}'\in {\cal  B} \mid \textbf{G}'\pretp \textbf{G} \mbox{~and $\textbf{G}'$ has detail at most $\ell$}\}.$$

The number of distinct $\ell$-folios of $t$-boundaried graphs is upper-bounded in the following result, proved first in~\cite{BasteST20hittI} and used also in~\cite{BasteST20acom}.
\begin{proposition}\label{label_surprendront}
	There exists a function $\newfun{label_nompareilles}: \Bbb{N}^{2} \to \Bbb{N}$ such that for every $t,\ell\in \Bbb{N},$ $|\{\ell\mbox{\sf-folio}({\bf G}) \mid {\bf G}\in {\cal B}_{\ell}^{(t)}\}|\leq \funref{label_nompareilles}(t,\ell).$ Moreover, $\funref{label_nompareilles}(t,\ell)=2^{2^{{\cal O}((t+\ell)\cdot\log(t+\ell))}}.$\end{proposition}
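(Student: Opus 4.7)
The plan is to bound the total number of $\ell$-folios by first bounding the universe ${\cal B}_\ell^{(t)}$ in which they live, and then observing that each folio is just a subset of this universe. Concretely, by the definition of the boundaried topological-minor relation $\pretp,$ any ${\bf G}'\pretp {\bf G}$ is $t$-boundaried whenever ${\bf G}$ is (since ${\bf G}'$ arises as ${\sf diss}({\bf M},T)$ of a {\sf btm}-pair of ${\bf G},$ and both the boundary and $\rho$ are preserved under dissolution). Combined with the detail restriction, this gives $\ell\mbox{\sf-folio}({\bf G})\subseteq {\cal B}_\ell^{(t)}$ for every ${\bf G}\in {\cal B}_\ell^{(t)},$ so the number of distinct $\ell$-folios is at most $2^{|{\cal B}_\ell^{(t)}|}.$

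It remains to estimate $|{\cal B}_\ell^{(t)}|.$ If ${\bf G}=(G,B,\rho)\in {\cal B}_\ell^{(t)}$ then ${\sf detail}({\bf G})\leq \ell$ forces $|E(G)|\leq \ell$ and $|V(G)\setminus B|\leq \ell,$ so $|V(G)|\leq t+\ell.$ To count up to isomorphism, I would fix $n\in \{t,t+1,\ldots,t+\ell\},$ identify the vertex set with $[n]$ so that $B=[t]$ and $\rho$ is the identity, and observe that each isomorphism class with $|V(G)|=n$ has at least one such labeled representative; thus the number of classes is bounded by the number of labeled edge configurations. The latter is at most
$$\sum_{m=0}^{\ell}\binom{\binom{n}{2}}{m}\leq (\ell+1)\cdot (t+\ell)^{2\ell},$$
and summing over the at most $\ell+1$ values of $n$ yields $|{\cal B}_\ell^{(t)}|\leq (\ell+1)^{2}\cdot (t+\ell)^{2\ell}=2^{{\cal O}((t+\ell)\log(t+\ell))}.$ Therefore the number of $\ell$-folios is at most $2^{2^{{\cal O}((t+\ell)\log(t+\ell))}},$ which is the claimed bound on $\funref{label_nompareilles}(t,\ell).$

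Since the argument is pure counting, there is no significant obstacle. The only point to be pedantic about is the ``up to isomorphism'' step: boundary vertices are effectively labeled by $\rho,$ while non-boundary vertices may be permuted, so counting labeled edge configurations on $[n]$ with $B=[t]$ is already a valid upper bound on the number of isomorphism classes of $t$-boundaried graphs.
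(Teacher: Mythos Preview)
Your proof is correct and is precisely the standard counting argument from the cited reference~\cite{BasteST20hittI}: every $\ell$-folio of a $t$-boundaried graph is a subset of ${\cal B}_\ell^{(t)},$ and $|{\cal B}_\ell^{(t)}|=2^{{\cal O}((t+\ell)\log(t+\ell))}$ by a crude count of labeled graphs on at most $t+\ell$ vertices with at most $\ell$ edges. Note that your argument actually yields the same bound for ${\bf G}$ ranging over all of ${\cal B}^{(t)}$ rather than just ${\cal B}_\ell^{(t)},$ which is the form in which the proposition is effectively used later (the augmented flaps in \autoref{label_independents} need not themselves have bounded detail).
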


\paragraph{Augmented flaps.}
Let $G$ be a graph, $A$ be a subset of $V(G)$ of size $a,$ and $(W,\frR)$ be a flatness pair of $G\setminus A.$
For each flap $F\in {\sf flaps}_{\frak{R}}(W)$ we consider a labeling  $\ell_{F}: \partial F\rightarrow\{1,2,3\}$ such that
the set of labels assigned by $\ell_{F}$ to $\partial F$ is  one of $\{1\},$ $\{1,2\},$ $\{1,2,3\}.$
Also, let $\tilde{a}\in[a].$
For every set $\tilde{A}\in\binom{A}{\tilde{a}},$ we consider a bijection  $\rho_{\tilde{A}}: \tilde{A}\to [\tilde{a}].$
The labelings in ${\cal L}=\{\ell_{F} \mid F\in  {\sf flaps}_{\frak{R}}(W)\}$ and the labelings in $\{\rho_{\tilde{A}} \mid \tilde{A}\in\binom{A}{\tilde{a}}\}$ will be useful for defining a set of boundaried graphs that we will call augmented flaps.
We first need some more definitions.

Given a flap $F\in{\sf flaps}_{\frak{R}}(W),$ we define an ordering
$\Omega(F)=(x_{1},\ldots,x_{q}),$ with $q\leq 3,$ of the vertices of $\partial{F}$
so that
\begin{itemize}
	\item $(x_{1},\ldots,x_{q})$ is a  counter-clockwise cyclic ordering of the vertices of $\partial F$ as they appear in the corresponding cell of $C(\Gamma).$ Notice that this cyclic ordering is significant  only when $|\partial F|=3,$
	      in the sense that $(x_{1},x_{2},x_{3})$ remains invariant under shifting, i.e., $(x_{1},x_{2},x_{3})$ is the same as $ (x_{2},x_{3},x_{1})$ but not  under inversion, i.e.,   $(x_{1},x_{2},x_{3})$ is not the same as $(x_{3},x_{2},x_{1}),$ and
	\item   for $i\in[q],$ $\ell_{F}(x_{i})=i.$
\end{itemize}
Notice that the second condition is necessary for completing the definition of the ordering $\Omega(F),$
and this is the reason why we set up the labelings in ${\cal L}.$\medskip\medskip

For each set $\tilde{A}\in\binom{A}{\tilde{a}}$ and each $F\in {\sf flaps}_{\frak{R}}(W)$ with $t_{F}:=|\partial F|,$
we fix $\rho_{F}: \partial F\to [\tilde{a}+1,\tilde{a}+t_F]$ such that
$(\rho^{-1}_{F}(\tilde{a}+1),\ldots,\rho^{-1}_{F}(\tilde{a}+t_F))= \Omega(F).$
Also, we define the boundaried graph $$\textbf{F}^{\tilde{A}}:=(G[\tilde{A}\cup F],\tilde{A}\cup \partial F,\rho_{\tilde{A}}\cup \rho_F)$$
and we denote by $F^{\tilde{A}}$ the underlying graph of $\textbf{F}^{\tilde{A}}.$ We call $\textbf{F}^{\tilde{A}}$ an {\em $\tilde{A}$-augmented flap} of the flatness pair $(W,\frak{R})$ of $G\setminus A$
in $G.$
\paragraph{Palettes and homogeneity.}
For each $\frR$-normal cycle $C$ of ${\sf compass}_\frR (W)$ and each set $\tilde{A}\subseteq A,$ we define $(\tilde{A},\ell)\mbox{\sf -palette}(C)=\{\ell\mbox{\sf -folio}({\bf F}^{\tilde{A}})\mid F\in {\sf  influence}_{\frak{R}}(C)\}.$
Given a set $\tilde{A}\subseteq A,$ we say that the flatness pair $(W,\frak{R})$  of $G\setminus A$ is {\em $\ell$-homogeneous with respect to $\tilde{A}$} if every  {\sl internal} brick of ${W}$ has the {\sl same} $(\tilde{A},\ell)$\mbox{\sf -palette} (seen as a cycle of ${\sf compass}_\frR (W)$).
Also, given a collection ${\cal S}\subseteq 2^A,$ we say that the flatness pair $(W,\frak{R})$  of $G\setminus A$ is {\em $\ell$-homogeneous
		with respect to ${\cal S}$}
if it is $\ell$-homogeneous with respect to every $\tilde{A}\in {\cal S}.$

The following observation is a consequence of the fact that, given a wall $W$ and a  subwall $W'$ of $W,$ every internal brick of a tilt $W''$ of $W'$ is also an internal brick of $W.$

\begin{observation}\label{label_superintendent}
	Let $\ell\in\Bbb{N},$ $G$ be a graph, $A\subseteq V(G),$ ${\cal S}\subseteq 2^A,$ and $(W,\frak{R})$  be a flatness pair of $G\setminus A.$ If $(W,\frak{R})$ is  $\ell$-homogeneous
	with respect to ${\cal S},$ then for every subwall $W'$ of $W,$ every $W'$-tilt of $(W,\frak{R})$ is also $\ell$-homogeneous
	with respect to ${\cal S}.$
\end{observation}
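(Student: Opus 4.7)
\smallskip
\noindent\textbf{Proof plan.} The hint preceding the statement already pins down the key combinatorial fact: for any subwall $W'$ of $W$ and any tilt $W''$ of $W'$, every internal brick of $W''$ is an internal brick of $W$. Accepting this, the plan is to show that for every $\tilde{A}\in {\cal S}$, every internal brick of $\tilde{W}'$ has the same $(\tilde{A},\ell)$-palette in $\tilde{\frak{R}}'$ as it does in $\frak{R}$; homogeneity of $(\tilde{W}',\tilde{\frak{R}}')$ w.r.t.\ $\tilde{A}$ then follows directly from homogeneity of $(W,\frak{R})$ w.r.t.\ $\tilde{A}$.

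\smallskip
\noindent First I would fix $\tilde{A}\in {\cal S}$ and pick an arbitrary internal brick $C$ of $\tilde{W}'$, viewed as an $\tilde{\frak{R}}'$-normal cycle of ${\sf compass}_{\tilde{\frak{R}}'}(\tilde{W}')$. Since $C$ is an internal brick of $\tilde{W}'$, the closed disk $\Delta_C$ lies strictly inside $\Delta_{\tilde{W}'}$, so every non-$C$-external cell of $\tilde{\frak{R}}'$ is a $\tilde{W}'$-internal cell of $\tilde{\frak{R}}'$. By the definition of a $W'$-tilt (third bullet), the set of $\tilde{W}'$-internal cells of $\tilde{\frak{R}}'$ coincides with the set of $W'$-internal cells of $\frak{R}$, and moreover their images under the respective $\sigma$-maps are identical. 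By the cited fact, $C$ is also an internal brick of $W$, hence of $W'$ after passing to $\frak{R}$ through the tilt; so the analogous statement in $\frak{R}$ says that every non-$C$-external cell of $\frak{R}$ is a $W'$-internal cell of $\frak{R}$. Combining these, I would conclude that the collection of flaps $\{\sigma(c)\mid c\text{ not $C$-external in } \tilde{\frak{R}}'\}$ equals $\{\sigma(c)\mid c\text{ not $C$-external in } \frak{R}\}$, i.e., ${\sf influence}_{\tilde{\frak{R}}'}(C)={\sf influence}_{\frak{R}}(C)$.

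\smallskip
\noindent With the influence sets identified, the corresponding $\tilde{A}$-augmented flaps $\mathbf{F}^{\tilde{A}}$ are also identical (they are defined purely from $G[\tilde{A}\cup F]$ with the fixed boundary labeling, which does not depend on which flatness pair we are considering). Hence the $(\tilde{A},\ell)$-palette of $C$ computed in $\tilde{\frak{R}}'$ equals the $(\tilde{A},\ell)$-palette of $C$ computed in $\frak{R}$. Because $C$, being an internal brick of $W$ as well, inherits a palette that is common to all internal bricks of $W$ (this is the $\ell$-homogeneity of $(W,\frak{R})$ w.r.t.\ $\tilde{A}$), all internal bricks of $\tilde{W}'$ share the same $(\tilde{A},\ell)$-palette in $\tilde{\frak{R}}'$. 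Since $\tilde{A}\in {\cal S}$ was arbitrary, $(\tilde{W}',\tilde{\frak{R}}')$ is $\ell$-homogeneous w.r.t.\ ${\cal S}$, as desired.

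\smallskip
\noindent The only delicate point is the identification of ${\sf influence}_{\tilde{\frak{R}}'}(C)$ with ${\sf influence}_{\frak{R}}(C)$ for an internal brick $C$ of $\tilde{W}'$; this is where the cited fact (internal bricks of tilts are internal bricks of the original wall) is used, together with the third and fourth bullets of the definition of a $W'$-tilt, which guarantee that the two flatness pairs agree on the internal cells around $C$. Once this is in hand, the rest is a direct unfolding of the definition of $\ell$-homogeneity.
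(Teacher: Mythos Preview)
Your proposal is correct and follows exactly the approach the paper indicates: the paper does not give a proof beyond the one-line hint that every internal brick of a tilt of a subwall $W'$ is already an internal brick of $W,$ and you have simply unfolded that hint into the natural argument (internal bricks coincide, hence their influences---and thus their palettes---coincide via the third bullet of the tilt definition). Your identification of the matching of ${\sf influence}_{\tilde{\frak{R}}'}(C)$ and ${\sf influence}_{\frak{R}}(C)$ as the only delicate point is apt; the paper leaves this implicit, and your justification via the internal-cell clause of the tilt definition is the intended one.
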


\medskip
%

Let $a,\tilde{a},\ell\in \Bbb{N},$ where $\tilde{a}\leq a.$
Also, let $G$ be a graph, $A$ be a subset of $V(G)$ of size at most $a,$ and $(W,\frR)$ be a flatness pair of $G\setminus A.$
For every flap  $F\in{\sf flaps}_\frR (W),$ we define the function
${\sf var}^{(A,\tilde{a},\ell)}_F:\binom{A}{\leq \tilde{a}}\to \{\ell\mbox{\sf-folio}({\bf G}) \mid {\bf G}\in \bigcup_{i\in[\tilde{a}+3]}{\cal B}^{(i)}\}$
that maps each set $\tilde{A}\in\binom{A}{\leq \tilde{a}}$ to the set  $\ell\mbox{\sf -folio}({\bf F}^{\tilde{A}}).$

{We also use the following result that follows from \autoref{label_surprendront} and the fact that $|\binom{A}{\leq \tilde{a}}|= {\cal O}(|A|^{\tilde{a}})$ (see also \cite{SauST21kapiI}).}

\begin{lemma}\label{label_independents}
	There exists a function $\newfun{label_irresistibility}:\Bbb{N}^3\to \Bbb{N}$ such that if $a,\tilde{a},\ell\in \Bbb{N},$ where $\tilde{a}\leq a,$ $G$ is a graph, $A$ is a subset of $V(G)$ of size at most $a,$ and $(W,\frR)$ is a flatness pair of $G\setminus A,$ then $$|\{{\sf var}^{(A,\tilde{a},\ell)}_F\mid F\in {\sf flaps}_\frR (W)\}|\leq \funref{label_irresistibility}(a,\tilde{a},\ell).$$
	Moreover, $ \funref{label_irresistibility}(a,\tilde{a},\ell)= 2^{a^{\tilde{a}} \cdot 2^{{\cal O}((\tilde{a}+\ell )\cdot \log (\tilde{a}+\ell))}}.$
\end{lemma}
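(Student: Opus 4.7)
The plan is to prove this as a straightforward counting argument in the spirit of the hint already given in the parenthetical remark: the map ${\sf var}^{(A,\tilde{a},\ell)}_F$ is a function between two sets whose sizes we can bound, so the number of possible such functions is bounded by $(\text{codomain size})^{(\text{domain size})}$.

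First I would bound the size of the domain $\binom{A}{\leq \tilde{a}}$. Since $|A|\leq a$, we have $\bigl|\binom{A}{\leq \tilde{a}}\bigr|\leq \sum_{i=0}^{\tilde{a}}\binom{a}{i}\leq (\tilde{a}+1)\cdot a^{\tilde{a}}={\cal O}(a^{\tilde{a}})$. Next I would bound the codomain, namely the number of distinct $\ell$-folios of boundaried graphs whose boundary has size at most $\tilde{a}+3$. Each $\tilde{A}$-augmented flap $\mathbf{F}^{\tilde{A}}$ has boundary $\tilde{A}\cup\partial F$ of cardinality at most $\tilde{a}+3$, hence it belongs to $\mathcal{B}^{(t)}$ for some $t\in[\tilde{a}+3]$. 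Applying \autoref{label_surprendront} for each such $t$ and summing, the number of possible $\ell$-folios is at most
\[
\sum_{t=1}^{\tilde{a}+3}\funref{label_nompareilles}(t,\ell)\ \leq\ (\tilde{a}+3)\cdot 2^{2^{{\cal O}((\tilde{a}+\ell)\cdot \log(\tilde{a}+\ell))}}\ =\ 2^{2^{{\cal O}((\tilde{a}+\ell)\cdot \log(\tilde{a}+\ell))}}.
\]

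Combining the two bounds, the number of possible functions from $\binom{A}{\leq \tilde{a}}$ to the set of possible $\ell$-folios is at most
\[
\Bigl(2^{2^{{\cal O}((\tilde{a}+\ell)\cdot \log(\tilde{a}+\ell))}}\Bigr)^{{\cal O}(a^{\tilde{a}})}\ =\ 2^{a^{\tilde{a}}\cdot 2^{{\cal O}((\tilde{a}+\ell)\cdot \log(\tilde{a}+\ell))}},
\]
which I would then take as the definition of $\funref{label_irresistibility}(a,\tilde{a},\ell)$. Since every function ${\sf var}^{(A,\tilde{a},\ell)}_F$ (for $F\in{\sf flaps}_\frR(W)$) is one such function, the claimed upper bound on the number of distinct such ${\sf var}$-functions follows immediately.

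I do not foresee any real obstacle: the entire content of the argument is the observation that ${\sf var}^{(A,\tilde{a},\ell)}_F$ is a set-theoretic function whose domain depends only on $(A,\tilde{a})$ and whose codomain is controlled by the prior folio-counting result \autoref{label_surprendront}. The only mild care needed is to notice that the boundary size of $\mathbf{F}^{\tilde{A}}$ may vary between $1$ and $\tilde{a}+3$ (because $|\partial F|\in\{1,2,3\}$ and $|\tilde{A}|$ ranges over $\{0,\ldots,\tilde{a}\}$), which is why we must sum $\funref{label_nompareilles}(t,\ell)$ over $t\in[\tilde{a}+3]$ rather than evaluate it at a single value; this introduces only a polynomial factor that is absorbed into the double-exponential expression.
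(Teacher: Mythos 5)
Your proof is correct and takes essentially the same route the paper indicates: the paper dispenses with \autoref{label_independents} in a single remark, observing that it follows from \autoref{label_surprendront} together with $\bigl|\binom{A}{\leq\tilde{a}}\bigr|={\cal O}(|A|^{\tilde{a}})$, which is exactly the ``(codomain size)$^{\text{(domain size)}}$'' counting you carry out. The only thing you add is the (correct, and worth noting) observation that one should sum $\funref{label_nompareilles}(t,\ell)$ over $t\in[\tilde{a}+3]$ because the boundary size of an augmented flap varies, a polynomial factor that is indeed absorbed.
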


\autoref{label_independents} allows us to define an injective function $\sigma: \{{\sf var}^{(A,\tilde{a},\ell)}_F\mid F\in {\sf flaps}_\frR (W)\} \to [\funref{label_irresistibility}(a,\tilde{a},\ell)]$
that maps each function in $\{{\sf var}^{(A,\tilde{a},\ell)}_F\mid F\in {\sf flaps}_\frR (W)\}$ to an integer in $[\funref{label_irresistibility}(a,\tilde{a},\ell)].$
Using $\sigma,$ we define a function $\zeta_{A,\tilde{a},\ell}:{\sf flaps}_\frR (W)\to [\funref{label_irresistibility}(a,\tilde{a},\ell)],$ that maps each flap $F\in{\sf flaps}_\frR (W)$ to the integer $\sigma({\sf var}^{(A,\tilde{a},\ell)}_F).$
In \cite{SauST21amor}, given a $w\in\Bbb{N},$ the notion of homogeneity is defined with respect to a {\sl flap-coloring $\zeta$ of $(W,\frR)$ with $w$ colors}, that is a function from ${\sf flaps}_\frR (W)$ to $[w].$
This function gives rise to the $\zeta\mbox{\sf -palette}$ of each $\frR$-normal cycle of ${\sf compass}_{\frR} (W)$ which, in turn, is used to define the notion of a {\em $\zeta$-homogeneous} flatness pair.
Hence, using the terminology of \cite{SauST21amor},  $\zeta_{A,\tilde{a},\ell}$ is a flap-coloring of $(W,\frR)$ with $\funref{label_irresistibility}(a,\tilde{a},\ell)$ colors, that ``colors''  each flap $F\in {\sf flaps}_\frR (W)$ by mapping it to the integer $\sigma({\sf var}^{(A,\tilde{a},\ell)}_F)$, and the notion of $\ell$-homogeneity
with respect to ${A \choose \leq {\tilde{a}}}$ defined here can be alternatively interpreted as $\zeta_{A,\tilde{a},\ell}$-homogeneity.
The following result, which is the application of a result of Sau et al.~\cite[Lemma 13]{SauST21amor} for the flap-coloring $\zeta_{A,\tilde{a},\ell},$ provides an algorithm that, given a flatness pair of ``big enough'' height, outputs a homogeneous flatness pair.

\begin{proposition}\label{label_highlighting}
	There is a function $\newfun{label_distinguimos}:\Bbb{N}^4\to \Bbb{N},$ whose images are odd integers,
	and an algorithm that receives as  input  an odd integer $r\geq 3,$ $\tilde{a},a,\ell\in \Bbb{N},$ where $\tilde{a}\leq a,$ a graph $G,$ a set $A\subseteq V(G)$ of size at most $a,$ and a flatness pair $(W,\frR)$ of $G\setminus A$ of height $\funref{label_distinguimos}(r,a,z,\ell),$
	and outputs
	a flatness pair $(\breve{W},\breve{\frR})$ of $G\setminus A$  of height $r$ that is $\ell$-homogeneous with respect to ${A \choose \leq {\tilde{a}}}$ and  is
	a $W'$-tilt of $(W,\frR)$ for some subwall $W'$ of $W.$
	Moreover, $\funref{label_distinguimos}(r,a,\tilde{a},\ell) = {\cal O}(r^{\funref{label_irresistibility}(a,\tilde{a},\ell)})$ and the algorithm runs in time {$2^{{\cal O}(\funref{label_irresistibility}(a,\tilde{a},\ell)\cdot r \log r)}\cdot(n+m)$}.
\end{proposition}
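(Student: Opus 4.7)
The plan is to derive \autoref{label_highlighting} as a direct corollary of Lemma 13 of \cite{SauST21amor} applied to the flap-coloring $\zeta_{A,\tilde{a},\ell}$ that has just been defined. I would organize the proof around three steps: reinterpreting the homogeneity notion of the statement as $\zeta_{A,\tilde{a},\ell}$-homogeneity in the sense of \cite{SauST21amor}, invoking the black-box lemma from that paper, and translating the resulting quantitative bounds.

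First I would verify that $\ell$-homogeneity with respect to $\binom{A}{\leq \tilde{a}}$ coincides with $\zeta_{A,\tilde{a},\ell}$-homogeneity. By construction, two flaps $F,F'\in{\sf flaps}_{\frR}(W)$ satisfy $\zeta_{A,\tilde{a},\ell}(F)=\zeta_{A,\tilde{a},\ell}(F')$ if and only if ${\sf var}^{(A,\tilde{a},\ell)}_F={\sf var}^{(A,\tilde{a},\ell)}_{F'}$, which happens exactly when, for every $\tilde{A}\in\binom{A}{\leq \tilde{a}}$, the $\ell$-folios of the $\tilde{A}$-augmented flaps ${\bf F}^{\tilde{A}}$ and ${\bf F}'^{\tilde{A}}$ coincide. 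Therefore the $\zeta_{A,\tilde{a},\ell}$-palette of an $\frR$-normal cycle $C$ determines, and is determined by, the family of $(\tilde{A},\ell)$-palettes of $C$ taken over all $\tilde{A}\in\binom{A}{\leq \tilde{a}}$. Consequently, requiring that every internal brick of $W$ has the same $\zeta_{A,\tilde{a},\ell}$-palette is equivalent to requiring the same for its $(\tilde{A},\ell)$-palette for every $\tilde{A}\in\binom{A}{\leq \tilde{a}}$.

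Next I would invoke \cite[Lemma 13]{SauST21amor}: given a flap-coloring $\zeta$ of a flatness pair $(W,\frR)$ with $w$ colors and a height of order ${\cal O}(r^w)$, it returns, in time $2^{{\cal O}(w\cdot r\log r)}\cdot(n+m)$, a $\zeta$-homogeneous flatness pair of height $r$ which is a $W'$-tilt of $(W,\frR)$ for some subwall $W'$ of $W$. Applying this with $\zeta=\zeta_{A,\tilde{a},\ell}$, the number of colors is bounded by $\funref{label_irresistibility}(a,\tilde{a},\ell)$ thanks to \autoref{label_independents}. Setting $\funref{label_distinguimos}(r,a,\tilde{a},\ell):={\cal O}(r^{\funref{label_irresistibility}(a,\tilde{a},\ell)})$ yields the required height bound, and the stated running time $2^{{\cal O}(\funref{label_irresistibility}(a,\tilde{a},\ell)\cdot r\log r)}\cdot(n+m)$ follows by substitution. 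The regularity-like output structure (a $W'$-tilt for some subwall $W'\subseteq W$) is preserved by construction, and \autoref{label_stepdaughter}--\autoref{label_superintendent} guarantee that the resulting tilt still inherits the relevant properties.

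The only non-routine point, and the place where I would be most careful, is ensuring that $\zeta_{A,\tilde{a},\ell}$ can be effectively computed within the claimed time budget so that Lemma 13 of \cite{SauST21amor} is applicable algorithmically rather than just existentially. For each flap $F\in{\sf flaps}_{\frR}(W)$ and each $\tilde{A}\in\binom{A}{\leq \tilde{a}}$ this amounts to computing $\ell\mbox{\sf -folio}({\bf F}^{\tilde{A}})$, which can be done by enumerating all (pairwise non-isomorphic) candidate $t$-boundaried graphs of detail at most $\ell$ for $t\leq \tilde{a}+3$ and testing topological containment in the boundaried graph ${\bf F}^{\tilde{A}}$; the number of such candidates is bounded by $\funref{label_nompareilles}(\tilde{a}+3,\ell)$ and each test is of size depending only on $\tilde{a}$ and $\ell$. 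Since the flaps are edge-disjoint, summing over all $F$ contributes only a linear overhead in $n+m$, while the dependence on $a,\tilde{a},\ell$ is absorbed into the factor $\funref{label_irresistibility}(a,\tilde{a},\ell)$ already present in the time bound. Finally, $\sigma$ is realized by assigning to each distinct value of ${\sf var}^{(A,\tilde{a},\ell)}_F$ a canonical integer, computable on the fly as flaps are processed. This completes the construction, and the output of \cite[Lemma 13]{SauST21amor} is the desired $(\breve{W},\breve{\frR})$.
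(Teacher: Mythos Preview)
Your proposal is correct and follows exactly the approach indicated in the paper, which explicitly presents \autoref{label_highlighting} as the application of \cite[Lemma~13]{SauST21amor} to the flap-coloring $\zeta_{A,\tilde{a},\ell}$; your added verification that $\ell$-homogeneity with respect to $\binom{A}{\leq \tilde{a}}$ coincides with $\zeta_{A,\tilde{a},\ell}$-homogeneity and your discussion of the effective computability of the coloring simply make explicit what the paper leaves implicit. One small slip: $\funref{label_nompareilles}(\tilde{a}+3,\ell)$ bounds the number of distinct $\ell$-folios, not the number of candidate boundaried graphs in ${\cal B}_{\ell}^{(t)}$, though the latter is of course also finite and depends only on $\tilde{a}$ and $\ell$, so the argument still goes through.
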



\paragraph{The price of homogeneity.}
As \autoref{label_highlighting} indicates, finding a homogeneous flat wall inside a flat wall has a price, corresponding to
the function $\funref{label_distinguimos}(r,a,\tilde{a},\ell)$ of the required height of the given flat wall.
The ``polynomial gap'' between the height of the given flatness pair $(W,\frak{R})$ of $G\setminus A$ and the homogenous flat wall that is returned is determined by the function $\funref{label_irresistibility}$ of \autoref{label_superintendent}, that bounds the number of different folios that
can be rooted through the augmented flaps of $(W,\frak{R}),$ for each possible augmentation of each flap with a subset of $A$ of size at most $\tilde{a}.$
In this paper, we will use \autoref{label_highlighting} in order to compute a flat wall that is homogeneous with respect to $2^A,$ that is, for $\tilde{a} = a.$
We set $c_{a,\ell}=\funref{label_irresistibility}(a,a,\ell)$ and we point out that, in general, it follows that $c_{a,\ell}=2^{2^{{\cal O}((a+\ell)\cdot \log(a+\ell))}}.$
However, by using the notion of representatives instead of folios as in~\cite{BasteST20acom},
we can obtain a smaller bound of $c_{a,\ell}=2^{{\cal O}((a+\ell)\cdot \log(a+\ell))}.$
We call $c_{a_{\cal F},\ell_{\cal F}}$ the {\em palette-variety} of  ${\cal F}.$

\section{Auxiliary algorithmic and combinatorial results}\label{label_compensating}

In this section we provide some algorithmic and combinatorial results that will support the main algorithms of this paper. In \autoref{label_desencajaron} we provide an algorithm that finds an irrelevant vertex inside a ``large enough'' homogeneous  flat wall, while, in \autoref{label_bezeichnungsweise}, we define canonical partitions of walls and we present some combinatorial results that allow our algorithms to branch.

\subsection{Finding an irrelevant vertex}
\label{label_desencajaron}

The {\sl irrelevant vertex technique} was introduced in~\cite{RobertsonS95XIII}
for providing an {\sf FPT}-algorithm for the \mnb{\sc Disjoint Paths} problem.
Moreover, this technique has appeared to be quite versatile and is now a standard tool
of parameterized algorithm design (see e.g.,~\cite{CyganFKLMPPS15para,ThilikosBDFM12grap}).
The applicability of  this technique for \mnb{\sc ${\cal F}$-M-Deletion} is materialized in this section by the algorithm of \autoref{label_appressavamo}.

For the proof of \autoref{label_appressavamo} we need the next combinatorial result, \autoref{label_perspicacity}, whose
proof is presented in \cite{SauST21kapiI}.
\autoref{label_perspicacity}
intuitively states that, given a graph $G$ and a homogeneous flatness pair $(W,\frak{R})$ of $G$ of ``big enough'' height, it holds that
for every $W^{(q)}$-tilt $(\hat{W},\hat{\frak{R}})$ of $(W,\frak{R}),$ ${\sf compass}_{\hat{\frak{R}}}(\hat{W})$
can be ``safely'' removed from the input graph $G,$ in the sense that $(G,k)$  and
$(G\setminus V({\sf compass}_{\hat{\frak{R}}}(\hat{W})),k)$ are equivalent instances of {\sc ${\cal F}$-M-Deletion}.
In \cite{SauST21kapiI} we insisted on a proof of \autoref{label_perspicacity}
that requires homogeneity with respect to $2^A$ (that is, with respect to {\sl every} possible subset of $A$) in order to find an irrelevant wall, no matter the choice of the hitting set.
This is an enhancement of the result of \cite[Theorem 5.2]{BasteST20acom},
which allows to reroute minors outside a part of the wall that is homogeneous with respect to a {\sl particular} apex set.
When aiming to detect a wall that is irrelevant for {\sc ${\cal F}$-M-Deletion},
we do not know a priori which is the hitting set, and therefore we need to ask, firstly, for a wall that is irrelevant for every choice of a hitting set $S$ and, secondly, for homogeneity that captures all possible remaining (after the deletion of $S$) apex sets of the flat wall in order to apply \cite[Theorem 5.2]{BasteST20acom} for such an apex set.
For these reasons, 
this enhancement of \cite[Theorem 5.2]{BasteST20acom} is essential for our case.

The running time of the next result depends on the function $f_{\sf ul}$ coming from the Unique Linkage Theorem from~\cite{KawarabayashiW10asho} (see also \cite{RobertsonS09XXI,RobertsonSGM22}).
Recall that $\ell_{\cal F}=\max\{{\sf detail}(H)\mid H\in{\cal F}\}.$

\begin{proposition}\label{label_perspicacity}
	There exist two functions $\newfun{label_objectivement}: \Bbb{N}^{4}\to \Bbb{N}$ and $\newfun{label_preoccupations}:\Bbb{N}^2\to \Bbb{N},$
	{where the images of $\funref{label_objectivement}$ are odd numbers,} such that for every $a,k\in\Bbb{N},$ every odd $q\in\Bbb{N}_{\geq 3},$ and every graph $G,$
	if $A$ is a subset of $V(G)$ of size at most $a$
	and
	$(W,\frR)$ is a regular flatness pair of $G\setminus A$ of height at least $\funref{label_objectivement}(a,\ell_{\cal F},q,k)$ that is $\funref{label_preoccupations}(a,\ell_{\cal F})$-homogeneous with respect to $2^A,$
	then for every  $W^{(q)}$-tilt $(\hat{W},\hat{\frak{R}})$ of $(W,\frak{R}),$
	it holds that   $(G,k)$ and $(G\setminus V({\sf compass}_{\hat{\frR}}(\hat{W})),k)$ are equivalent instances of \mnb{\sc ${\cal F}$-M-Deletion}.
	Moreover, $\funref{label_objectivement}(a,\ell_{\cal F},q,k)={\cal O}(k\cdot (f_{\sf ul}(16a+12\ell_{\cal F}))^3   + q ),$ where $f_{\sf ul}$ is the function of the Unique Linkage Theorem, and $\funref{label_preoccupations}(a,\ell_{\cal F})= a+\ell_{\cal F} +3.$\end{proposition}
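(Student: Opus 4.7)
The plan is to prove both directions of the equivalence, with one being essentially immediate and the other being the heart of the matter. For the easy direction, if $S$ is a solution for $(G,k)$, then $S\cap V(G')$ (where $G':=G\setminus V({\sf compass}_{\hat{\frR}}(\hat{W}))$) has size at most $k$ and $G'\setminus (S\cap V(G'))$ is an induced subgraph of $G\setminus S\in {\bf exc}({\cal F})$, hence itself in ${\bf exc}({\cal F})$ by minor-closedness. So $(G',k)$ is also a yes-instance.

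For the reverse direction I would proceed by contradiction. Let $S'$ be a solution for $(G',k)$, and suppose that $G\setminus S'$ contains some $H\in {\cal F}$ as a minor, with model $\mu$. Since $(G\setminus V({\sf compass}_{\hat{\frR}}(\hat{W})))\setminus S'\in {\bf exc}({\cal F})$, the model $\mu$ must use some vertices of $V({\sf compass}_{\hat{\frR}}(\hat{W}))\setminus S'$. Set $A' := A\setminus S'$; then $|A'|\leq a$. The key observation, using \autoref{label_stepdaughter} and \autoref{label_ressemblances}, is that $(\hat{W},\hat{\frR})$ is itself a regular flatness pair of $(G\setminus A)\setminus S'$ whose compass contains the central vertices of $W^{(q)}$, so $\mu$ enters ${\sf compass}_{\hat{\frR}}(\hat{W})$ only through (i) the perimeter boundary of $\hat{W}$, and (ii) edges between the compass and $A'$. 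The goal is to reroute the portion of $\mu$ inside the compass away from it, producing a model of $H$ inside $G'\setminus S'$ and contradicting the hypothesis on $S'$.

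To carry out the rerouting, I would encode the portions of $\mu$ that cross ${\sf compass}_{\hat{\frR}}(\hat{W})$ as a linkage in the wall, with terminals on the perimeter of $\hat{W}$ together with the attachment points to $A'$. A careful accounting (each vertex of $H$ contributes at most a bounded number of branches and each edge of $H$ through the compass contributes at most a bounded number of path endpoints, inflated by interactions with $A'$) bounds the number of terminals by $16a+12\ell_{\cal F}$. The Unique Linkage Theorem of~\cite{KawarabayashiW10asho} then provides an equivalent linkage avoiding the central $q$-subwall, provided the ambient flat wall has height at least ${\cal O}(k\cdot f_{\sf ul}(16a+12\ell_{\cal F})^3+q)$, which is exactly the function $\funref{label_objectivement}(a,\ell_{\cal F},q,k)$; the factor $k$ comes from needing to perform up to $k$ such rerouting passes to handle all interactions (e.g.\ iteratively). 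Homogeneity enters crucially here: the rerouted linkage may pass through different flaps than the original one, and we need the new flaps, augmented by arbitrary subsets $\tilde{A}\subseteq A'$, to realize the same topological-minor fragments as the old ones. Since $(W,\frR)$ is $(a+\ell_{\cal F}+3)$-homogeneous with respect to $2^A$, every internal brick has the same $(\tilde{A},\ell_{\cal F}+3)$-palette for every $\tilde{A}\subseteq A$, so the augmented flaps along the rerouted paths can substitute for the original ones while still witnessing the same pieces of $H$; the $+3$ in the detail parameter covers the boundary size $|\partial F|\leq 3$ of a flap.

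The main obstacle, and the reason that homogeneity with respect to $2^A$ (rather than a single apex set) is indispensable, is that $S'$ is unknown when we build $(W,\frR)$: the effective apex set after deletion can be any $A'=A\setminus S'\subseteq A$, and the rerouting must succeed for every such $A'$ simultaneously. This is exactly why we must pre-certify that every internal brick has the same palette for every possible subset of $A$. The second delicate point is the interplay of $A'$ with the flat wall during rerouting, since an $H$-edge may weave between flaps via several apex vertices; the linkage encoding must include, for each such apex vertex, the finite sequence of entries/exits into the compass, and the bound $16a+12\ell_{\cal F}$ on terminals must accommodate these. Once these bookkeeping issues are handled, splicing the rerouted linkage to the portion of $\mu$ lying outside the compass produces the desired $H$-model in $G'\setminus S'$, yielding the contradiction and completing the proof.
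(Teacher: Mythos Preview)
The paper does not actually prove this proposition here; it is imported from the companion paper~\cite{SauST21kapiI}, and the surrounding discussion only explains \emph{why} homogeneity with respect to $2^{A}$ (rather than a single apex set) is required, namely so that the rerouting of~\cite[Theorem~5.2]{BasteST20acom} can be applied for whichever apex set $A'=A\setminus S'$ survives the deletion of an \emph{a priori} unknown solution $S'$. Your outline matches that discussion at the strategic level: the forward direction is immediate, the reverse goes by contradiction, the offending model must enter ${\sf compass}_{\hat{\frR}}(\hat{W})$, and the combination of the Unique Linkage Theorem with homogeneity is what drives the rerouting. Your explanation of why $2^{A}$-homogeneity is needed is exactly the point the paper stresses.

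There is, however, one genuine gap. You attribute the factor $k$ in $\funref{label_objectivement}$ to ``up to $k$ rerouting passes''; this is not how the argument works, and as stated your rerouting step is incomplete. The set $S'$ is contained in $V(G')=V(G)\setminus V({\sf compass}_{\hat{\frR}}(\hat{W}))$, so it leaves the tilt's compass intact, but it may still meet the \emph{big} wall $W$ (and its $\frR$-compass) in up to $k$ places. You therefore cannot simply ``reroute in $W$'': after deleting $S'$, the layered structure of $(W,\frR)$ may be punctured. The role of the factor $k$ is a pigeonhole step: since $|S'|\le k$, the set $S'$ meets at most $k$ of the layers of $W$ lying between $W^{(q)}$ and $D(W)$, so among ${\cal O}(k\cdot (f_{\sf ul}(16a+12\ell_{\cal F}))^{3})$ such layers one finds a block of ${\cal O}((f_{\sf ul}(16a+12\ell_{\cal F}))^{3})$ \emph{consecutive} layers disjoint from $S'$. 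The Unique Linkage rerouting is then carried out once, inside this clean annulus, and homogeneity (inherited by the relevant subwalls via \autoref{label_superintendent}) guarantees that the flaps met along the rerouted linkage realise, for the surviving apex set $A'=A\setminus S'$, the same folios as the original ones. Without isolating such an $S'$-free annulus first, the rerouting you describe need not exist.
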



By applying \autoref{label_protectively} on top of \autoref{label_perspicacity}, in order to find a tilt that is guaranteed to be irrelevant by \autoref{label_perspicacity},
we directly get the following algorithm,
which outputs a flatness pair $(\hat{W},\hat{\frR})$ of an input graph $G$ such that $(G,k)$  and $(G\setminus V({\sf compass}_{\hat{\frR}}(\hat{W})),k)$ are equivalent instances of {\sc ${\cal F}$-M-Deletion}. In fact, in the rest of the paper, we use a slightly weaker version of \autoref{label_appressavamo}, referred as \autoref{label_mitinbegriffen}, that outputs just an irrelevant vertex. Here,  we prove this more general result for future use.

\begin{lemma}\label{label_appressavamo}
	There exists an algorithm with the following specifications:\medskip

	\noindent{\tt Find-Irrelevant-Wall}$(k,q,a,G,A,W,\frR)$\\
	\noindent{\textbf{Input}:} Three integers $k,q, a\in\Bbb{N},$ with odd $q \geq 3,$ a graph $G,$ a set $A\subseteq V(G)$ of size at most $a,$
	and a regular flatness pair $(W,\frR)$ of $G\setminus A$ of height at least $\funref{label_objectivement}(a,\ell_{\cal F},q,k)$ that is  $\funref{label_preoccupations}(a,\ell_{\cal F})$-homogeneous with respect to $2^A.$\\
	\noindent{\textbf{Output}:} A flatness pair $(\hat{W},\hat{\frR})$ of $G\setminus A$ that is a $W^{(q)}$-tilt of $(W,\frR)$ and
	such that $(G,k)$ and $(G\setminus V({\sf compass}_{\hat{\frR}}(\hat{W})),k)$ are equivalent instances of \mnb{\sc ${\cal F}$-M-Deletion}.\\
	Moreover,
	this algorithm runs in ${\cal O}(n+m)$-time.
\end{lemma}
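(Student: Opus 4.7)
The plan is to prove this as an immediate corollary of the combinatorial \autoref{label_perspicacity} combined with the algorithmic tilt-construction of \autoref{label_protectively}. The only work to do is to invoke the correct result, verify that the hypotheses match, and check the running time.

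First, I would observe that $W^{(q)}$ is an $\frR$-normal subwall of $W,$ since every subwall of $W$ belongs to ${\cal S}_{\frR}(W).$ Hence \autoref{label_protectively} can be applied to the triple $(G\setminus A, (W,\frR), W^{(q)})$ to compute, in ${\cal O}(n+m)$-time, a flatness pair $(\hat{W},\hat{\frR})$ of $G\setminus A$ that is a $W^{(q)}$-tilt of $(W,\frR).$ This gives us both the claimed output and the claimed running time, so the only remaining task is to justify that $(\hat{W},\hat{\frR})$ is irrelevant.

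Second, I would invoke \autoref{label_perspicacity}. The input hypotheses of \autoref{label_appressavamo} exactly match those of \autoref{label_perspicacity}: we have a graph $G,$ an apex set $A\subseteq V(G)$ with $|A|\leq a,$ and a regular flatness pair $(W,\frR)$ of $G\setminus A$ that is $\funref{label_preoccupations}(a,\ell_{\cal F})$-homogeneous with respect to $2^A$ and has height at least $\funref{label_objectivement}(a,\ell_{\cal F},q,k).$ \autoref{label_perspicacity} then asserts that \emph{every} $W^{(q)}$-tilt $(\hat{W},\hat{\frR})$ of $(W,\frR)$ satisfies the property that $(G,k)$ and $(G\setminus V({\sf compass}_{\hat{\frR}}(\hat{W})),k)$ are equivalent instances of \mnb{\sc ${\cal F}$-M-Deletion}. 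In particular, this holds for the specific tilt produced by \autoref{label_protectively}, which is exactly what we need.

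No obstacle is anticipated: the proof is literally a two-line composition of the two cited results, and the running time of ${\cal O}(n+m)$ is inherited verbatim from \autoref{label_protectively}, since \autoref{label_perspicacity} is purely combinatorial and contributes no computation. The only thing to be slightly careful about is that the regularity and homogeneity of $(W,\frR)$ are assumed as part of the input of \autoref{label_appressavamo}, so they need not be re-established; the algorithm \texttt{Find-Irrelevant-Wall} merely forwards this input into the machinery of \autoref{label_protectively} and relies on \autoref{label_perspicacity} for correctness.
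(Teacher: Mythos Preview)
Your proposal is correct and follows essentially the same approach as the paper: the paper states that the lemma follows ``by applying \autoref{label_protectively} on top of \autoref{label_perspicacity}, in order to find a tilt that is guaranteed to be irrelevant by \autoref{label_perspicacity},'' which is exactly the two-step composition you describe.
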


Notice that \autoref{label_appressavamo} together with \autoref{label_stepdaughter}
imply \autoref{label_mitinbegriffen} if we set $q=3$ and output a central vertex of the obtained $3$-wall.

\begin{corollary}\label{label_mitinbegriffen}
	There exists an algorithm with the following specifications:\medskip

	\noindent{\tt Find-Irrelevant-Vertex}$(k,a,G,A,W,\frR)$\\
	\noindent{\textbf{Input}:} Two integers $k,a\in\Bbb{N},$ a graph $G,$ a set $A\subseteq V(G)$ of size at most $a,$ and a regular flatness pair $(W,\frR)$ of $G\setminus A$ of height at least $\funref{label_objectivement}(a,\ell_{\cal F},3,k)$ that is  $\funref{label_preoccupations}(a,\ell_{\cal F})$-homogeneous with respect to $2^A.$\\
	\noindent{\textbf{Output:}} A vertex $v\in V(G)$ such that $(G,k)$ and $(G\setminus v,k)$ are equivalent instances of \mnb{\sc ${\cal F}$-M-Deletion}.\\
	Moreover, 
	this algorithm runs in ${\cal O}(n+m)$-time.
\end{corollary}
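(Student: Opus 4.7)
The strategy is to invoke the more general \autoref{label_appressavamo} with the parameter $q$ specialized to the minimum odd value $q = 3$, so that instead of producing an irrelevant subwall we extract a single irrelevant vertex from its interior. Since the hypothesis on the height of $(W, \frR)$ in \autoref{label_mitinbegriffen} is exactly $\funref{label_objectivement}(a,\ell_{\cal F},3,k)$, and the homogeneity assumption is identical in the two statements, the input to \autoref{label_appressavamo} is well-formed.

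Running {\tt Find-Irrelevant-Wall}$(k,3,a,G,A,W,\frR)$ produces, in ${\cal O}(n+m)$ time, a flatness pair $(\hat{W},\hat{\frR})$ of $G\setminus A$ that is a $W^{(3)}$-tilt of $(W,\frR)$ such that $(G,k)$ and $(G\setminus V({\sf compass}_{\hat{\frR}}(\hat{W})),k)$ are equivalent instances of \mnb{\sc ${\cal F}$-M-Deletion}. To convert this into an irrelevant \emph{vertex}, we apply \autoref{label_stepdaughter} to the wall $W' := W^{(3)} \in {\cal S}_{\frR}(W)$: it guarantees that the central vertices of $W^{(3)}$ lie in $V({\sf compass}_{\hat{\frR}}(\hat{W}))$. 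Thus, picking any central vertex $v$ of $W^{(3)}$ (equivalently, a central vertex of its tilt $\hat{W}$, as a wall and any of its tilts share the same interior, hence the same central vertices), we have $v\in V({\sf compass}_{\hat{\frR}}(\hat{W}))$, so $G\setminus v$ is a ``subinstance'' that can be obtained from $G\setminus V({\sf compass}_{\hat{\frR}}(\hat{W}))$ by adding back vertices of the compass. In particular, since removing the whole compass preserves equivalence with $(G,k)$, removing only the single vertex $v$ also does, i.e., $(G,k)$ and $(G\setminus v, k)$ are equivalent instances of \mnb{\sc ${\cal F}$-M-Deletion}.

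Finally, the running time is dominated by the call to \autoref{label_appressavamo}, which is linear, and identifying a central vertex of $W^{(3)}$ given the wall $W$ can be done in linear time by scanning the layers and branch vertices of $W$. This yields the claimed ${\cal O}(n+m)$ bound. There is essentially no obstacle here beyond correctly invoking \autoref{label_appressavamo} with $q=3$ and verifying that a central vertex of $W^{(3)}$ is retained by every $W^{(3)}$-tilt, which is precisely the content of \autoref{label_stepdaughter}.
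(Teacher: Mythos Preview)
Your proof is correct and follows exactly the approach the paper takes: the paper's justification is the one-line remark immediately preceding the corollary, namely that \autoref{label_appressavamo} together with \autoref{label_stepdaughter} imply the result by setting $q=3$ and outputting a central vertex of the resulting $3$-wall. Your write-up simply fleshes out this remark, including the (valid) sandwich argument that $G\setminus V({\sf compass}_{\hat{\frR}}(\hat{W})) \subseteq G\setminus v \subseteq G$ forces $(G\setminus v,k)$ to be equivalent to $(G,k)$.
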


\subsection{Combinatorial results for branching}\label{label_bezeichnungsweise}
In this subsection, we present the notion of a canonical partition and provide two combinatorial results that will justify a branching step of our  algorithm and, if such a step cannot be applied, the existence of a wall that will allow the application of the irrelevant vertex technique.
Canonical partitions were introduced in~\cite{SauST21kapiI}.

\paragraph{Canonical partitions.}
Let $r\geq 3$ be an odd integer.
Let $W$ be an $r$-wall and let $P_{1}, \ldots, P_{r}$ (resp. $L_{1},\ldots, L_{r}$) be its vertical (resp. horizontal) paths.
For every even  (resp. odd) $i\in[2,r-1]$ and every $j\in[2,r-1],$ we define ${A}^{(i,j)}$ to be the  subpath of $P_{i}$ that starts from a vertex of $P_{i}\cap L_{j}$ and finishes at a neighbor of a vertex in $L_{j+1}$ (resp. $L_{j-1}$), such that $P_{i}\cap L_{j}\subseteq A^{(i,j)}$ and $A^{(i,j)}$ does not intersect $L_{j+1}$ (resp. $L_{j-1}$).
Similarly, for every $i,j\in[2,r-1],$ we define $B^{(i,j)}$ to be the subpath of $L_{j}$ that starts from a vertex of $P_{i}\cap L_{j}$ and finishes at a neighbor of a vertex in $P_{i-1},$ such that $P_{i}\cap L_{j}\subseteq A^{(i,j)}$ and $A^{(i,j)}$ does not intersect $P_{i-1}.$

\begin{figure}[h]
	\centering
	\includegraphics[width=6cm]{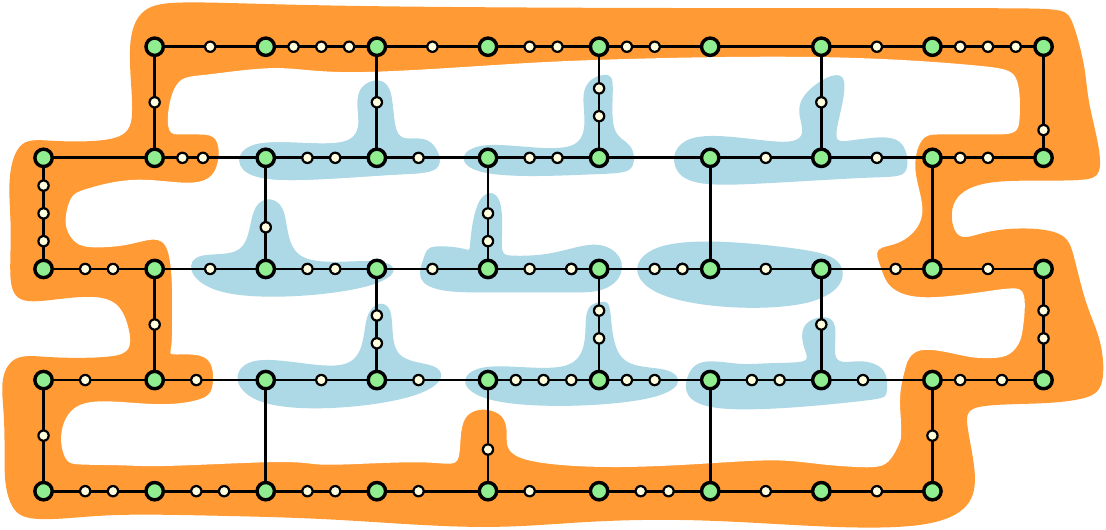}
	\caption{\small A $5$-wall and its canonical partition ${\cal Q}.$ The orange bag is the external bag $Q_{\rm ext}.$}
	\label{label_incidentally}
\end{figure}

For every  $i,j\in[2,r-1],$ we denote by $Q^{(i,j)}$ the graph $A^{(i,j)}\cup B^{(i,j)}$ and ${Q_{\rm ext}}$ to be the graph $W\setminus \bigcup_{i,j\in[2,r-1]} Q_{i,j}.$
Now consider the collection ${\cal Q}=\{Q_{\rm ext}\}\cup\{Q_{i,j}\mid i,j\in[2,r-1]\}$
and observe that the graphs in ${\cal Q}$ are connected subgraphs of $W$ and their vertex sets form a partition of $V(W).$
We call ${\cal Q}$ the {\em canonical partition} of $W.$ Also, we call every $Q_{i,j}, i,j\in[2,r-1]$ an {\em internal bag} of ${\cal Q},$ while we refer to $Q_{\rm ext}$ as the {\em external bag} of ${\cal Q}.$ See \autoref{label_incidentally} for an illustration of the notions defined above.
For every $i\in[(r-1)/2],$ we say that a set $Q\in {\cal Q}$ is an {\em $i$-internal bag of ${\cal Q}$} if $V(Q)$ does not contain any vertex of the first $i$ layers of $W.$
Notice that the $1$-internal bags of ${\cal Q}$ are the internal bags of ${\cal Q}.$

Let $(W,\frak{R})$ be a flatness pair of a graph $G.$
Consider the canonical partition ${\cal Q}$ of $W.$ We enhance the graphs of ${\cal Q}$
so to include in them all the vertices of $G$ by applying the following procedure. We set $\tilde{\cal Q}:={\cal Q}$
and, as long as there is a vertex  $x\in V({\sf compass}_{\frak{R}}(W))\setminus V(\cupall \tilde{\cal Q})$ 
that is adjacent to a vertex of a graph $Q\in \tilde{\cal Q},$  update $\tilde{\cal Q}:=\tilde{\cal Q}\setminus \{Q\}\cup \{\tilde{Q}\},$ where $\tilde{Q}={\sf compass}_{\frak{R}}(W)[\{x\}\cup V(Q)].$ Since ${\sf compass}_{\frR}(W)$ is a connected graph, in this way we define a partition of the vertices of ${\sf compass}_{\frak{R}}(W)$ into subsets inducing connected graphs.
We call the $\tilde{Q}\in\tilde{\cal Q}$ that contains $Q_{\rm ext}$ as a subgraph the {\em external bag} of $\tilde{\cal Q},$ and we denote it by $\tilde{Q}_{\rm ext},$ while we call {\em internal bags} of $\tilde{\cal Q}$ all graphs in $\tilde{\cal Q}\setminus \{\tilde{Q}_{\rm ext}\}.$
Moreover, we enhance $\tilde{\cal Q}$ by adding all vertices of $G\setminus V({\sf compass}_\frR (W)$ in its external bag, i.e., by updating $\tilde{Q}_{\rm ext}: = G[V(\tilde{Q}_{\rm ext})\cup V(G\setminus V({\sf compass}_\frR (W))].$
We call such a  partition $\tilde{\cal Q}$ a {\em $(W,\frR)$-canonical partition of $G.$}
Notice that a $(W,\frR)$-canonical partition of $G$ is  not unique, since the sets in ${\cal Q}$ can be ``expanded'' arbitrarily when introducing vertex $x$.

Let $(W,\frR)$ be a flatness pair of a graph $G$ of height $r,$ for some $r\geq 3$ and $\tilde{\cal Q}$ be a $(W,\frR)$-canonical partition of $G.$
For every $i\in[(r-1)/2],$ we say that a set $Q\in \tilde{\cal Q}$ is an {\em $i$-internal bag of $\tilde{\cal Q}$} if it contains an $i$-internal bag of ${\cal Q}$ as a subgraph.
\medskip

Next we identify a combinatorial structure that guarantees the existence of a set of {$q={\cal O}_{s_{\cal F}}(k)$}
vertices that intersects every solution $S$ of  \mnb{\sc  ${\cal F}$-M-Deletion} with input $(G,k).$
This will permit  branching on $q$ simpler instances of the form $(G',k-1).$
Recall that $a_{\cal F}$ is the minimum
apex number of a graph in ${\cal F}.$
The following result is proved in~\cite{SauST21kapiI}.

\begin{proposition}
	\label{label_disviluppato}
	There exist three functions $\newfun{label_guildenstern}, \newfun{label_unterschieden},\newfun{label_straightforward}: \Bbb{N}^{3}\to \Bbb{N},$
	such that if
	${\cal F}$ is a finite set of graphs,
	$G$ is a graph,
	$k\in\Bbb{N},$
	$A$ is a subset of $V(G),$
	$(W,\frR)$ is a flatness pair of $G\setminus A$ of height at least $\funref{label_guildenstern}(a_{\cal F},s_{\cal F},k),$
	$\tilde{\cal Q}$ is a $(W,\frR)$-canonical partition of $G\setminus A,$
	$A'$ is a subset of vertices of $A$ that are adjacent, in $G,$ to vertices of at least $\funref{label_unterschieden}(a_{\cal F},s_{\cal F},k)$ $\funref{label_straightforward}(a_{\cal F},s_{\cal F},k)$-internal bags of $\tilde{\cal Q},$ and $|A'|\geq a_{\cal F},$
	then for every  set $S\subseteq V(G)$ of size at most $k$ such that $G\setminus S \in {\bf exc}({\cal F})$
	it holds that  $S\cap A'\neq\emptyset.$
	Moreover, $\funref{label_guildenstern}(a,s,k)={\cal O}(2^a \cdot  s^{5/2} \cdot k^{5/2}),$
	$\funref{label_unterschieden}(a,s,k)={\cal O}(2^a \cdot s^3 \cdot k^3),$ and $\funref{label_straightforward}(a,s,k)={\cal O}((a^2 +k)\cdot s),$ where $a=a_{\cal F}$ and $s= s_{\cal F}.$
\end{proposition}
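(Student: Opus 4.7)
My plan is a proof by contradiction. Suppose there is a set $S\subseteq V(G)$ with $|S|\leq k$, $G\setminus S\in{\bf exc}({\cal F})$, and $S\cap A'=\emptyset$; I will construct a minor of some $H\in{\cal F}$ inside $G\setminus S$, yielding the desired contradiction. First, I pick an obstruction $H\in{\cal F}$ whose apex number equals $a_{\cal F}$, and fix an apex set $X=\{x_1,\dots,x_{a_{\cal F}}\}\subseteq V(H)$ so that $H\setminus X$ is planar on at most $s_{\cal F}$ vertices. Since $|A'|\geq a_{\cal F}$ and $S\cap A'=\emptyset$, I select distinct $v_1,\dots,v_{a_{\cal F}}\in A'$; these will be the model vertices of $x_1,\dots,x_{a_{\cal F}}$ in the minor I build.

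Next, I analyze the canonical partition after deleting $S$. Because each vertex of $S\setminus A$ belongs to exactly one bag of $\tilde{\cal Q}$, the set $S$ destroys at most $k$ bags; hence each $v_i$ still has edges into at least $\funref{label_unterschieden}(a_{\cal F},s_{\cal F},k)-k$ intact $\funref{label_straightforward}(a_{\cal F},s_{\cal F},k)$-internal bags. To each such intact bag $b$ I attach a \emph{type} $\tau(b):=\{i\in[a_{\cal F}]:v_i\text{ is adjacent to some vertex of }b\}\subseteq[a_{\cal F}]$; there are at most $2^{a_{\cal F}}$ possible types, which is the source of the exponential-in-$a_{\cal F}$ factor appearing in $\funref{label_guildenstern}$ and $\funref{label_unterschieden}$.

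The crux is to locate a sub-wall $W''$ of $W$ of side $\Theta(s_{\cal F})$, sitting entirely inside the $\funref{label_straightforward}(a_{\cal F},s_{\cal F},k)$-internal region of $W$, together with bags $b_j\in\tilde{\cal Q}$ (one per vertex $y_j$ of $H\setminus X$) occupying grid positions that realize $H\setminus X$ as a minor of $W''$ and such that $\tau(b_j)\supseteq T_j:=\{i:x_iy_j\in E(H)\}$ for every $j$. Granting this, wall paths between grid-adjacent bags supply the edges of $H\setminus X$, the type conditions produce, for every edge $x_iy_j\in E(H)$, an edge from $v_i$ to a vertex of $b_j$, and together with $v_1,\dots,v_{a_{\cal F}}$ this yields a model of $H$ in $G\setminus S$, contradicting $G\setminus S\in{\bf exc}({\cal F})$.

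The main obstacle, where the three quantitative hypotheses are actually used, is this extraction step: a Ramsey/pigeonhole-type argument on the grid of typed bags must guarantee that the required super-types $T_1,\dots,T_{|Y|}$ appear simultaneously at compatible grid positions. Here the cubic factor in $k$ inside $\funref{label_unterschieden}$ has to absorb the $\leq k$ bags destroyed by $S$, the slack for the pigeonhole that produces a densely-typed sub-grid, and the room needed to route the wall paths that realize the edges of $H\setminus X$ without colliding; the $(a_{\cal F}^2+k)s_{\cal F}$ inside $\funref{label_straightforward}$ guarantees the internal region is deep enough for those routings to avoid $D(W)$; and $\funref{label_guildenstern}$ is then set large enough to host everything.
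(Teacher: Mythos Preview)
This proposition is not proved in the present paper; it is quoted from the companion article~\cite{SauST21kapiI}, so there is no in-paper argument to compare against. Your overall plan---model the apices of a minimum-apex obstruction $H\in{\cal F}$ by $a_{\cal F}$ vertices of $A'\setminus S$ and the planar remainder $H\setminus X$ inside the flat wall---is the right one, but the write-up has a concrete gap and an under-justified step.

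The gap: you never realise the edges of $H$ between two apex vertices. Your model explicitly covers planar--planar edges (wall paths) and apex--planar edges (the $v_i$-to-$b_j$ adjacencies supplied by the type condition), but if $x_ix_j\in E(H)$ with $x_i,x_j\in X$---which certainly occurs, e.g.\ whenever some $K_t\in{\cal F}$---the branch sets of $x_i$ and $x_j$ must touch, and there is no reason for $v_iv_j\in E(G)$. One has to enlarge each apex branch set beyond the singleton $\{v_i\}$ into the wall and route the at most $\binom{a_{\cal F}}{2}$ apex--apex connections there; this is precisely what the $a_{\cal F}^{2}$ term in $\funref{label_straightforward}$ pays for, and it must be spelled out. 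Secondly, your extraction step asks for a bag of type $\supseteq T_j$ at a \emph{prescribed} grid position inside one fixed $\Theta(s_{\cal F})$-subwall, but the hypothesis only lower-bounds the \emph{number} of deep bags each $v_i$ hits, not their spatial distribution: the neighbour-bags of different $v_i$'s may cluster in disjoint regions of $W$, and then no common small subwall carries the required super-types at the required positions. A pigeonhole over the $2^{a_{\cal F}}$ types gives many bags of some \emph{popular} type, not bags of the \emph{specific} $T_j$'s at compatible places. What makes the argument go through is rather to contract the bags to a grid with $v_1,\dots,v_{a_{\cal F}}$ as genuine apices and then apply an $a_{\cal F}$-apex analogue of \autoref{label_dvflptoffous} (the single-apex instance is exactly \autoref{label_persuadieran}) to force a $\Theta(s_{\cal F})$-subgrid whose every cell is linked to every $v_i$; once that is secured, all the type constraints and the missing apex--apex edges come for free.
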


The next result is also proved in \cite{SauST21kapiI} and
intuitively states that, given a flatness pair $(W,\frR)$ of ``big enough'' height and a $(W,\frR)$-canonical partition $\tilde{\cal Q}$ of $G,$ we can find a ``packing'' of subwalls of $W$ that are inside some central part of $W$ and that the vertex set of every internal bag of $\tilde{\cal Q}$ intersects the vertices of the flaps in the influence of at most one of these walls.
We will use this result in the case where the set $A'$ of \autoref{label_disviluppato} is ``small'', i.e., there are only ``few'' vertices in $A$ that have ``big enough'' degree with respect to the central part of the canonical partition, and therefore  \autoref{label_disviluppato} cannot justify branching.
Following the latter condition and \autoref{label_stereotypical}, we will be able to find a flatness pair with ``few'' apices so as to build irrelevant vertex arguments inside its compass.

\begin{proposition}\label{label_stereotypical}
	There exists a function $\newfun{label_internalization}: \Bbb{N}^3 \to \Bbb{N}$ such that if $p,z\in\Bbb{N}_{\geq 1},$ $x\in\Bbb{N}_{\geq 3}$ is an odd integer, $G$ is a graph, $(W,\frR)$ is a flatness pair of $G$ of height at least $\funref{label_internalization}(z,x,p),$ and $\tilde{\cal Q}$ is a $(W,\frR)$-canonical partition of $G,$ then
	there is a collection ${\cal W}=\{W_1, \ldots, W_z\}$ of $x$-subwalls of $W$ such that
	\begin{itemize}
		\item for every $i \in [z],$ $\cupall{\sf influence}_{\frR}(W_i)$ is a subgraph of $\cupall \{Q\mid Q \text{ is a $p$-internal bag of }\tilde{\cal Q}\}$ and

		\item for every $i,j\in[z],$ with $i\neq j,$ there is no internal bag of $\tilde{\cal Q}$ that contains vertices of both $V(\cupall{\sf influence}_\frR (W_i))$ and $V(\cupall{\sf influence}_\frR (W_j)).$
	\end{itemize}
	Moreover, $\funref{label_internalization}(z,x,p)= {\cal O}(\sqrt{z}\cdot x + p).$
\end{proposition}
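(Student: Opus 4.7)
The plan is to combine a central-layer argument (handling the $p$-internality requirement) with a $\sqrt{z}\times\sqrt{z}$ grid packing of pairwise well-separated candidate subwalls of $W$. The target bound $\funref{label_internalization}(z,x,p) = {\cal O}(\sqrt{z}\cdot x + p)$ suggests exactly this: the $\sqrt{z}\cdot x$ summand is what a $\sqrt{z}\times\sqrt{z}$ grid of $x$-subwalls costs in one wall-dimension, while the $p$ summand pays for steering every chosen subwall safely away from the $p$ outermost layers of $W$.

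First, I would pass from $W$ to a central subwall $W^\star := W^{(q)}$ with $q$ equal to $r - 2p$ rounded down to an odd integer and decreased by an absolute constant. Any subwall of $W^\star$ then lies strictly in the interior layers of $W$, so every cell of $\frR$ that is perimetric or internal with respect to such a subwall corresponds to a flap whose base sits at wall-depth exceeding $p$ in $W$. Under the iterative absorption process defining $\tilde{\cal Q}$, such a flap must end up in a bag $\tilde{Q}^{(i,j)}$ whose wall-part $Q^{(i,j)}\in {\cal Q}$ is itself $p$-internal, which will give condition (i) of the proposition for every subwall chosen inside $W^\star$.

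Next, I would produce the $z$ subwalls by tiling $W^\star$ with a $\lceil\sqrt{z}\rceil \times \lceil\sqrt{z}\rceil$ grid of pairwise disjoint $x$-subwalls $W_1, \ldots, W_z$, separated both horizontally and vertically by a constant gap: a buffer of a fixed number $g$ of wall-bricks between consecutive tiles. This tiling fits inside $W^\star$ as soon as $q \geq \lceil\sqrt{z}\rceil \cdot (x+g) + {\cal O}(1)$, which is arranged by choosing the constants hidden in $\funref{label_internalization}$ appropriately; after discarding surplus tiles we keep exactly $z$ of them.

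The main obstacle, and the heart of the proof, is showing that the buffer $g$ can be taken to be an absolute constant while still enforcing condition (ii), i.e. that no internal bag of $\tilde{\cal Q}$ can meet $V(\cupall{\sf influence}_\frR (W_i))$ for two distinct indices $i$. The key structural fact is that each flap $F \in {\sf flaps}_\frR(W)$ is a connected subgraph of ${\sf compass}_\frR(W)$ whose only attachments to the rest of the compass are the at most three vertices of its base $\partial F \subseteq \pi(N(\Gamma))$. As a result, during the absorption procedure, every non-base, non-wall vertex of $F$ can only reach already-formed bags via chains of $F$-edges leading to base vertices of $F$, so the entire non-base non-wall portion of $F$ collapses into a single bag $\tilde{Q}^{(i^*, j^*)}$ whose wall-piece $Q^{(i^*, j^*)}$ meets $\partial F$. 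In the painting, $\partial F$ lies on the boundary of the cell $c$ with $\sigma(c)=F$, and this cell is incident to an absolute-constant number of bricks of $W$ around it. Hence the set of bricks whose bag can absorb a given flap is contained in a constant-radius neighbourhood of that flap's position in $W$, and conversely each internal bag $\tilde{Q}^{(i,j)}$ only absorbs flaps from a constant-size brick-neighbourhood of brick $(i,j)$. Choosing $g$ to strictly exceed twice this constant guarantees that distinct subwalls $W_i$ and $W_j$ in the packing have their influences absorbed into disjoint collections of bags of $\tilde{\cal Q}$, yielding condition (ii). Summing the required dimensions, the total height of $W$ is $2p + \lceil\sqrt{z}\rceil \cdot (x + g) + {\cal O}(1) = {\cal O}(\sqrt{z} \cdot x + p)$, matching the stated asymptotics.
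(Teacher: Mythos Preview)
This proposition is not proved in the present paper; it is imported from the companion article~\cite{SauST21kapiI}, and here only an informal description is offered. Your overall plan --- pass to the central subwall $W^{(r-2p-{\cal O}(1))}$ and pack a $\lceil\sqrt{z}\rceil \times \lceil\sqrt{z}\rceil$ array of $x$-subwalls separated by a constant-width buffer --- matches that description and the target bound ${\cal O}(\sqrt{z}\cdot x + p)$, and is the natural approach.

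Where your argument is not yet a proof is the locality step. First, flaps need not be connected (nothing in the definition of a rendition imposes this), and a flap $\sigma(c)$ may contain wall vertices \emph{other} than those of its base $\partial\sigma(c)$; hence the bag into which a given non-wall flap-vertex is absorbed is governed by whichever vertex of $V(\sigma(c))\cap V(W)$ it reaches during the absorption, not just by $\partial F$. Your claim that ``the entire non-base non-wall portion of $F$ collapses into a single bag whose wall-piece meets $\partial F$'' therefore does not follow as stated. Second, and more importantly, the sentence ``this cell is incident to an absolute-constant number of bricks of $W$ around it'' is the crux of the whole argument and is asserted without justification. To establish it one has to use the $\frR$-normality of the bricks (or of suitable subwalls) of $W$ and the disks $\Delta_B$ they determine in the painting $\Gamma$: these brick-disks tile $\Delta$, each cell $c$ meets the closures of only ${\cal O}(1)$ of them, and consequently the wall vertices lying in any single flap belong to ${\cal O}(1)$ parts of ${\cal Q}$. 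Once that locality is pinned down, a connected bag $\tilde{Q}\in\tilde{\cal Q}$ (which contains no wall vertex outside its own $Q\in{\cal Q}$) cannot reach flap-vertices beyond a constant brick-radius, and your constant-buffer packing yields both required properties with the stated bound; without it, nothing rules out a single internal bag of $\tilde{\cal Q}$ reaching arbitrarily far across the wall.
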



\section{The general  algorithm}
\label{label_pizzighettona}
In this section we present the general algorithm for  \mnb{\sc  ${\cal F}$-M-Deletion}.
The existence of this algorithm proves \autoref{label_preliminaries}.
In \autoref{label_physiologically}, we explain how to employ the {\sl iterative compression technique} so as to ask for an algorithm for a new, more convenient to solve, problem and, in \autoref{label_particularidad}, we develop an algorithm for this new problem.

\subsection{Iterative compression}\label{label_physiologically}

In order to prove \autoref{label_preliminaries}, we apply the  iterative compression technique (introduced in \cite{ReedSV04find};
see also~\cite{CyganFKLMPPS15para}) and we  give a $2^{{\sf poly}(k)} \cdot n^2$-time algorithm for the following problem.

\begin{center}
	\fbox{
		\begin{minipage}{14.5cm}
			\noindent\mnb{\sc  ${\cal F}$-M-Deletion-Compression}\\
			\noindent\textbf{Input:}~~A graph $G,$ a $k\in\Bbb{N},$ and a set $S$ of size $k+1$ such that $G\setminus S\in {\bf exc}({\cal F}).$\\
			\textbf{Objective:}~~Find, if exists, a set $S'\subseteq V(G)$ of size at most $k$ such that $G\setminus S'\in {\bf exc}({\cal F}).$
		\end{minipage}
	}
\end{center}

In other words, given an input $(G,k,S)$ of \mnb{\sc  ${\cal F}$-M-Deletion-Compression}, we have at hand a graph $G$ and a ``slightly larger than $k$'' hitting set $S$, and we aim to find a hitting set of size at most $k,$ that is a certificate that $(G,k)$ is a \yes-instance of \mnb{\sc ${\cal F}$-M-Deletion}.
Given this set $S,$ we can directly assume that $G\setminus S$ does not contain a big clique as a minor and therefore we can deal with this minor-free graph, and thus, due to \autoref{label_dishonorable}, we can obtain either a tree decomposition of $G$ of ``small'' width (and solve the problem using the dynamic programming algorithm of \cite{BasteST20acom}), or a flat wall on top of which we build our branching and irrelevant vertex technique arguments.
In this way, we manage to avoid the ``big clique'' possible output of \autoref{label_dishonorable}.
However, this swifting from  \mnb{\sc  ${\cal F}$-M-Deletion} to \mnb{\sc  ${\cal F}$-M-Deletion-Compression} comes together with an extra linear factor in the running time of the algorithm, as observed in the following (see~\cite{CyganFKLMPPS15para}).

\begin{observation}
	\label{label_correspondas}
	If there is an algorithm solving \mnb{\sc  ${\cal F}$-M-Deletion-Compression} in  $f(k)\cdot n^{c}$-time, then there exists an algorithm solving \mnb{\sc  ${\cal F}$-M-Deletion} in  ${\cal O}(f(k)\cdot n^{c+1})$-time.
\end{observation}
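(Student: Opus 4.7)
The plan is to use the standard iterative compression scheme in a straightforward way, processing vertices one at a time. First I would fix an arbitrary ordering $v_1, v_2, \ldots, v_n$ of $V(G)$ and, for each $i \in [n]$, define $G_i := G[\{v_1,\ldots,v_i\}]$. I would maintain the invariant that after processing $v_i$, either the algorithm has reported \no{} (and halted) or it has produced a set $S_i \subseteq V(G_i)$ of size at most $k$ such that $G_i \setminus S_i \in {\bf exc}({\cal F})$. The base case $i=0$ is trivial with $S_0 = \emptyset$.

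For the inductive step, given $S_i$ with $|S_i|\leq k$, the set $S_i \cup \{v_{i+1}\}$ is clearly a hitting set for $G_{i+1}$ of size at most $k+1$. If $|S_i| < k$, I would simply set $S_{i+1} := S_i \cup \{v_{i+1}\}$, which has size at most $k$ and satisfies $G_{i+1} \setminus S_{i+1} = G_i \setminus S_i \in {\bf exc}({\cal F})$. Otherwise, $|S_i \cup \{v_{i+1}\}| = k+1$ and I would invoke the assumed algorithm for \mnb{\sc ${\cal F}$-M-Deletion-Compression} on the instance $(G_{i+1},k,S_i\cup\{v_{i+1}\})$, taking its output as $S_{i+1}$ when it succeeds, and reporting \no{} otherwise. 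After processing all $n$ vertices, $S_n$ is the desired solution to the original instance $(G,k)$.

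The correctness of the rejection step is the only point that needs a short argument: if the compression algorithm asserts that no hitting set of size $\leq k$ exists for $G_{i+1}$, then no such set can exist for $G$ either. Indeed, since ${\cal F}$ is a collection of graphs and ${\bf exc}({\cal F})$ is closed under minors (hence under subgraphs), any hypothetical solution $S\subseteq V(G)$ with $|S|\leq k$ would induce, upon restriction to $V(G_{i+1})$, a solution $S\cap V(G_{i+1})$ of size at most $k$ certifying that $G_{i+1}\in {\cal A}_k({\bf exc}({\cal F}))$, a contradiction.

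For the running time, there are $n$ iterations, and in each one either a constant-time extension is performed or a single call to the compression algorithm is made, costing $f(k)\cdot n^c$ time. The total is thus ${\cal O}(f(k)\cdot n^{c+1})$, as claimed. I do not anticipate a genuine obstacle here: the argument is the classical iterative compression template of Reed, Smith, and Vetta~\cite{ReedSV04find}, and the only structural fact about the problem that is used is that membership in ${\bf exc}({\cal F})$ is preserved when taking induced subgraphs, which is immediate from minor-closedness.
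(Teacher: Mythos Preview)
Your argument is correct and is exactly the standard iterative compression template of Reed, Smith, and Vetta; the paper itself does not spell out a proof of this observation but simply refers to~\cite{ReedSV04find,CyganFKLMPPS15para}, so your write-up matches the intended (and only reasonable) approach.
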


In \autoref{label_particularidad} we prove that \mnb{\sc  ${\cal F}$-M-Deletion-Compression} can be solved in $2^{{\sf poly}(k)} \cdot n^2$-time  (\autoref{label_countenances}). This
along with \autoref{label_correspondas} yield \autoref{label_preliminaries}.

\subsection{The algorithm}\label{label_particularidad}
In this subsection we present the algorithm solving \mnb{\sc  ${\cal F}$-M-Deletion-Compression}.

We set  $\tilde{c}_{a,\ell}
	:=\funref{label_nompareilles}(a,\funref{label_preoccupations}(a,\ell)) = 2^{2^{{\cal O}((a+\ell)\cdot\log(a+\ell))}},$ where $\funref{label_nompareilles}$ is the number of different folios given in \autoref{label_surprendront} and $\funref{label_preoccupations}$ is the function given in \autoref{label_perspicacity}, in order to find an irrelevant vertex.

\begin{lemma}
	\label{label_countenances}
	Let ${\cal F}$ be a finite collection of graphs.
	There is an algorithm solving  \mnb{\sc  ${\cal F}$-M-Deletion-Compression}  in $2^{{\cal O}_s (k^{2\cdot ({{c} + 2})})}\cdot n^2$-time, where $a=a_{\cal F},$ $s = s_{\cal F},$ $\ell = \ell_{\cal F}$, and ${c}=\tilde{c}_{a, \ell}.$
\end{lemma}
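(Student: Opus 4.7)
The plan is a recursive algorithm that, at each call, invokes the flat-wall machinery on $G\setminus S$ and then either branches on a small set of candidate solution vertices (shrinking $k$) or identifies an irrelevant vertex (shrinking $n$).

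\smallskip
\noindent\emph{Step 1 (apply \autoref{label_dishonorable} to $G\setminus S$).} Invoke \autoref{label_dishonorable} on $G\setminus S$ with $t:=s_{\cal F}$ and wall-height parameter $r_0=\Theta(k^{c+2})$. Since $G\setminus S\in{\bf exc}({\cal F})$ excludes $K_{s_{\cal F}}$ as a minor, the first outcome cannot occur. If a tree decomposition of $G\setminus S$ of width $\funref{label_inconsummabile}(s_{\cal F})\cdot r_0={\sf poly}(k)$ is returned, we enlarge every bag with the $k+1$ vertices of $S$ to obtain a tree decomposition of $G$ of width still ${\sf poly}(k)$, and solve directly via \autoref{label_confiscating} in $2^{{\sf poly}(k)}\cdot n$ time. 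Otherwise we obtain an apex set $A\subseteq V(G\setminus S)$ with $|A|\leq\funref{label_schematization}(s_{\cal F})={\cal O}_{s}(1)$ together with a regular flatness pair $(W,{\frak R})$ of $G\setminus(A\cup S)$ of height $r_0$.

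\smallskip
\noindent\emph{Step 2 (branch or localize via the canonical partition).} Set $A^{\star}:=A\cup S$, of size ${\cal O}_{s}(k)$, and compute a $(W,{\frak R})$-canonical partition $\tilde{\cal Q}$ of $G\setminus A^{\star}$. Let $A'\subseteq A^{\star}$ be the vertices adjacent, in $G$, to at least $\funref{label_unterschieden}(a_{\cal F},s_{\cal F},k)$ distinct $\funref{label_straightforward}(a_{\cal F},s_{\cal F},k)$-internal bags of $\tilde{\cal Q}$. If $|A'|\geq a_{\cal F}$, then by \autoref{label_disviluppato} every solution intersects $A'$; restrict $A'$ to any $a_{\cal F}$-subset and branch over its $a_{\cal F}$ members, recursing on $(G\setminus v,k-1)$ with a new $k$-sized compression hitting set maintained by standard iterative-compression bookkeeping on $S$. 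If $|A'|<a_{\cal F}$, apply \autoref{label_stereotypical} to carve $W$ into a packing of $z\gtrsim |A^{\star}|\cdot\funref{label_unterschieden}/a_{\cal F}=\Theta(k^{4})$ pairwise flap-disjoint $x$-subwalls, where $x=\Theta(k^{c})$ is chosen so that homogenizing any such subwall via \autoref{label_highlighting} with apex set of size ${\cal O}_{\cal F}(1)$ yields a wall of height $\funref{label_objectivement}({\cal O}_{\cal F}(1),\ell,3,k)={\cal O}_{\cal F}(k)$. Each ``light'' vertex of $A^{\star}\setminus A'$ reaches the compass of at most $\funref{label_unterschieden}$ of these subwalls, while $|A'|<a_{\cal F}$; averaging over the $z$ subwalls produces one, say $W_{i}$, whose compass is incident in $G$ to at most ${\cal O}_{\cal F}(1)$ vertices of $A^{\star}$; call this set $A_{i}$. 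Applying \autoref{label_protectively} yields a $W_{i}$-tilt that, since $A^{\star}\setminus A_{i}$ has no neighbor in its compass, can legitimately be viewed as a regular flatness pair of $G\setminus A_{i}$.

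\smallskip
\noindent\emph{Step 3 (irrelevant vertex).} Since $|A_{i}|={\cal O}_{\cal F}(1)$, homogenizing with $\tilde a=a=|A_{i}|$ involves palette-variety at most $c=\tilde c_{a_{\cal F},\ell_{\cal F}}$. Apply \autoref{label_highlighting} to produce a $\funref{label_preoccupations}(|A_{i}|,\ell)$-homogeneous flatness pair of height $\funref{label_objectivement}(|A_{i}|,\ell,3,k)={\cal O}_{\cal F}(k)$, then invoke \autoref{label_mitinbegriffen} to extract an irrelevant vertex $v$. Recurse on $(G\setminus v,k,S)$, observing that $S$ still hits ${\cal F}$-minors in $G\setminus v$.

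\smallskip
\noindent\emph{Running time and the main obstacle.} The branching tree has depth at most $k$ and branching factor $a_{\cal F}$, so it has $a_{\cal F}^{k}=2^{{\cal O}_{\cal F}(k)}$ leaves; along each root-to-leaf path the irrelevant-vertex step fires at most $n$ times. Each invocation of \autoref{label_dishonorable} with $r_{0}=\Theta(k^{c+2})$ costs $2^{{\cal O}_{s}(r_{0}^{2})}\cdot n=2^{{\cal O}_{s}(k^{2(c+2)})}\cdot n$, dominating the ensuing homogenization and irrelevant-vertex extraction (both $2^{{\sf poly}(k)}\cdot(n+m)$); summing yields the promised $2^{{\cal O}_{s}(k^{2(c+2)})}\cdot n^{2}$. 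The hard part is Step 2: the averaging argument must simultaneously guarantee that the ``effective apex'' $A_{i}$ of the lucky subwall has size depending only on ${\cal F}$ (so that the palette-variety in Step 3 is the ${\cal F}$-dependent constant $c$, not a $k$-dependent quantity) and that the restricted flatness pair can legitimately be regarded as one for $G\setminus A_{i}$ rather than for $G\setminus A^{\star}$; this is exactly what flap-disjointness of the packing from \autoref{label_stereotypical} ensures.
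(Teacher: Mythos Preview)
Your proof follows essentially the same route as the paper's: invoke \autoref{label_dishonorable} on $G\setminus S$, branch via \autoref{label_disviluppato} when there are at least $a_{\cal F}$ heavy apices, and otherwise use the packing of \autoref{label_stereotypical} to isolate a subwall with a small effective apex set, homogenize via \autoref{label_highlighting}, and extract an irrelevant vertex via \autoref{label_mitinbegriffen}. One small tightening is needed: in Step~2 replace ``averaging'' by pigeonhole and take $z>|(A\cup S)\setminus A'|\cdot\funref{label_unterschieden}$ (drop the $/a_{\cal F}$), so that some subwall has \emph{zero} light incidence and hence $A_i\subseteq A'$ with $|A_i|<a_{\cal F}$ exactly --- this is what makes the palette-variety equal to the stated $c=\tilde c_{a_{\cal F},\ell}$ rather than a (still ${\cal F}$-bounded but) larger constant.
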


\begin{proof}
	For simplicity, in this proof,  we use $c$ instead of $\tilde{c}_{a_{\cal F}, \ell_{\cal F}},$
	$s$ instead of $s_{\cal F},$ $\ell$ instead of $\ell_{\cal F},$ $a$ instead of $a_{\cal F},$ and recall that $\ell={\cal O}(s^2)$ and ${a\leq s}.$
	Also, we set
	\begin{align*}
		z=         & \ \funref{label_objectivement}(a-1,\ell, 3, k),                     &
		d =        & \ \funref{label_preoccupations}(a,\ell),                               &
		b=         & \ \funref{label_distinguimos}(z,a,a,d)={\cal O}_{s_{\cal F}} (k^{c}),   \\
		m=         & \ \funref{label_guildenstern}(a,s,k),                                &
		x=         & \ \funref{label_unterschieden}(a,s,k),                                 &
		l =        & \  (\funref{label_schematization}(s)+k+1)\cdot x,                         \\
		p=         & \ \funref{label_straightforward}(a,s,k),                                 &
		h=         & \ \funref{label_internalization}(l+1,b,p),                              &
		\mbox{and} & \ r=  {\sf odd}(\max\{m,h\})= {\cal O}_s (k^{ c + 2}).
	\end{align*}
	We present the algorithm  {\tt Solve-Compression}, whose input is  a quadruple $(G,k',k,S)$
	where $G$ is a graph, $k'$ and $k$ are  non-negative integers with $k'\leq k,$
	and $S$ is a subset of $V(G)$ such that $|S|= k$ and $G\setminus S\in {\bf exc}({\cal F}).$
	The algorithm returns, if it exists, a solution for \mnb{\sc  ${\cal F}$-M-Deletion} on $(G,k').$
	Certainly,  we may assume that $k'<k,$ otherwise $S$ is already a solution and we are done.
	The steps of the algorithm are the following:

	\paragraph{Step 1.}
	Run the algorithm of \autoref{label_dishonorable} with input $(G\setminus S, r, s).$
	Since $G\setminus S\in{\bf exc}({\cal F})$ and  ${\cal F}\leq_{\sf m} K_{s},$ the algorithm outputs, {in
	time $2^{{\cal O}_s (r^{2})}\cdot n=2^{{\cal O}_s (k^{2\cdot ({c + 2})})}\cdot n,$}
	either a tree decomposition of $G\setminus S$
	of width at most at most $\funref{label_inconsummabile}(s)\cdot r,$
	or
	a set $A\subseteq V(G)$  with $|A|\leq \funref{label_schematization}(s)$ and a regular flatness pair $(W,\frak{R})$ of $G\setminus A$ of height $r.$
	In the first case, we solve \mnb{\sc  ${\cal F}$-M-Deletion-Compression}
	{in time $2^{{\cal O}_s (r \log r)}\cdot n = 2^{{\cal O}_s (k^{ c + 2}\log k)}\cdot n$}
	using the algorithm of \autoref{label_confiscating}.
	In what follows we examine the second case, where the algorithm of \autoref{label_dishonorable} outputs
	a set $A\subseteq V(G)$  with $|A|\leq \funref{label_schematization}(s)$ and a regular flatness pair $(W,\frak{R})$ of $G\setminus A$ of height $r.$

	We consider a $(W,\frR)$-canonical partition $\tilde{\cal Q}$ of $G\setminus (S\cup A).$
	We compute, in ${\cal O}(n)$-time, the set $$A^\star = \{v\in S\cup A \mid  v\text{ is adjacent, in }G,\text{ to vertices of at least $x$ $p$-internal bags of }\tilde{\cal Q}\}$$ and we proceed to the second step.

	\paragraph{Step 2.} The algorithm examines two cases depending on the size of the $A^\star.$ In the first case, the {\sl branching case}, the outcome  is a set of vertices, the set $S\cup A,$ that should intersect every possible solution. In the second case, the {\sl irrelevant vertex case}, the outcome is an irrelevant vertex.
	\bigskip

	\noindent[{\sl Branching case}]. It holds that $|A^\star|\geq a.$
	{In this case the algorithm recursively calls {\tt Solve-Compression}
			with input $(G\setminus x,k'-1,|S\setminus x|,S\setminus x)$ for every
			$x\in A^\star,$ and if one of these new instances is a \yes-instance, certified by a set $\bar{S},$
			then returns $\bar{S}\cup\{x\},$ otherwise it returns that $(G,k')$ is a \no-instance.}\medskip

	\autoref{label_disviluppato} implies that the above branching step of the algorithm is correct.

	\bigskip

	\noindent[{\sl Irrelevant vertex case}]. It holds that $|A^\star|< a.$
	%
	We consider a family ${\cal W}=\{{W}_{1}, \ldots, {W}_{l+1}\}$ of $l+1$ $b$-subwalls of $W$ such that for every $i \in [l+1],$ $\cupall{\sf influence}_{\frR}(W_i)$ is a subgraph of $\cupall \{Q\mid Q \text{ is a $p$-internal bag of }\tilde{\cal Q}\}$ and for every $i,j\in[l+1],$ where $i\neq j,$ there is no internal bag $Q\in \tilde{\cal Q}$
	that contains vertices of both $V(\cupall{\sf influence}_\frR (W_i))$ and
	$V(\cupall{\sf influence}_\frR (W_j)).$
	The existence of ${\cal W}$ follows from the fact that $r\geq h = \funref{label_internalization}(l+1,b,p)$ and \autoref{label_stereotypical}.

	Notice that
	the vertices in $(S\cup A)\setminus A^\star$
	are adjacent, in $G,$ to vertices of  at most
	$x\cdot |(S\cup A)\setminus A^\star|\leq x\cdot (\funref{label_schematization}(s)+k+1)= l$
	$p$-internal bags of $\tilde{\cal Q}.$
	Hence, taking into account the aforementioned properties of the walls $W_1,\ldots, W_{l+1},$
	there exists an $i\in [l+1]$ such that no
	vertex in $(S\cup A)\setminus A^\star$ is
	adjacent to vertices of $V(\cupall{\sf influence}_{\frak{R}}({W}_{i})).$
	In other words, if there exists a vertex $v\in V(\cupall{\sf influence}_{\frak{R}}({W}_{i}))$
	that is adjacent, in $G,$ to a vertex $u\in S\cup A,$ then $u\in A^\star.$
	The fact that $|A^{\star}|< a$ implies that, for this $i,$ there are less than $a$ vertices in $S\cup A$
	that are adjacent to vertices of $V(\cupall{\sf influence}_{\frak{R}}({W}_{i})).$

	Since ${W}_{i}$ is a $b$-subwall of $W$ and $(W,\frR)$ is a flatness pair of $G\setminus (S\cup A),$
	we apply the algorithm of \autoref{label_protectively}, and obtain, in linear time,
	a flatness pair $(\tilde{W}_{i},\tilde{\frR}_{i})$ of $G\setminus (S\cup A)$ that is a ${W}_i$-tilt of $(W,\frR).$
	Notice that since $(\tilde{W}_{i},\tilde{\frR}_{i})$ is a ${W}_i$-tilt of $(W,\frR),$
	${\sf compass}_{\tilde{\frR}_{i}}(\tilde{W}_{i})$ is a subgraph of $\cupall{\sf influence}_{{\frak{R}}}(\bar{W}_i)$ and, due to \autoref{label_ressemblances}, $(\tilde{W}_{i},\tilde{\frR}_{i})$ is regular.
	This	implies that, if $A_i$ is the set of vertices from $S\cup A$ that
	are adjacent to vertices of ${\sf compass}_{\tilde{\frR}_{i}}(\tilde{W}_{i})$ in $G,$ then $A_i\subseteq A^\star$ and therefore $|A_i| < a.$
	Notice that, by adding the vertices of $(S\cup A)\setminus A_i$ to $G\setminus (S\cup A),$
	we obtain a flatness pair $(\tilde{W}_i, \tilde{\frR}_i ')$ of $G\setminus A_i$ such that ${\sf compass}_{\tilde{\frR}_{i}}(\tilde{W}_{i})={\sf compass}_{\tilde{\frR}_i '}(\tilde{W}_i).$
	Applying the algorithm of \autoref{label_highlighting} for $(b,a,a,d, G,A_i, \tilde{W}_i,\tilde{\frR}_i '),$ we obtain a flatness pair  $(\breve{W}_{i},\breve{\frR}_{i})$  of $G\setminus A_i$ of height $z$ that is  $d$-homogeneous with respect to $2^{A_i}$ and is a $\tilde{W}_i '$-tilt of $(\tilde{W}_i,\tilde{\frR}_i)$ for some subwall $\tilde{W}_i '$ of $\tilde{W}_i.$ Due to \autoref{label_ressemblances},  $(\breve{W}_{i},\breve{\frR}_{i})$ is also regular.
	This algorithm runs in $2^{{\cal O}_s (k\log k)}\cdot n$-time.

	We now apply {\tt Find-Irrelevant-Vertex} of \autoref{label_mitinbegriffen}  for  $(k,a,G,A_i,\breve{W}_{i},\breve{\frR}_{i})$
	and obtain a vertex $v$
	such that $(G,k)$ and $(G\setminus v,k)$
	are equivalent instances of \mnb{\sc  ${\cal F}$-M-Deletion}.
	According to \autoref{label_mitinbegriffen}, this vertex can be detected in linear time,
	and {the algorithm correctly  calls recursively {\tt Solve-Compression}
			with input $(G\setminus v,k',k,S).$} This completes the {\sl irrelevant vertex case}.
	\medskip

	Recall that $|S\cup A|\leq k+1+ \funref{label_schematization}(s)={\cal O}_{s}(k).$ Therefore,
	if $T(n,k',k)$ is the running time of the above algorithm, then
	$$T(n,k',k)\ \leq\ 2^{{\cal O}_s (k^{2\cdot ({ c + 2})})}\cdot n+\max\{T(n-1,k',k),{\cal O}_{s}(k)\cdot  T(n,k-1,k)\}$$
	that, given that $k'\leq k,$ implies that $T(n,k',k)=2^{{\cal O}_s (k^{2\cdot ({c + 2})})}\cdot n^2.$

	Notice now that the output of {\tt Solve-Compression} on $(G,k,k+1,S)$
	gives a solution for \mnb{\sc  ${\cal F}$-M-Deletion-Compression}
	on this instance.
\end{proof}

\section{The apex-minor free case}
\label{label_demonstrieren}


In this section we present an improved algorithm solving \mnb{\sc  ${\cal F}$-M-Deletion}  in the case where $a_{\cal F}=1.$
The existence of this algorithm proves \autoref{label_entrepreneur}.
In \autoref{label_condamnaient}, we show that a graph that contains a flat wall that is ``highly connected'' to a vertex in its apex set, also contains any apex graph as a minor.
In \autoref{label_unquestioning}, we provide an algorithm that will allow us to detect a wall inside a graph $G$ in linear time.
In \autoref{label_desvanecerse}, we provide the improved algorithm that solves \mnb{\sc  ${\cal F}$-M-Deletion}  in the case where $a_{\cal F}=1$ and, in \autoref{label_sleeplessness}, we prove its correctness.

\subsection{Finding an apex graph as a minor}
\label{label_condamnaient}

\paragraph{Grids.}
Let $k\in\Bbb{N}_{\geq 2}.$
We use the term $k$-{\em grid} to refer to the $(k\times k)$-grid. We say that a graph is a {\em partially triangulated $r$-grid} if it can be obtained from an $r$-grid after adding edges in such a way that the remaining graph remains planar.

Let  $k,r\in\Bbb{N}_{\geq 2}.$
A vertex of a $(k\times r)$-grid is called
	{\em internal} if it has degree four, and otherwise it is called {\em external}.
We define the {\em perimeter} of a $(k\times r)$-grid to be the unique cycle of the grid of length at least three that does not contain internal vertices.

\begin{figure}[h]
	\centering
	\includegraphics[width=3.5cm]{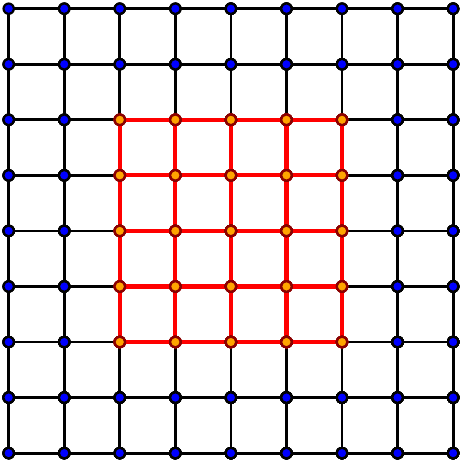}
	\caption{A 9-grid and its central 5-grid.}
	\label{label_unconciliatory}
\end{figure}

Let $r\in \Bbb{N}_{\geq 2}$ and $H$ be an $r$-grid.
Given an $i\in\lceil \frac{r}{2}\rceil,$ we define the {\em $i$-th layer} of $H$ recursively as follows.
The first layer of $H$ is its perimeter, while, if $i\geq 2,$ the $i$-th layer of $H$ is
the $(i-1)$-th layer of the grid created if we remove from $H$ its perimeter.
Given two  odd integers $r,q\in\Bbb{N}_{\geq 3}$ such that $q\leq r$ and an $r$-grid $H,$
we define the {\em central $q$-grid} of $H$ to be the graph obtained from $H$
if we remove from $H$ its $\frac{r-q}{2}$ first layers.
See \autoref{label_unconciliatory} for an illustration of the notions defined above.
Given a partially triangulated $r$-grid $H,$ we call {\em central $q$-grid} of $H$ the subgraph of $H$ induced by the vertices of the central $q$-grid of the underlying grid of $H.$

Given a graph $G$ and a vertex $v\in V(G),$ we say
that a graph $H$ is a {\em $v$-fixed contraction} of $G$ if $H$ can be obtained from $G$ after contracting edges that are not incident with $v.$
A graph $H$ is a {\em $v$-apex partially triangulated $r$-grid} if  it can be obtained from a partially triangulated $r$-grid $\Gamma$ after adding a new vertex $v$ and some edges between $v$ and vertices in $V(\Gamma).$
Α {\em complete $v$-apex partially triangulated $r$-grid} is a graph obtained from a $v$-apex partially triangulated $r$-grid by adding every edge between $v$ and the vertices of the grid.

The following result is a special case of~\cite[Lemma 29]{SauST21amor}.
\begin{proposition}\label{label_dvflptoffous}
	There exist three functions $\newfun{label_functionalist}, \newfun{label_entreteneros},$ and $\newfun{label_improvisamente}: \Bbb{N}\to \Bbb{N},$ such that if $r\in \Bbb{N},$
	$H$ is a $v$-apex partially triangulated $h$-grid, where $v\in V(H)$ and $h\geq \funref{label_functionalist}(r)+2\cdot \funref{label_improvisamente}(r),$
	and
	vertex $v$ has at least $\funref{label_entreteneros}(r)$ neighbors in the central
	$\funref{label_functionalist}(r)$-grid of $H\setminus \{v\},$
	then $H$ contains as a $v$-fixed contraction a complete $v$-apex partially triangulated $r$-grid.
	Moreover,  $\funref{label_functionalist}(r)= {\cal O}(r^5),$
	$\funref{label_entreteneros}(r)={\cal O}(r^{6}),$ and $\funref{label_improvisamente}(r)={\cal O}(r^{2}).$
\end{proposition}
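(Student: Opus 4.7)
The plan is to obtain the desired minor by a suitable block-contraction of the central $\funref{label_functionalist}(r)$-grid $\Gamma$ of $H\setminus\{v\}$. Let $N:=N_H(v)\cap V(\Gamma)$, so $|N|\geq\funref{label_entreteneros}(r)$, and observe that any partition of $V(\Gamma)$ into $r^2$ pairwise vertex-disjoint \emph{blocks}, each inducing a connected subgraph arranged in an $r\times r$ grid-like adjacency pattern, gives, after contracting each block to a single vertex, a graph that contains an $r$-grid as a spanning subgraph (possibly plus extra edges, yielding a partially triangulated $r$-grid). Since the contractions only involve edges of $\Gamma$, they are $v$-fixed. Moreover, if every block contains at least one vertex of $N$, then in the contracted graph $v$ is adjacent to every grid vertex, producing exactly a complete $v$-apex partially triangulated $r$-grid. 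So the entire task reduces to finding a partition into $r^2$ connected ``block-subgrids'' in which every block meets $N$.

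The natural candidate partition is the obvious $r\times r$ block decomposition of $\Gamma$, where each block is an axis-aligned subgrid of dimensions about $\funref{label_functionalist}(r)/r=\Theta(r^4)$. Here $|N|/r^2=\Theta(r^4)$, so the average block meets $N$ richly, but direct pigeonhole does not exclude empty blocks. This is precisely what the buffer $2\cdot\funref{label_improvisamente}(r)=\Theta(r^2)$ in the grid side-length $h$ is for: I would allow the $r-1$ horizontal cut positions (and independently the $r-1$ vertical cut positions) to be shifted, each within a window of width $\Theta(r^2)$ around the canonical position. This yields $\Theta(r^2)^{2(r-1)}$ candidate partitions, but the relevant object is the product of shifts of the $r-1$ horizontal and the $r-1$ vertical cuts, each treated independently.

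The key combinatorial step is a double counting/shifting argument over the $\Theta(r^2)^{2(r-1)}$ candidate partitions: for any block $B_{i,j}$ in the $r\times r$ grid of blocks, the set of shifts for which $B_{i,j}$ misses $N$ can be bounded above by a function of the ``missing weight'' in the relevant strip. Summing over $i,j$ and using $|N|=\Omega(r^6)$, one shows that the total count of ``bad shifts'' (those creating at least one empty block) is strictly less than the total number of candidate shifts, so a uniformly good partition exists. The buffer $\Theta(r^2)$ width is calibrated precisely so that this averaging works with $\Theta(r^6)$ neighbors and $r^2$ blocks.

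Finally, once a good partition is in hand, I would contract each block to a single vertex. Since each block is a connected induced subgrid with a neighbor of $v$, and blocks that are grid-adjacent share a grid edge between their boundaries, the contracted graph has an $r\times r$ grid on the block-vertices (possibly with additional chords, hence partially triangulated), together with $v$ adjacent to every one of them---exactly a complete $v$-apex partially triangulated $r$-grid. The main obstacle I expect is balancing the constants in the shifting argument so that $\funref{label_functionalist}(r)=\mathcal{O}(r^5)$, $\funref{label_entreteneros}(r)=\mathcal{O}(r^6)$, $\funref{label_improvisamente}(r)=\mathcal{O}(r^2)$ all fit together tightly; everything else is routine verification.
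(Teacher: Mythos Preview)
The paper does not prove this proposition; it is quoted as a special case of \cite[Lemma~29]{SauST21amor}. So there is no in-paper argument to compare against, and your proposal has to stand on its own.

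Your opening reduction is correct: it suffices to partition the vertices of the central grid (together with the buffer) into $r^{2}$ connected pieces arranged in an $r\times r$ grid pattern, each meeting $N:=N_{H}(v)\cap V(\Gamma)$, and then contract. The gap is in the shifting argument you propose for producing such a partition with \emph{axis-aligned} subgrid blocks. That argument cannot work for adversarially placed $N$. Concretely: place all $\Theta(r^{6})$ points of $N$ inside the bottom $\Theta(r)$ rows of the central $\Theta(r^{5})$-grid; this strip has $\Theta(r)\cdot\Theta(r^{5})=\Theta(r^{6})$ vertices, so it can absorb all of $N$. In any axis-aligned $r\times r$ block decomposition whose rows have height $\Theta(r^{4})\pm\Theta(r^{2})$, this strip lies entirely in the bottom one or two block-rows; every block in the remaining $r-2$ rows misses $N$ for \emph{every} admissible choice of cut positions. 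Hence no good shift exists at all, and your union bound is doomed: already a single block $B_{3,1}$ is empty for the full set of $\Theta(r^{2})^{2(r-1)}$ shifts, so the ``total bad shifts'' cannot be made smaller than the ``total shifts''. This is not a matter of tuning the constants in $\funref{label_functionalist},\funref{label_entreteneros},\funref{label_improvisamente}$; axis-aligned subgrid blocks are structurally too rigid to capture an $N$ concentrated in a low-dimensional slice.

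What is actually needed is to allow general connected branch sets rather than rectangular ones: first pick $r^{2}$ vertices of $N$ and then build a grid-minor model of an $r$-grid whose branch sets each contain one of the chosen vertices. The $\funref{label_improvisamente}(r)=\Theta(r^{2})$ layers surrounding the central grid serve as routing space for constructing such a model, not (as you suggest) as wiggle room for axis-aligned cut positions. The proof in \cite{SauST21amor} proceeds along these more flexible lines; your current plan does not reach that point.
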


The following easy observation intuitively states that every planar graph $H$ is a minor of a big enough grid, where the relationship between the size of the grid and $|V(H)|$ is linear (see e.g.,~\cite{RobertsonST94quic}).
\begin{proposition}\label{label_confederates}
	There is a function $\newfun{label_connotacions}:\Bbb{N}\to\Bbb{N}$ such that every planar graph on $n$ vertices is a minor of the
	$\funref{label_connotacions}(n)$-grid. Moreover, $\funref{label_connotacions}(n)={\cal O}(n).$
\end{proposition}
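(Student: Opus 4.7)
My plan is to reduce the statement to a standard planar grid-drawing result and then read off a minor model. Fix a planar embedding of $H$, and first reduce to the bounded-degree case: replace each vertex $v$ of degree $d_v$ by a small gadget (say a cycle of $d_v$ vertices of degree $3$, each inheriting one of the incident edges of $v$) inserted inside an infinitesimal disk around $v$. This yields a planar graph $H'$ of maximum degree $3$ on $\sum_{v\in V(H)} d_v = 2|E(H)| \leq 6n-12$ vertices, hence $|V(H')| = \mathcal{O}(n)$, and $H$ is a minor of $H'$ (contract each gadget back to a single vertex). It therefore suffices to exhibit $H'$ as a minor of the $\funref{label_connotacions}(\mathcal{O}(n))$-grid.

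For $H'$, I invoke a classical visibility (or orthogonal) representation theorem: every planar graph of bounded degree on $N$ vertices admits a drawing inside an $\mathcal{O}(N)\times\mathcal{O}(N)$ grid in which each vertex corresponds to a connected horizontal segment of grid vertices and each edge corresponds to an axis-parallel path of grid edges, with the segments pairwise disjoint and the edge-paths pairwise internally disjoint and meeting segments only at their endpoints. Such constructions are given, for instance, by Tamassia--Tollis' visibility representations or by Valiant's rectilinear layouts (the book of Nishizeki and Rahman, or~\cite{RobertsonST94quic}, collect such results). Reading the vertex-segments as branch sets and the edge-paths as connections between them, one obtains a model of $H'$ as a minor of an $\mathcal{O}(N) = \mathcal{O}(n)$-grid.

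Combining the two steps, $H$ is a minor of the $\funref{label_connotacions}(n)$-grid for some $\funref{label_connotacions}(n) = \mathcal{O}(n)$, as claimed. The only delicate point is the linear dependence in the grid-drawing theorem cited above: this is what excludes, for instance, straight-line drawings (which are also known to fit in $\mathcal{O}(n)\times\mathcal{O}(n)$ grids by de Fraysseix--Pach--Pollack or Schnyder, but whose edges do not automatically turn into edge-disjoint grid paths and therefore do not directly yield a minor model). Using a layout where edges are genuine axis-parallel grid paths circumvents this issue and is what makes the final $\mathcal{O}(n)$ bound come out for free.
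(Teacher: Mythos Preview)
The paper does not actually prove this proposition; it states it as an ``easy observation'' and defers to the literature (in particular~\cite{RobertsonST94quic}). Your argument is a correct and standard way to establish the result from first principles: reduce to bounded degree by blowing up each vertex into a small planar gadget, then invoke a linear-area orthogonal or visibility layout for bounded-degree planar graphs, whose vertex-segments and internally disjoint edge-paths give a topological-minor (hence minor) model in an $\mathcal{O}(n)\times\mathcal{O}(n)$ grid.

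One small technical correction: a ``cycle of $d_v$ vertices'' only makes sense for $d_v\geq 3$; for vertices of degree at most~$2$ simply leave them unchanged (or use a path of $\max\{1,d_v\}$ vertices throughout). Either way the total size remains $\mathcal{O}(n)$, since $\sum_v d_v = 2|E(H)| \leq 6n$. Your remark at the end is well taken: straight-line grid drawings (de~Fraysseix--Pach--Pollack, Schnyder) do \emph{not} directly produce a minor model, and it is precisely the use of a layout in which edges are genuine axis-parallel grid paths that makes the argument go through.
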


In the proof of  \autoref{label_entrepreneur}, we will need the following result.

\begin{lemma}
	\label{label_persuadieran}
	There exist three functions
	$\newfun{label_biancheggiava}, \newfun{label_disbnguished}, \newfun{label_accroissement}: \Bbb{N}\to \Bbb{N},$
	such that if  ${\cal F}$ is a finite set of graphs containing an apex graph,
	$G$ is a graph, $A$ is a subset of $V(G),$
	$(W,\frR)$ is a flatness pair of $G\setminus A$ of height at least $\funref{label_biancheggiava}(s_{\cal F}),$
	$\tilde{\cal Q}$ is a $(W,\frR)$-canonical partition of $G\setminus A,$ and there is a vertex in $A$ that is adjacent, in $G,$ to at least $\funref{label_disbnguished}(s_{\cal F})$
	$\funref{label_accroissement}(s_{\cal F})$-internal bags of $\tilde{\cal Q},$ then ${\cal F}\leq_{\sf m} G.$
\end{lemma}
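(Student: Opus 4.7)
The plan is to exhibit an apex graph from ${\cal F}$ as a minor of $G$, using the grid-like structure of the flat wall to provide the planar part and the high-degree vertex $v\in A$ to provide the apex. Let $H\in{\cal F}$ be an apex graph with apex $u$; then $H\setminus u$ is planar on at most $s_{\cal F}-1$ vertices, so by \autoref{label_confederates}, if we set $r:=\funref{label_connotacions}(s_{\cal F})$, the graph $H\setminus u$ is a minor of the $r$-grid, and hence $H$ is a minor of the \emph{complete} $v$-apex partially triangulated $r$-grid (with $v$ playing the role of $u$). By \autoref{label_dvflptoffous}, to produce such a complete $v$-apex $r$-grid as a minor of $G$, it suffices to exhibit a $v$-apex partially triangulated $h^\star$-grid (for $h^\star:=\funref{label_functionalist}(r)+2\funref{label_improvisamente}(r)={\cal O}(s_{\cal F}^5)$) as a $v$-fixed contraction, in which $v$ has at least $\funref{label_entreteneros}(r)={\cal O}(s_{\cal F}^6)$ neighbors in the central $\funref{label_functionalist}(r)$-grid.

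The plan is to extract this structure by contracting each internal bag of $\tilde{\cal Q}$ to a single vertex. Because $(W,\frR)$ is a flatness pair, the painting $\Gamma$ witnesses that the bricks of $W$ (and the flaps attached to them) embed in the plane, so after contracting the internal bags of $\tilde{\cal Q}$ (each of which induces a connected subgraph of $G\setminus A$) the result is a partially triangulated grid minor whose vertex set is indexed by the internal bags of $\tilde{\cal Q}$ and whose adjacencies come from the wall's bricks and the rendition's cells. Reintroducing $v$ together with its edges to $G$ produces, as a minor of $G$, a $v$-apex partially triangulated grid in which $v$ is adjacent to exactly those contracted bags it touched in $G$.

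Parameters are then calibrated so the hypotheses of \autoref{label_dvflptoffous} hold. First, set $\funref{label_disbnguished}(s_{\cal F}):=\funref{label_entreteneros}(r)$ so that $v$ has the required number of neighbors in the central region. Next, set $\funref{label_accroissement}(s_{\cal F})$ to be the number of layers one must peel off from $W$ so that every $\funref{label_accroissement}(s_{\cal F})$-internal bag, once contracted, lies inside the central $\funref{label_functionalist}(r)$-grid of the contracted partially triangulated grid; this is a linear-in-$r$ (hence ${\cal O}(s_{\cal F}^5)$) offset. Finally, take $\funref{label_biancheggiava}(s_{\cal F}):=\Theta(h^\star+\funref{label_accroissement}(s_{\cal F}))$ large enough that the subgraph of the canonical partition formed by the $\funref{label_accroissement}(s_{\cal F})$-internal bags still contains an $h^\star\times h^\star$ grid pattern. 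Applying \autoref{label_dvflptoffous} to the resulting $v$-apex partially triangulated grid produces the complete $v$-apex $r$-grid as a minor of $G$, and hence $H$, so ${\cal F}\leq_{\sf m}G$.

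The main obstacle I anticipate is making the ``contract the internal bags of $\tilde{\cal Q}$ to obtain a partially triangulated grid minor'' step fully rigorous: one has to verify that even with subdivisions of wall edges and the possibly non-trivial flaps attached to bricks, the adjacencies inherited after contraction are exactly those of a partially triangulated $(r-2)\times(r-2)$-grid structure, where ``partially triangulated'' accommodates the extra edges that the painting might induce between cells sharing corners. This is essentially built into the definition of the canonical partition and the planarity of the rendition, but needs a careful case analysis of corner-sharing cells; everything else is a matter of bookkeeping the parameters.
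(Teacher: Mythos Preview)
The proposal is correct and follows essentially the same approach as the paper: contract the bags of the $(W,\frR)$-canonical partition to obtain a partially triangulated $(h-2)$-grid, reinstate the high-degree vertex $v\in A$ as an apex over this grid, and then invoke \autoref{label_dvflptoffous} and \autoref{label_confederates} with the same calibration of parameters (the paper sets $r=\funref{label_connotacions}(s_{\cal F}-1)$, $\funref{label_accroissement}(s_{\cal F})=\funref{label_improvisamente}(r)$, $\funref{label_disbnguished}(s_{\cal F})=\funref{label_entreteneros}(r)$, and $\funref{label_biancheggiava}(s_{\cal F})=\funref{label_functionalist}(r)+2\funref{label_improvisamente}(r)+2$). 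The only cosmetic difference is that the paper also contracts the external bag to a single vertex $u_{\sf ext}$ and then merges it into the perimeter, whereas you speak only of contracting the internal bags; both yield the same partially triangulated grid minor, and your anticipated ``main obstacle'' is dispatched in the paper by a one-line appeal to planarity of the rendition.
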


\begin{proof}
	Let $\funref{label_functionalist}, \funref{label_entreteneros},$ and $\funref{label_improvisamente}$ be the functions of \autoref{label_dvflptoffous} and $\funref{label_connotacions}$ be the function of \autoref{label_confederates}.
	We set $r = \funref{label_connotacions}(s_{\cal F}-1),$
	$\funref{label_biancheggiava}(s_{\cal F}) = \funref{label_functionalist}(r) + 2\cdot \funref{label_improvisamente}(r) + 2,$
	$\funref{label_disbnguished}(s_{\cal F}) = \funref{label_entreteneros}(r),$ and
	$\funref{label_accroissement}(s_{\cal F}) = \funref{label_improvisamente}(r).$
	Let $G$ be a graph, $A\subseteq V(G),$ $(W,\frR)$ be a flatness pair of $G\setminus A$ of height $h,$ where $h\geq \funref{label_biancheggiava}(s_{\cal F}),$ $\tilde{\cal Q}$ be a $(W,\frR)$-canonical partition of $G\setminus A,$ and $v$ be a vertex in $A$ that is adjacent, in $G,$ to at least $\funref{label_disbnguished}(s_{\cal F})$
	$\funref{label_accroissement}(s_{\cal F})$-internal bags of $\tilde{\cal Q}.$

	We contract every bag in $\tilde{\cal Q}$ to a vertex.
	Observe that this results in a planar graph (since $(W,\frR)$ is a flatness pair) that is a partially triangulated $(h-2)$-grid $\bar{\Gamma}$ (whose vertices correspond to the internal bags of $\tilde{\cal Q}$) together with an extra vertex $u_{\sf ext}$ (which corresponds to the external bag of $\tilde{\cal Q}$) that is adjacent to all the vertices in the perimeter of $\bar{\Gamma}.$
	We contract an edge between $u_{\sf ext}$ and a vertex in the perimeter of $\bar{\Gamma}$
	and we denote by $\Gamma$ the obtained partially triangulated $(h-2)$-grid.
	We set $\Gamma^{+v}$ to be the graph obtained from $\Gamma$ by adding the vertex $v$ and the edges $\{v,u\},$ if $u$ is a vertex of $\Gamma$ that corresponds to a bag $Q\in \tilde{\cal Q}$ that contains a vertex adjacent, in $G$, to $v$.
	Notice that $\Gamma^{+v}$ is a  $v$-apex partially triangulated $(h-2)$-grid that is a minor of $G.$
	Moreover, observe that since $v$ is adjacent, in $G,$ to vertices of an $\funref{label_accroissement}(s_{\cal F})$-internal bag of $\tilde{\cal Q},$
	then, since $\funref{label_accroissement}(s_{\cal F})=\funref{label_improvisamente}(r)$ and $h-2\geq  \funref{label_functionalist}(r) + 2\cdot \funref{label_improvisamente}(r),$ vertex $v$ is also adjacent to a vertex in the central $\funref{label_accroissement}(s_{\cal F})$-grid of $\Gamma = \Gamma^{+v}\setminus \{v\}.$
	Thus, $v$ has at least $\funref{label_disbnguished}(s_{\cal F})$ neighbors  in the central  $\funref{label_accroissement}(s_{\cal F})$-grid of $\Gamma.$
	By \autoref{label_dvflptoffous}, $\Gamma^{+v}$ contains as a $v$-fixed contraction a complete $v$-apex $r$-grid and therefore, since $r = \funref{label_connotacions}(s_{\cal F}-1),$ by \autoref{label_confederates} every apex graph on at most $s_{\cal F}$ vertices is a minor of $G.$
	Thus, ${\cal F}\leq_{\sf m} G,$ and the lemma follows.
\end{proof}

\subsection{Quickly finding a wall}\label{label_unquestioning}
In this subsection we prove \autoref{label_improvements} that intuitively states that there is an algorithm that,
given a graph $G$ and two non-negative integers $r$ and $k,$  outputs either that
$(G,k)$ is a \no-instance of \mnb{\sc  ${\cal F}$-M-Deletion},
or a report that the treewidth of $G$ is polynomially bounded by $r$ and $k,$ or an $r$-wall of $G.$
Before stating \autoref{label_improvements}, we present the following result
of Kawarabayashi and Kobayashi \cite{KawarabayashiK20line},
which provides a {\sl linear} relation between the treewidth and the height of a largest wall in a minor-free graph.

\begin{proposition}\label{label_bureaucracies}
	There is a function $\newfun{label_accouchement}:\Bbb{N}\to \Bbb{N}$ such that, for every $t,r\in \Bbb{N}$
	and every graph $G$ that does not contain $K_{t}$ as a minor, if $\tw(G)\geq \funref{label_accouchement}(t)\cdot r,$ then $G$ contains an $r$-wall as a subgraph.
	In particular, one may choose $\funref{label_accouchement}(t)=2^{{\cal O}(t^{2} \cdot \log t)}.$
\end{proposition}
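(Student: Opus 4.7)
The plan is to combine the Robertson--Seymour graph minor structure theorem with the known linear grid--minor theorem for graphs embedded on a fixed surface. The structure theorem provides, for every $K_t$-minor-free graph $G,$ a tree decomposition $(T,\chi)$ whose torsos are $(\alpha,\alpha,\alpha,\alpha)$-almost-embeddable into surfaces of Euler genus at most $\alpha=\alpha(t)$ and whose adhesion is at most $\alpha;$ here $\alpha(t)=2^{{\cal O}(t\log t)}.$ Since $\tw(G)$ is controlled by the torsos of the decomposition (roughly $\tw(G)\le \max_{\text{torso}}\tw(\text{torso})+{\cal O}(\alpha)$ after cleaning up the adhesion sets), a large value of $\tw(G)$ forces at least one torso $H$ to satisfy $\tw(H)\ge \tw(G)-{\cal O}(\alpha).$

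Next, I would remove the at most $\alpha$ apex vertices of $H$ (which decreases the treewidth by at most $\alpha$) and fold each vortex into its associated linear decomposition of bounded pathwidth (which again costs only ${\cal O}(\alpha)$ in the treewidth budget). What remains is a graph $H'$ properly embedded in a surface of Euler genus at most $\alpha$ whose treewidth is still $\Omega(\tw(G)/\alpha).$ For surface-embedded graphs, a classical argument (going back to Demaine--Fomin--Hajiaghayi--Thilikos and refined in the bounded-genus setting) yields a \emph{linear} bound: if $\tw(H')\ge c_\alpha\cdot r$ then $H'$ contains an $r$-wall as a subgraph. This step is the core of the result; the linear dependence, unlike the polynomial dependence in general graphs, is specific to bounded genus and ultimately comes from Euler-formula arguments applied to concentric ``nooses'' around a well-linked set. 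Finally, the $r$-wall produced inside $H'$ lifts back to a subgraph $r$-wall of $G,$ since adding back apex vertices, vortex attachments, and the rest of the tree decomposition only introduces new edges and vertices outside the wall.

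Tracking constants through these three steps, the threshold $\funref{label_accouchement}(t)$ is essentially ${\sf poly}(\alpha(t)),$ so with $\alpha(t)=2^{{\cal O}(t\log t)}$ coming from the structure theorem one obtains $\funref{label_accouchement}(t)=2^{{\cal O}(t^{2}\log t)}.$ I expect the main obstacle to be the second step: extracting a single torso with concentrated treewidth requires careful ``treewidth accounting'', because $\tw(G)$ is \emph{not} simply the maximum torso-treewidth and one must control the interaction between adhesion sets, apices, and vortices when passing from $H$ to $H'.$ A clean alternative, which is essentially the route taken by Kawarabayashi and Kobayashi, is to start from the weaker polynomial grid--minor theorem to obtain a large ``approximate'' wall and then boost the dependence from polynomial to linear in $t$ using well-linked set manipulations inside the almost-embeddable torso.
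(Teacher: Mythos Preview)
The paper does not prove this proposition: it is stated as a result of Kawarabayashi and Kobayashi~\cite{KawarabayashiK20line} and used as a black box. There is therefore no in-paper argument to compare your sketch against.

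That said, your outline has a genuine gap at the lifting step. A torso in the structure-theorem decomposition contains \emph{virtual} clique edges on each adhesion set that are not edges of $G,$ so a wall found inside $H'$ (or inside the torso $H$) may use edges that do not exist in $G.$ Your claim that ``adding back \ldots\ only introduces new edges and vertices outside the wall'' is false in this direction: passing from a torso back to $G$ \emph{removes} edges. One can attempt to repair this by routing each virtual edge through the subtree it abbreviates, but those routes need not be pairwise vertex-disjoint, and making them so is precisely where the substantive work in the Kawarabayashi--Kobayashi argument lies. Likewise, ``folding each vortex into its associated linear decomposition'' does not by itself leave a surface-embedded graph of comparable treewidth; the vortex interiors can carry most of the treewidth, and controlling this again requires the tangle/well-linked machinery you only gesture at in your final sentence. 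The high-level plan is in the right spirit, but both the torso-to-$G$ transfer and the vortex handling are real omissions rather than bookkeeping.
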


\autoref{label_improvements} is a variation of \cite[Lemma 11]{SauST21amor} that we prove in this subsection.
The version presented here will be useful for the design of the algorithm of \autoref{label_entrepreneur}.
Recall that  $s_{\cal F}=\max\{|V(H)|\mid H\in {\cal F}\}.$

\begin{lemma}\label{label_improvements}
	There exists an algorithm with the following specifications:\\

	\noindent{\tt Find-Wall}$(G,r,k)$\\
	\noindent{\textbf{Input}:} A graph $G,$ an odd $r\in\Bbb{N}_{\geq 3},$ and a $k\in\Bbb{N}.$\\
	\noindent{\textbf{Output}:} One of the following:
	\begin{itemize}
		\item Either a report that $G$ has treewidth at most $\funref{label_accouchement}(s_{\cal F})\cdot r+k,$ or
		\item an $r$-wall $W$ of $G,$ or
		\item a report that $(G,k)$ is a \no-instance of \mnb{\sc  ${\cal F}$-M-Deletion}.
	\end{itemize}
	Moreover, this algorithm runs in  $2^{{\cal O}_{s_{\cal F}}(r^2+(k+r) \cdot \log (k+r))}\cdot n$-time.
\end{lemma}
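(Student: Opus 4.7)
The plan is to combine a treewidth approximation with the minor-testing algorithm on bounded-treewidth graphs (\autoref{label_sleepwalkers}), using \autoref{label_bureaucracies} to link high treewidth with the existence of an $r$-wall in graphs excluding a fixed clique minor. The function $\funref{label_accouchement}$ appearing in the output bound is chosen to absorb the constant approximation factor incurred below.

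\medskip

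\noindent\emph{Step 1.} Apply a constant-factor \FPT approximation of treewidth (for instance, \autoref{label_panathinaiko}, or the algorithms of \cite{PerkovicR00anim,AlthausZ19opti}) with target $K := \funref{label_accouchement}(s_{\cal F}) \cdot r + k.$ In time $2^{{\cal O}(K)} \cdot n = 2^{{\cal O}(r+k)} \cdot n,$ this returns either a tree decomposition $(T,\chi)$ of $G$ of width $w = {\cal O}(K),$ or a report that $\tw(G) > K.$

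\medskip

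\noindent\emph{Step 2.} If a tree decomposition is produced, apply \autoref{label_sleepwalkers} with input $(G, H, (T,\chi)),$ where $H$ is the elementary $r$-wall, which has $h = \Theta(r^2)$ edges. The running time is $2^{{\cal O}(w \log w)} \cdot h^{{\cal O}(w)} \cdot 2^{{\cal O}(h)} \cdot n = 2^{{\cal O}_{s_{\cal F}}(r^2 + (r+k) \log (r+k))} \cdot n.$ Since walls are subcubic, $H$ is a minor of $G$ if and only if $G$ contains a subdivision of $H$ as a subgraph, i.e., an $r$-wall; if found, output it. Otherwise, the tree decomposition witnesses $\tw(G) \leq w,$ which for the chosen $\funref{label_accouchement}$ is at most $\funref{label_accouchement}(s_{\cal F}) \cdot r + k,$ so the small-treewidth report is output.

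\medskip

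\noindent\emph{Step 3.} If the approximation reports $\tw(G) > K,$ then for every $S \subseteq V(G)$ with $|S| \leq k$ we have $\tw(G \setminus S) \geq \tw(G) - k > \funref{label_accouchement}(s_{\cal F}) \cdot r.$ If $(G,k)$ were a \yes-instance, any certificate $S$ would satisfy $G \setminus S \in {\bf exc}({\cal F}),$ hence $K_{s_{\cal F}} \not\leq_{\sf m} G \setminus S$ (any graph in ${\cal F}$ has at most $s_{\cal F}$ vertices and so is a minor of $K_{s_{\cal F}}$). Thus by \autoref{label_bureaucracies}, $G \setminus S,$ and therefore $G,$ would contain an $r$-wall as a subgraph. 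Consequently, in this regime either $G$ contains an $r$-wall or $(G,k)$ is a \no-instance. To actually find the wall, I would iteratively double $K$ and rerun the approximation; as soon as a tree decomposition is produced, apply \autoref{label_sleepwalkers} as in Step 2. Leveraging the algorithmic grid-minor extraction of \cite{KawarabayashiK20line}, this terminates within the claimed running time. If no $r$-wall is ever found, output the \no-report.

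\medskip

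The main obstacle is Step 3: efficiently extracting an $r$-wall from a graph of treewidth much larger than $K$ while respecting the bound $2^{{\cal O}_{s_{\cal F}}(r^2 + (r+k)\log(r+k))} \cdot n.$ This is exactly where the algorithmic version of the Kawarabayashi--Kobayashi theorem \cite{KawarabayashiK20line} is essential: it converts sufficiently high treewidth (in the absence of $K_{s_{\cal F}}$ as a minor) into an explicit $r$-wall, bypassing the need for a tight tree decomposition of $G.$
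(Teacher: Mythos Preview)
Your Step~3 contains a genuine gap. Doubling $K$ until the approximation succeeds can force $K$ up to $\Theta(\tw(G))$, which may be $\Theta(n)$; the approximation and the subsequent call to \autoref{label_sleepwalkers} then cost $2^{\Theta(n)}$, destroying the claimed bound. Your fallback to an ``algorithmic Kawarabayashi--Kobayashi'' does not help either: \autoref{label_bureaucracies} needs the input to exclude $K_{s_{\cal F}}$ as a minor, and you only know this for $G\setminus S$ where $S$ is the \emph{unknown} solution, not for $G$ itself. So in the regime $\tw(G)>K$ you have no procedure, within budget, that either produces an $r$-wall or certifies a \no-instance. There is also a smaller issue in Step~2: when no wall is found you can only report $\tw(G)\le w={\cal O}(K)$, and absorbing the approximation constant into $\funref{label_accouchement}$ does not work cleanly because the factor multiplies the additive $k$ as well, so the bound no longer has the shape $\funref{label_accouchement}(s_{\cal F})\cdot r+k$.

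The paper resolves both issues with two ideas you are missing. First, instead of growing $K$, it shrinks $G$: using the Perkovi\'c--Reed compression (\autoref{label_preguntarnos}) it recurses on a graph with a $(1-\Theta(c^{-2}))$-fraction of the vertices until $|V(G)|<12c^{3}$, where even the exact treewidth test of Arnborg et~al.\ is \FPT\ in $c$. Second, whenever a tree decomposition of width ${\cal O}(c)$ is at hand, it does \emph{not} look for a wall in $G$; it first runs the bounded-treewidth DP of \autoref{label_confiscating} to either report a \no-instance or obtain a solution $S$, and only then searches for an $r$-wall in $G\setminus S$. Since $G\setminus S\in{\bf exc}({\cal F})$ excludes $K_{s_{\cal F}}$, failure to find a wall lets one invoke \autoref{label_bureaucracies} to conclude $\tw(G\setminus S)\le \funref{label_accouchement}(s_{\cal F})\cdot r$ and hence $\tw(G)\le \funref{label_accouchement}(s_{\cal F})\cdot r+k$, recovering the exact bound in the statement.
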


The algorithm of  \autoref{label_improvements} is a recursive one. Namely, given an instance of this algorithm, we compute a smaller-sized instance and recurse.
This is achieved by using the following result that is derived from \cite{PerkovicR00anim}.
For a detailed analysis of the results of \cite{PerkovicR00anim}, see \cite{AlthausZ19opti}.

\begin{proposition}\label{label_preguntarnos}
	There exists an algorithm with the following specifications:\medskip

	\noindent{\textbf{Input}:}	A graph $G$ and a $t\in\Bbb{N}$ such that $|V(G)|\geq 12t^{3}.$\\
	\noindent{\textbf{Output}:} A graph $G^{\star}$ such that $|V(G^{\star})|\leq (1-\frac{1}{16t^{2}}) \cdot |V(G)|$ and:
	\begin{itemize}
		\item Either $G^{\star}$ is a subgraph of $G$ such that $\tw(G)=\tw({G^\star}),$  or
		\item $G^{\star}$ is obtained from $G$ after contracting the edges of a matching in $G$.
	\end{itemize}
	Moreover, this algorithm runs in $2^{{\cal O}(t)} \cdot n$-time.
\end{proposition}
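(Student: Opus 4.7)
The plan is to follow the scheme of Perkovi\'c and Reed~\cite{PerkovicR00anim} (with the quantitative analysis refined by Althaus and Ziegler~\cite{AlthausZ19opti}), which proceeds by a case analysis on the local structure of $G$, producing in each case either many vertices whose removal provably preserves the treewidth, or a matching whose contraction shrinks the graph by the required fraction. First, I would compute the $2$-core $H$ of $G$ in time ${\cal O}(n+m)$ by iteratively deleting vertices of degree at most one; since such deletions never alter the treewidth in the regime $\tw(G) \geq 1$ (and $n \geq 12 t^3$ rules out trivial edgeless cases), we have $\tw(H) = \tw(G)$, so if this peeling removes at least $n/(16 t^2)$ vertices I return $G^\star := H$ in the subgraph branch. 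Otherwise $H$ has minimum degree at least $2$ and $|V(H)| \geq (1 - 1/(16 t^2))\cdot n$; I would then compute a maximal matching $M$ of $H$ greedily in linear time, and, since $H$ is a subgraph of $G$, $M$ is also a matching of $G$, so if $|M| \geq n/(16 t^2)$ I return $G^\star$ obtained from $G$ by contracting $M$, taking the matching-contraction branch.

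The remaining case, $|M| < n/(16 t^2)$, is the technical heart of the proof. Here $V(M)$ is a vertex cover of $H$ of size at most $n/(8 t^2)$, and $I := V(H) \setminus V(M)$ is an independent set of size at least $(1 - 1/(8 t^2))\cdot n$ whose vertices all have degree at least $2$ and whose neighbors lie entirely in $V(M)$. I would exploit the twin structure of $I$: by a pigeonhole argument on the neighborhoods $N_H(v) \subseteq V(M)$ over $v \in I$, one extracts a large ``twin class'' $U \subseteq I$ whose elements share a common neighborhood $N \subseteq V(M)$, and one retains a carefully chosen number of representatives of this class to define a subgraph $G^\star$ of $G$. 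The treewidth preservation relies on the following observation: provided enough twins are retained, both $G$ and $G^\star$ contain a common $K_{|N|, |N|}$-subgraph, forcing $\tw(G), \tw(G^\star) \geq |N|$; combined with the standard lifting of any tree decomposition of $G^\star$ to one of $G$ by attaching, for each removed twin $u$, a bag $N \cup \{u\}$ adjacent to a bag containing $N$, this yields $\tw(G) = \tw(G^\star)$.

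The main technical obstacle is ensuring that the pigeonhole step of the last case always produces a twin class large enough to remove at least $n/(16 t^2)$ vertices, whatever the distribution of the neighborhoods inside the small vertex cover $V(M)$. The thresholds $12 t^3$ and $1/(16 t^2)$ in the statement are chosen precisely so that, over all possible configurations of $|M|$, $|V(M)|$, and $|I|$, at least one of the three branches (peeling the $2$-core, contracting a large matching in $H$, or removing a large twin class in $I$) succeeds, and the detailed quantitative accounting of this balance is exactly what is carried out in~\cite{AlthausZ19opti}, on whose analysis I would rely for the formal verification. The running time of $2^{{\cal O}(t)}\cdot n$ then follows because each of the three subroutines (core peeling, greedy matching, and twin-class extraction via sorting neighborhoods of size at most $|V(M)|$ when $|V(M)| = {\cal O}(t)$, and a direct tree-decomposition-based treatment otherwise) runs in linear time with a single-exponential overhead in $t$.
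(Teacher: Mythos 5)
The paper does not prove this proposition; it is presented as ``derived from \cite{PerkovicR00anim},'' with a pointer to \cite{AlthausZ19opti} for the detailed analysis, so your sketch has to be measured against the Perkovic--Reed/Bodlaender machinery rather than against a proof in this paper. Your first two branches (peeling degree-$\leq 1$ vertices, then contracting a maximal matching of the $2$-core) are in the right spirit, but the third branch has a genuine quantitative gap. When $|M| < n/(16t^2)$, the vertex cover $V(M)$ has size $\Theta(n/t^2)$ and $I$ has size $\Theta(n)$. Even if every vertex of $I$ had degree exactly two, the number of distinct candidate neighborhoods $N_H(v)\subseteq V(M)$ is $\binom{|V(M)|}{2}=\Theta(n^2/t^4)$, which already exceeds $|I|=\Theta(n)$ once $n \gg t^4$; the pigeonhole then only guarantees a twin class of \emph{constant} size, nowhere near the $\Omega(n/t^2)$ deletions you need after retaining $|N|$ representatives. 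For higher degrees the count of possible neighborhoods only gets worse, so twins simply do not survive in this regime.

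The mechanism that actually makes the last case work in Bodlaender--Perkovic--Reed (and in the re-analysis of \cite{AlthausZ19opti}) is not twin classes but removal of low-degree \emph{I-simplicial} vertices in the \emph{improved graph}, obtained by adding an edge between every pair of vertices with more than $t$ common low-degree neighbors, together with an explicit degree bucketing that separates vertices of degree $O(t)$ from the (necessarily few, when the treewidth is at most $t$) high-degree vertices. The key structural dichotomy there is: either the low-degree vertices admit a large matching of ``friendly'' pairs (adjacent or sharing a low-degree neighbor), which one contracts, or a constant fraction of the low-degree vertices have a neighborhood that is a clique in the improved graph, and such a vertex can be deleted without changing the treewidth. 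Your sketch contains neither the degree bucketing nor the improved-graph/I-simplicial step, and it is precisely these that make the third case close; appealing to \cite{AlthausZ19opti} for ``the quantitative accounting'' does not repair the gap, because their accounting is for a different, and correct, case decomposition than the one you set up.
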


We now have all the ingredients to prove \autoref{label_improvements}.

\begin{proof}[Proof of \autoref{label_improvements}]
	We set $c:=\funref{label_accouchement}(s_{\cal F})\cdot r+k.$
	We now describe a recursive algorithm as follows.\medskip

	We first argue for the base case, namely when $|V(G)| < 12c^{3}.$
	To check whether ${\tw}(G)\leq c,$ we use the algorithm of Arnborg et al. \cite{ArnborgCP87comp}, which runs in time ${\cal O}(|V(G)|^{c+2})=2^{{\cal O}_{s_{\cal F}} ((r+k)\cdot \log (r+k))},$ and if this is the case, we report the same and stop.
	If not, we aim to find an $r$-wall of $G$ or conclude that we are dealing with a {\sf no}-instance.
	We first consider an arbitrary ordering
	$(v_1, \dots, v_{|V(G)|})$ of the vertices of $G.$
	For each $i\in[|V(G)|],$ we set $G_i$ to be the graph induced by the vertices $v_1, \dots, v_i.$
	We iteratively run the algorithm of \autoref{label_panathinaiko} on $G_i$ and $c$ for increasing values of $i.$
	This algorithm runs in $2^{{\cal O}(c)}\cdot |V(G)|=2^{{\cal O}_{s_{\cal F}} (r+k)}$-time.
	Let $j\in[|V(G)|]$ be the smallest integer such that the above algorithm outputs a report that ${\tw}(G_j)>c$
	and notice that there exists a tree decomposition $({\cal T}_j, \chi_j)$ of $G_j$ (obtained by the one of $G_{j-1}$ by adding the vertex $v_j$ to all the bags) of width at most $5c+5.$
	Thus, we can call the algorithm of \autoref{label_confiscating} with input $(G_j, 5c+5,k)$ (which runs in $2^{{\cal O}_{s_{\cal F}}(c \cdot \log c)}\cdot |V(G_j)|=2^{{\cal O}_{s_{\cal F}}((r+k)\cdot \log (r+k))}$-time) in order to find, if it exists, a set $S_j \subseteq V(G_j)$ such that $|S_j |\leq k$ and ${\cal F}\nleq_{{\sf m}} G_j\setminus S_j.$ We distinguish two cases.

	\begin{itemize}
		\item[$\bullet$] If such a set $S_j$ does not exist, then we can safely report that $(G,k)$ is a \no-instance.
		\item[$\bullet$] If such a set $S_j$ exists, then we call the algorithm of \autoref{label_sleepwalkers} for $G_j\setminus S_j$ (and the decomposition of $G_j\setminus S_j$ obtained from  $({\cal T}_j, \chi_j)$ by removing the vertices of $S_j$ from the all the bags in order to check whether it contains an elementary $r$-wall $W$ as a minor.
		      This algorithm runs in $2^{{\cal O}(c \cdot \log c)}\cdot r^{{\cal O}(c)}\cdot 2^{{\cal O}(r^{2})}\cdot |V(G_j\setminus S_j)|=2^{{\cal O}_{s_{\cal F}}((r+k) \cdot \log (r+k))}\cdot r^{{\cal O}_{s_{\cal F}}(r+k)}\cdot 2^{{\cal O}(r^2)}=2^{{\cal O}_{s_{\cal F}}(r^2+(r+k) \cdot \log (r+k))}$-time, since $|E(W)|={\cal O}(r^2).$
		      Since  $G_j\setminus S_j$ does not contain $K_{s_{\cal F}}$ as a minor and  $\tw(G_j\setminus S_j)\geq c-k=\funref{label_accouchement}(s_{\cal F})\cdot r$ and because of  \autoref{label_bureaucracies}, this algorithm will output an elementary $r$-wall $W$ of $G_j \setminus S_j.$
		      We also return $W$ as a wall of $G.$
	\end{itemize}
	Therefore, in the case where $|V(G)|< 12c^3$, we obtain one of the three possible outputs in time $2^{{\cal O}_{s_{\cal F}}(r^2+(r+k)\log (r+k))}.$
	\medskip

	If  $|V(G)|\geq 12c^{3},$ then we call  the algorithm of \autoref{label_preguntarnos} with input $(G,c),$ which outputs a graph $G^{\star}$ such that $|V(G^{\star})|\leq (1-\frac{1}{16c^{2}}) \cdot |V(G)|$ and
	\begin{itemize}
		\item either $G^{\star}$ is a subgraph of $G$ such that $\tw(G)=\tw(G^{\star}),$ or
		\item $G^{\star}$ is obtained from $G$  after contracting the edges of a matching in $G$.
	\end{itemize}


	In both cases, we recursively call the algorithm on  $G^{\star}$ and we distinguish the following two cases.\medskip

	\noindent{\em Case 1}: $G^{\star}$ is a subgraph of $G$ such that $\tw(G)=\tw(G^{\star}).$

	\begin{itemize}
		\item[(a)] 	If the recursive call on $G^\star$ reports that $\tw(G^{\star})\leq c,$ then we return that $\tw(G)\leq c.$

		\item[(b)] 	If the recursive call on $G^\star$ outputs an $r$-wall $W$ of $G^{\star},$ then we return $W$ as a wall of $G.$

		\item[(c)] 	If $(G^{\star},k)$ is a \no-instance, then we report that $(G,k)$ is also a {\sf no}-instance.

	\end{itemize}
	\medskip

	\noindent{\em Case 2}:  $G^{\star}$ is obtained from $G$ after contacting the edges of a matching in $G.$

	\begin{itemize}
		\item[(a)] If the recursive call on $G^\star$ reports that $\tw(G^{\star})\leq c,$  then we do the following.
		      We first notice that the fact that $\tw(G^{\star})\leq c$ implies that $\tw(G)\leq 2c,$
		      since we can obtain a tree decomposition $({\cal T},\chi)$ of $G$ from a tree decomposition $({\cal T}^{\star},\chi^{\star})$ of $G^{\star},$
		      by replacing, in every $t\in{\cal T}^{\star},$ every occurrence of a vertex of $G^{\star}$ that is a result of an edge contraction by its endpoints in $G.$
		      Thus, we can call the algorithm of \autoref{label_confiscating} with input $(G, 2c,k)$ (which runs in $2^{{\cal O}_{s_{\cal F}}(c\log c)}\cdot n$-time) in order to find, if it exists, a set $S$ such that $|S|\leq k$ and ${\cal F}\nleq_{{\sf m}} G\setminus S.$ We distinguish again two cases.

		      \begin{itemize}
			      \item[$\bullet$] If such a set $S$ does not exist, then the algorithm reports that $(G,k)$ is a \no-instance.
			      \item[$\bullet$] If such a set $S$ exists, then we apply the algorithm of \autoref{label_panathinaiko} with input $(G\setminus S,2c)$ (which runs in $2^{{\cal O}(c)}\cdot n$-time) and we get a tree decomposition of $G\setminus S$ of width at most $10c+4.$
			            Using this decomposition, we call the algorithm of \autoref{label_sleepwalkers} for $G\setminus S$ in order to check whether it contains an elementary $r$-wall $W$ as a minor.
			            This algorithm runs in $2^{{\cal O}(c \cdot \log c)}\cdot r^{{\cal O}(c)}\cdot 2^{{\cal O}(r^{2})}\cdot n=2^{{\cal O}_{s_{\cal F}}((r+k) \cdot \log (r+k))}\cdot r^{{\cal O}_{s_{\cal F}}(r+k)}\cdot 2^{{\cal O}(r^2)}\cdot n=2^{{\cal O}_{s_{\cal F}}(r^2+(r+k) \cdot \log (r+k))}\cdot n$-time, since $|E(G\setminus S)|={\cal O}(n)$ and $|E(W)|={\cal O}(r^2).$
			            If this algorithm outputs an elementary $r$-wall $W$ of $G\setminus S,$ then we  output $W.$ Otherwise,we can safely report, because of   \autoref{label_bureaucracies}, that $\tw(G)\leq \funref{label_accouchement}(s_{\cal F})\cdot r+k=c.$
		      \end{itemize}

		\item[(b)] 	If the recursive call on $G^\star$ outputs an $r$-wall $W^\star$ of $G^{\star},$  then by
		      uncontracting the edges of $M$ in $W^\star$ we can return an $r$-wall of $G.$

		\item[(c)] 	If $(G^{\star},k)$ is a \no-instance, then we report that $(G,k)$ is also a \no-instance.
	\end{itemize}
	It is easy to see that the running time of the above algorithm is $$T(n,k,r)\ \leq\  T\left((1-\frac{1}{12c^{2}})\cdot n,k,r\right)+ 2^{{\cal O}_{s_{\cal F}}(r^2+(r+k)\log (r+k))}\cdot n,$$
	which implies that $T(n,k,r)=2^{{\cal O}_{s_{\cal F}}(r^2+(r+k)\log (r+k))}\cdot n,$ as claimed.
\end{proof}

\subsection{The algorithm}\label{label_desvanecerse}

In this subsection we prove that, in the case where $a_{\cal F}=1,$
there is an algorithm that solves  \mnb{\sc  ${\cal F}$-M-Deletion} in time $2^{{\cal O}_{s_{\cal F}}(k^{2(c+2)})} \cdot n^2,$ where
$c=c_{a, \ell_{\cal F}}$ and $a=\funref{label_schematization}(s_{\cal F}).$ Note that the existence of such an algorithm implies  \autoref{label_entrepreneur}.

Let $G$ be graph and let $W$ be a wall of $G.$
The {\em drop}, denoted by $D_{W'},$  of a subwall $W'$ of $W$
is defined as follows.
If contract in $G$ the perimeter of $W$ to a single vertex $v,$ $D_{W'}$ is the
unique 2-connected component of the resulting graph that contains the interior of $W'.$ We call the vertex~$v$
the {\em pole} of the drop $D_{W'}.$

Our algorithm avoids iterative compression
in a similar fashion as done by Marx and Schlotter in~\cite{MarxS07obta} for the \mnb{\sc Vertex Planarization} problem. 	 The algorithm has  three main steps.
We first  set $a=\funref{label_schematization} (s_{\cal F})$  and we define $d= \funref{label_preoccupations}(a, \ell_{\cal F}),$
\begin{align*}
	\beta =   & \ \funref{label_objectivement}(0,\ell_{\cal F},3,k),                                              &
	\lambda = & \ \funref{label_disbnguished}(s_{\cal F})  \cdot (a+1),                                            &
	q =       & \ \funref{label_accroissement}(s_{\cal F}),                                                            \\
	\eta =    & \ \funref{label_internalization}(\lambda+1,\beta,q),                                                  &
	z =       & \ {{\sf odd}}(\max\{\funref{label_biancheggiava}(s_{\cal F}),\eta\}),                               &
	w =       & \ \funref{label_distinguimos}(z,a,0,d),                                                               \\
	b =       & \  3+ \funref{label_inconsummabile}(s_{\cal F})\cdot w,                                               &
	l=        & \ \funref{label_unterschieden}(1,s_{\cal F},k) \cdot (k+a),                                          &
	p=        & \ \funref{label_straightforward}(1,s_{\cal F},k),                                                        \\
	h =       & \ \funref{label_internalization}(l+1, b,p),                                                           &
	r =       & \ {{\sf odd}}(\max\{\funref{label_guildenstern}(1,s_{\cal F},k),h\}), \text{ and}                  &
	R =       & \ {{\sf odd}}(\funref{label_discernimiento}(s_{\cal F})\cdot r +k)={\cal O}_{s_{\cal F}}(k^{c+2}).
\end{align*}
\paragraph{Step 1.} Run the algorithm of \autoref{label_improvements} with input $(G,R,k)$
and, in $2^{{\cal O}_{s_{\cal F}}(k^{2(c+2)})}\cdot n$-time,  either report a \no-answer, or conclude  that $\tw(G)\leq \funref{label_accouchement}(s_{\cal F})\cdot R+k$ and solve  \mnb{\sc  ${\cal F}$-M-Deletion} in $2^{{\cal O}_{s_{\cal F}}(k^{c+2} \cdot \log k)} \cdot n$-time using the algorithm of \autoref{label_confiscating}, or obtain an $R$-wall $\tilde{W}$ of  $G.$ In the third case, consider all the {${R\choose b}^2=2^{{\cal O}_{s_{\cal F}}(k^c\log k)}$ $b$-subwalls of $\tilde{W}$
and for each one of them, say $W,$ construct its drop $D_{W},$ and run the algorithm  of \autoref{label_dishonorable} with input
$(D_{W} \setminus \{v_W\}, w,s_{\cal F}),$ where $v_W$ is the pole of $D_W.$
This takes time $2^{{\cal O}_{s_{\cal F}}(k^{c})} \cdot n.$ If for some of these drops the result is
a set $A\subseteq V(D_W\setminus \{v_W\})$  with $|A|\leq a$ and a regular flatness pair $(W',\frak{R}')$ of $(D_W\setminus   \{v_W\})\setminus A$ of height $w,$ then proceed to Step~2, otherwise proceed to Step~3.

\paragraph{Step 2.}
We apply the algorithm of \autoref{label_highlighting} with input $(z,a,0,d, D_{W}\setminus \{v_W\}, A, W',\frR')$, which outputs a  flatness pair $(\breve{W}, \breve{\frR})$ of $(D_{W}\setminus \{v_W\}) \setminus A$ of height $z$ that is $d$-homogeneous with respect to ${A \choose \leq 0} =\{\emptyset\}$ and is a $W^*$-tilt of $(W',\frR')$ for some subwall $W^*$ of $W'.$
This takes $2^{{\cal O}_{s_{\cal F}}(k\log k)}\cdot n$-time.
By \autoref{label_ressemblances}, $(\breve{W}, \breve{\frR})$ is regular.
Let $A^\star:=A\cup \{v_W\}$ and keep in mind that  $(D_W\setminus   \{v_W\})\setminus A = D_W \setminus A^\star.$
Consider all the $\beta$-subwalls of $\breve{W}$, {which are at most ${z \choose \beta}^2$ many},
and for each of them, say $\hat{W},$ check in linear time whether there  is an edge, in $D_W,$ between $A^\star$ and $V(\cupall{\sf influence}_{\breve{\frR}}(\hat{W})).$
If this is the case for every such a subwall,  then proceed to Step~3. If not, let $\hat{W}$ be a $\beta$-subwall of $\breve{W}$ such that  no vertex of $A^\star$ is adjacent to $V(\cupall{\sf influence}_{\breve{\frR}}(\hat{W})).$
By applying \autoref{label_protectively} for the flatness pair $(\breve{W}, \breve{\frR})$ of $D_{W}\setminus A^\star$ and the subwall $\hat{W}$ of $\breve{W},$ we obtain in linear time a $\hat{W}$-tilt $(W'',\frR'')$ of $(\breve{W}, \breve{\frR}).$
Keep in mind that $(W'',\frR'')$ is a flatness pair of $D_W \setminus A^\star$ which is also regular and $d$-homogeneous with respect to $\{\emptyset\},$ due to \autoref{label_ressemblances} and \autoref{label_superintendent}, respectively.
Also, notice that, since $(W'',\frR'')$ is a $\hat{W}$-tilt  of $(W',{\frR}'),$ ${\sf compass}_{\frR''}(W'')$ is a subgraph of  $\cupall{\sf influence}_{{\frak{R}}'}(\hat{W})$ and therefore no vertex of $A^\star$ is adjacent, in $D_W,$ to a vertex of ${\sf compass}_{\frR''}(W'').$
{The latter implies that we can obtain a $7$-tuple $\tilde{\frR}''$ from $\frR''$ by adding all vertices of $G\setminus V({\sf compass}_{\frR''}(W''))$ such that $ {\sf compass}_{\tilde{\frR}''}(W'')=  {\sf compass}_{\frR''}(W'')$ and $(W'',\tilde{\frR}'')$ is a flatness pair of $G.$
We apply  {\tt Find-Irrelevant-Vertex} of \autoref{label_mitinbegriffen} with input $(k,0,G,{\emptyset}, W'',\tilde{\frR}'')$
and obtain, in  linear time,  an  irrelevant vertex $v$
such that $(G,k)$ and $(G\setminus v,k)$
are equivalent instances of \mnb{\sc  ${\cal F}$-M-Deletion}.
Then the algorithm runs recursively on the equivalent instance $(G\setminus v,k).$ \smallskip

\noindent Notice that Step~2 can be seen as the
	{\sl irrelevant vertex case} of our algorithm.

\paragraph{Step 3.}
Consider all  the $r$-subwalls of $\tilde{W},$ {which are at most ${R\choose r}^2=2^{{\cal O}_{s_{\cal F}}(k^c\log k)}$ many,}
and for each of them,
compute its canonical partition ${\cal Q}$.
Then, for each $p$-internal bag $Q$ of ${\cal Q},$ add a new vertex $v_Q$ and make it
adjacent to all vertices in $Q,$ then
add a new vertex $x_{\rm all}$ and make it adjacent  to  all $x_{Q}$'s, and in the resulting graph, for every vertex $y$ of $G$ that is not in the union of the internal bags of ${\cal Q},$ check, in time ${\cal O}(k\cdot |E(G)|)={\cal O}_{s_{\cal F}}(k\cdot n)$ (using standard flow techniques), whether there are $\funref{label_unterschieden}(1,s_{\cal F},k)$
internally vertex-disjoint paths from $x_{\rm all}$ to $y.$  If this is indeed the case for some $y,$ then  $y$ should belong to every solution of \mnb{\sc  ${\cal F}$-M-Deletion} for the instance $(G,k),$
and the algorithm runs recursively  on the equivalent  instance $(G\setminus y,k-1).$
If no such a vertex $y$ exists, then report that $(G,k)$ is a \no-instance of \mnb{\sc  ${\cal F}$-M-Deletion}. \smallskip

\noindent Note that Step 3 can be seen as a trivial
	{\sl branching case} where the only   choice is vertex $y.$

\medskip

Notice that the third step of the algorithm,
when applied takes time $2^{{\cal O}_{s_{\cal F}}(k^c\log k)} \cdot n^2.$
However, it cannot be applied more than $k$ times during the course of the algorithm. As the first step runs in time
$2^{{\cal O}_{s_{\cal F}}(k^{2(c+2)}\log k)} \cdot n,$ and the second step runs in time $2^{{\cal O}_{s_{\cal F}}(k\log k)}\cdot n,$
they may be applied at most $n$ times, and the claimed time complexity follows.\\

\subsection{Correctness of the algorithm}\label{label_sleeplessness}
In this subsection we prove the correctness of the algorithm presented in  \autoref{label_desvanecerse}.

Recall that $a=\funref{label_schematization} (s_{\cal F}),$ $d= \funref{label_preoccupations}(a, \ell_{\cal F}),$
\begin{align*}
	\beta =   & \ \funref{label_objectivement}(0,\ell_{\cal F},3,k),                                              &
	\lambda = & \ \funref{label_disbnguished}(s_{\cal F})  \cdot (a+1),                                            &
	q =       & \ \funref{label_accroissement}(s_{\cal F}),                                                            \\
	\eta =    & \ \funref{label_internalization}(\lambda+1,\beta,q),                                                  &
	z =       & \ {{\sf odd}}(\max\{\funref{label_biancheggiava}(s_{\cal F}),\eta\}),                               &
	w =       & \ \funref{label_distinguimos}(z,a,0,d),                                                               \\
	b =       & \  3+ \funref{label_inconsummabile}(s_{\cal F})\cdot w,                                               &
	l=        & \ \funref{label_unterschieden}(1,s_{\cal F},k) \cdot (k+a),                                          &
	p=        & \ \funref{label_straightforward}(1,s_{\cal F},k),                                                        \\
	h =       & \ \funref{label_internalization}(l+1, b,p),                                                           &
	r =       & \ {{\sf odd}}(\max\{\funref{label_guildenstern}(1,s_{\cal F},k),h\}), \text{ and}                  &
	R =       & \ {{\sf odd}}(\funref{label_discernimiento}(s_{\cal F})\cdot r +k)={\cal O}_{s_{\cal F}}(k^{c+2}).
\end{align*}

Let  $(G,k)$ be a \yes-instance and let $S$ be a solution, i.e., a subset of $V(G)$ of size at most $k$ such that $G\setminus S\in{\bf exc}({\cal F})$ and let $\tilde{W}$ be an $R$-wall of  $G.$
Then, since $R\geq \funref{label_discernimiento}(s_{\cal F})\cdot r+k,$ there is an $(\funref{label_discernimiento}(s_{\cal F})\cdot r)$-subwall
of $\tilde{W},$ say $W^{\star},$ that does not contain vertices of $S.$
The wall $W^{\star}$ is an $(\funref{label_discernimiento}(s_{\cal F})\cdot r)$-wall
of $G\setminus S$, and therefore by \autoref{label_propositional} there is a set
$A\subseteq V(G\setminus S),$ with $|A|\leq a,$
and a flatness pair $(W,\frR)$ of $G\setminus (S\cup A)$ of height $r.$

Let $\tilde{\cal Q}$ be a $(W,\frR)$-canonical partition of $G\setminus (S\cup A).$
For each $p$-internal bag $Q$ of $\tilde{\cal Q},$ add a new vertex $v_Q$ and make it
adjacent to all vertices in $Q,$ then
add a new vertex $x_{\rm all}$ and make it adjacent  to  all $v_{Q}$'s.
In the resulting graph, if there are $\funref{label_unterschieden}(1,s_{\cal F},k)$ internally vertex-disjoint paths from $x_{\sf all}$ to a vertex $v\in S\cup A,$ then this is checked in  Step 3 (since connectivity of the internal bags implies that every such a path can be rerouted in order to intersect the wall) and the algorithm correctly (due to \autoref{label_disviluppato}) runs recursively on the equivalent instance $(G\setminus v,k-1).$
If this is not the case, then for each vertex $v$ of $S\cup A$ there are less than $\funref{label_unterschieden}(1,s_{\cal F},k)$ $p$-internal bags of $\tilde{\cal Q}$ that contain vertices adjacent to $v.$
This means that the $p$-internal bags
of $\tilde{\cal Q}$ that contain vertices adjacent to some vertex of $S\cup A$ are at most $\funref{label_unterschieden}(1,s_{\cal F},k)\cdot (k+a) =l.$

%

We consider a family ${\cal W}=\{{W}_{1}, \ldots, {W}_{l+1}\}$ of $l+1$ $b$-subwalls of $W$ such that for every $i \in [z],$ $\cupall{\sf influence}_{\frR}(W_i)$ is a subgraph of $\cupall \{Q\mid Q \text{ is a $p$-internal bag of }\tilde{\cal Q}\}$ and for every $i,j\in[z],$ with $i\neq j,$ there is no internal bag of $\tilde{\cal Q}$ that contains vertices of both $V(\cupall{\sf influence}_\frR (W_i))$ and $V(\cupall{\sf influence}_\frR (W_j)).$ The existence of ${\cal W}$ follows from the fact that $r\geq h = \funref{label_internalization}(l+1,b,p)$ and \autoref{label_stereotypical}.

The fact that the $p$-internal bags
of $\tilde{\cal Q}$ that contain vertices adjacent to some vertex of $S\cup A$ are at most $l$ implies that
there exists an $i\in[l+1]$ such that
no vertex of $V(\cupall{\sf influence}_{\hat{\frR}}({W_i}))$ is adjacent, in $G,$ to a vertex in $S\cup A.$
Thus, if $D_{W_i}$ is the drop of $W_{i},$ and $D_{W_i}^-:=D_{W_i}\setminus \{v_{W_i}\},$ where $v_{W_{i}}$ is the pole of $D_{W_i},$ then $D_{W_i}^-\leq_{\sf m} G\setminus (S\cup A).$
This, in turn, implies that $D_{W_i}^-\in {\bf exc}({\cal F})$ and therefore $D_{W_i}^-$ does not contain $K_{s_{\cal F}}$ as a minor.
Additionally, we notice that $D_{W_i}^-$ contains the central $(b-2)$-subwall $\bar{W_{i}}$ of $W_i$ as a subgraph and since $\bar{W_i}$ has height $b-2= \funref{label_inconsummabile}(s_{\cal F})\cdot w + 1,$ it holds that $\tw(D_{W_i}^-)> \funref{label_inconsummabile}(s_{\cal F})\cdot w.$
Therefore, by applying the algorithm  of \autoref{label_dishonorable} with input
$(D_{W_i}^- , w,s_{\cal F}),$ we must find a set $A\subseteq V(D_{W_i}\setminus \{v_{W_i}\})$  with $|A|\leq a$ and a regular flatness pair $(W',\frak{R}')$ of $D_{W_i}^- \setminus A$ of height $w.$
This should be detected in Step~1.

We apply the algorithm of \autoref{label_highlighting} with input $(z,a,0,d, D_{W_i}^-, A, W',\frR')$, which outputs a flatness pair $(\breve{W}, \breve{\frR})$ of $D_{W_i}^- \setminus A$ of height $z$ that is $d$-homogeneous with respect to ${A \choose \leq 0} = \{\emptyset\}$ and is a $W^*$-tilt of $(W',\frR')$ for some subwall $W^*$ of $W'.$
We set $A^\star:=A\cup \{v_{W_i}\}$ and keep in mind that $D_{W_i}\setminus A^\star = D_{W_i}^-\setminus A.$
Let $\tilde{\cal Q}'$ be a $(\breve{W},\breve{\frR})$-canonical partition of $D_{W_i}\setminus A^\star.$
Since $D_{W_i}$ is a subgraph of $D_{W_i}^-$ and $D_{W_i}^-\in {\bf exc}({\cal F}),$  we observe that $D_{W_i}\in {\bf exc}({\cal F}).$ Hence, as a consequence of \autoref{label_persuadieran},
every vertex in $A^\star$ has neighbors in less than $\funref{label_disbnguished}(s_{\cal F})$ $q$-internal bags of ${\cal Q}''.$
Therefore, since $\lambda= \funref{label_disbnguished}(s_{\cal F})\cdot (a+1)\geq \funref{label_disbnguished}(s_{\cal F}) \cdot |A^\star|$
and
$\rho\geq \eta =\funref{label_internalization}(\lambda+1,\beta,q),$
it follows, due to \autoref{label_stereotypical}, that
there is a $\beta$-subwall $\bar{W}$ of $W'$
such that no vertex of $V(\cupall{\sf influence}_{\frR'}(\bar{W}))$ is adjacent, in $D_{W_i},$
to a vertex in $A^\star.$
Therefore, this $\beta$-subwall $\bar{W}$ of $W'$ should be detected in Step~2.

\section{Algorithms for variants of {\sc  Vertex Deletion to ${\cal G}$}}\label{label_ressentiment}

We now present how our approach can be modified so to obtain \FPT-algorithms for several variants of the \mnb{\sc  Vertex Deletion to ${\cal G}$} problem.

\subsection{The general framework}
Notice that both algorithms in \autoref{label_pizzighettona} and \autoref{label_demonstrieren} are based on one of the following three  scenarios
for \mnb{\sc  Vertex Deletion to ${\cal G}$} with input $(G,k).$

\begin{itemize}
	\item[] [{\sl Bounded treewidth case}] A tree decomposition of $G$ of width $k^{{\cal O}(1)},$ or
	\item[] [{\sl Branching case}] a set $B$ with $|B|={\cal O}(k)$ such that $(G,k)$ is a \yes-instance if and only if, for some $x\in B,$  $(G\setminus x,k-1)$ is a \yes-instance, or
	\item[] [{\sl Irrelevant vertex case}] a vertex $x$ such that $(G,k)$ is a \yes-instance if and only if $(G\setminus x,k)$ is a \yes-instance,
\end{itemize}
For each of the variants of  \mnb{\sc  Vertex Deletion to ${\cal G}$} that we consider, the algorithm recursively runs on an equivalent instance  with one vertex less ({\sl irrelevant vertex case}), or branches on ${\cal O}(k)$ equivalent instances   where both $k$ and the size of the graph are  one less ({\sl branching case}). The eventual outcome  is to reduce the problem to the {\sl bounded treewidth case}, producing a tree decomposition of $G$ of width $k^{{\cal O}(1)}$ ({\sl bounded treewidth case}).
In each variant of the problem,  the  {\sl bounded treewidth case} can be treated by a suitable modification of the dynamic programming algorithm of \cite{BasteST20hittI}, taking into account the main combinatorial result in~\cite{BasteST20acom}. For each variant that we treat, the algorithm of \autoref{label_pizzighettona}  assumes that we have at hand a solution of \mnb{\sc  Vertex Deletion to ${\cal G}$} of size $k,$ which can be found by the algorithm in \autoref{label_preliminaries}.\medskip

We next present the problem variants and explain how to adapt the {\sl branching case}  and the {\sl irrelevant vertex case} for each of them.

\subsection{Variants of {\sc  Vertex Deletion to ${\cal G}$}}
\label{label_enlightening}

A common part of the inputs of all problems below is the pair $(G,k),$
where $G$ is a graph and $k$ is a non-negative integer, i.e., an input of \mnb{\sc  Vertex Deletion to ${\cal G}$}.

\paragraph{Annotated.}
In the {\em  annotated version} of \mnb{\sc  Vertex Deletion to ${\cal G}$},
the input is a triple $(G,k,R),$ where  $R\subseteq V(G),$
and the problem asks for a solution $S$ with $S\subseteq R.$

\begin{itemize}
	\item[] [{\sl Branching case}]: we branch on $(G\setminus x,k-1,R\setminus x)$  for all the annotated vertices
	      of $B,$ i.e., for every $x\in B\cap R.$ If there is no such a vertex, we report that $(G,k,R)$
	      is a \no-instance.
	\item[] [{\sl Irrelevant vertex case}]: we recurse on $(G\setminus x,k,R\setminus x),$ as every irrelevant vertex for the original  problem is also an irrelevant vertex for its annotated variant.
\end{itemize}

\paragraph{Modulo.}
In the {\em modulo version} of \mnb{\sc  Vertex Deletion to ${\cal G}$},  the input is a quadruple $(G,k,q,p)$ where  $q,p$ are integers, $p$ is a prime, and $q<p.$
The question is whether there is a solution $S$ of size at most $k$
where $|S| \equiv q \pmod p.$

\begin{itemize}
	\item[] [{\sl Branching case}]: we branch on $(G\setminus x,k-1,q-1 \pmod p,p)$  for every $x\in B.$
	\item[] [{\sl Irrelevant vertex case}]: it is the same, as every irrelevant vertex for the original  problem is also an irrelevant vertex for this variant.
\end{itemize}

\paragraph{Weighted.}
In the {\em weighted version} of \mnb{\sc  Vertex Deletion to ${\cal G}$},  the input is a triple $(G,k,\textbf{w})$ where
$\textbf{w}: V(G)\to\Bbb{R}$ is a weight function assigning positive real weights to the vertices o $G.$ The problem asks for a solution $S$ with $\sum_{v\in S}\textbf{w}(v)\leq k.$

\begin{itemize}
	\item[] [{\sl Branching case}]: we branch on $(G\setminus x,k-\textbf{w}(x), \textbf{w}\setminus\{(x,\textbf{w}(x))\}),$ for every $x\in B.$
	\item[] [{\sl Irrelevant vertex case}]: it is the same, as every irrelevant vertex for the original  problem is also an irrelevant vertex for its weighted variant.
\end{itemize}
For the above problem, if $\varepsilon=\min\{\textbf{w}(x)\mid x\in V(G)\},$ then  the parametric dependence of the derived algorithm  is  $2^{{\sf poly}(k/\varepsilon)},$
as the size of the solution $S$ is at most $k/\varepsilon.$

\paragraph{Counting.}
In the  {\em counting version} of \mnb{\sc  Vertex Deletion to ${\cal G}$}   with input $(G,k),$
the output is the number $\#_{\cal G}(G,k)$ of all solutions  of \mnb{\sc  Vertex Deletion to ${\cal G}$}
of size (at most) $k.$  We treat the case where we count solutions of size exactly $k$ as the ``$\leq k$''-case can be easily reduced to it.

\begin{itemize}
	\item[] [{\sl Branching case}]: return $\sum_{x\in B}\#_{\cal G}(G\setminus x,k-1).$

	\item[] [{\sl Irrelevant vertex case}]: return $\#_{\cal G}(G\setminus x,k-1)+\#(G\setminus x,k).$
\end{itemize}

The above creates $T(n,k)$
subproblems on bounded treewidth graphs, where $$T(n,k)=\max\{{\cal O}(k)\cdot T(n-1,k-1), T(n-1,k-1)+T(n-1,k)\}.$$ This makes a total of $2^{{\cal O}(k)}\cdot n$ problems,
each solvable in $2^{k^{{\cal O}(1)}}\cdot n$-time by the counting version of the  dynamic programming algorithm of~\cite{BasteST20hittI}, taking into account the analysis of \cite{BasteST20acom}.


\paragraph{Colored.}
In the  {\em colored version} of \mnb{\sc  Vertex Deletion to ${\cal G}$}, the input is
a triple $(G,k,\chi)$ where  $\chi: V(G)\to [k]$ is a function assigning colors from $[k]$ to the vertices of $G.$
The problem asks for a solution $S$ to \mnb{\sc  Vertex Deletion to ${\cal G}$}
that carries all $k$ colors, i.e.,  for each $i\in [k],$ $|S\cap \chi^{-1}(i)|=1.$
(Notice that the requested solution must have size exactly $k.$)
To deal with this problem, we deal with its annotated version where
we permit $\chi: V(G)\to \{0,1,\ldots,k\},$ i.e., the vertices  in $R:=\bigcup_{i\in[k]}\chi^{-1}(i)$
are annotated, while the vertices in $\chi^{-1}(0)$ cannot participate  in a solution (we call these vertices {\em black} vertices).

\begin{itemize}
	\item[] [{\sl Branching case}]: we branch on $(G\setminus x,k-1,\chi|_x),$ for every $x\in B\cap R,$ where $$\chi|_{x}=\{(v,\chi(v))\mid v\in V(G)\setminus \chi^{-1}(\chi(x))\}\cup\{(v,0)\mid v\in \chi^{-1}(\chi(x))\setminus\{x\}\}.$$
	      The new coloring $\chi|_{x}$ turns black all vertices carrying  the color of $x.$
	      If $ B\cap R=\emptyset,$ i.e., all vertices in $B$ are black, then we have a \no-instance.

	\item[] [{\sl Irrelevant vertex case}]:  Before we apply the irrelevant vertex case, we check whether there is some $i\in  [k]$ with $
		      |\chi^{-1}(i)|\leq 1,$ i.e., there is a color in $[k]$ that appears once or is not used at all. If $|\chi^{-1}(i)|=0,$ then we return that we have a \no-instance. If $\chi^{-1}(i)=\{x\},$ then $x$ should belong to every possible  solution and, in this case, we recurse on $(G\setminus x,k-1,\chi\setminus\{(x,\chi(x))\}).$
	      If now each color is used at least twice, we recurse on $(G\setminus x,k,\chi\setminus\{(x,\chi(x))\}),$ i.e., apply the irrelevant vertex case.

\end{itemize}

\section{Discussion and concluding remarks}\label{label_grundgesetzen}

\paragraph{Apices of topological minors.}
Very recently, Fomin et al.~\cite{FominLPSZ20hitt} gave an \FPT-algorithm running in time  ${\cal O}_{s,k}(n^4)$ for the following  problem: for a fixed finite family of graphs ${\cal F},$ each on at most $s$ vertices, decide whether an $n$-vertex input graph $G$ is a $k$-apex of the class of graphs that exclude the graphs in ${\cal F}$ as \emph{topological minors}\footnote{The definition is as minors, except that only edges incident with at least a degree-two vertex can be contracted.}. For every  graph $H,$ there is a finite set ${\cal H}$ of graphs such that a graph $G$ contains $H$ as a minor if and only if $G$ contains a graph in ${\cal H}$ as a topological minor. Based on this observation, the result of Fomin et al.~\cite{FominLPSZ20hitt}  implies
that for every minor-closed graph class ${\cal G},$
\mnb{\sc Vertex Deletion to ${\cal G}$} admits  an ${\cal O}(h(k,s)\cdot n^4)$-time \FPT-algorithm,
where $s$ is the maximum size of an obstruction of ${\cal G}.$ Notice that this implication is a solid improvement on \mnb{\sc  Vertex Deletion to ${\cal G}$}
with respect to the result of \cite{AdlerGK08comp},
where only the computability of $h$ is proved.
However, as mentioned in~\cite{FominLPSZ20hitt}, even for fixed values of $s,$
the dependence of $h$ on $k$ is humongous.
Therefore, \autoref{label_preliminaries}  can be seen as  orthogonal to the result  of~\cite{FominLPSZ20hitt}. An interesting question is whether the ideas of this paper can be useful towards improving the the parametric dependence of the algorithm of~\cite{FominLPSZ20hitt}.

\paragraph{Limitations of the irrelevant vertex technique.} An intriguing open question is whether
\mnb{\sc  Vertex Deletion to ${\cal G}$} admits a $2^{{\cal O}_{s_{\cal F}}\left(k^{c}\right)} \cdot n^{{\cal O}(1)}$-time algorithm
for some universal
constant $c$ that does {\sl not} depend on the class ${\cal G}$. Clearly, this is not the
case of the algorithms of  \autoref{label_preliminaries} and \autoref{label_entrepreneur}, running in time $2^{{\cal O}_{s_{\cal F}}(k^{2(c+2)})}\cdot n^3$ and $2^{{\cal O}_{s}(k^{2(c+2)})}\cdot n^2,$ respectively, where
$c$ is the {\sl palette-variety} of the minor-obstruction set ${\cal F}$ of  ${\cal G}$ which, from the corresponding proofs,  is
estimated to be $c=2^{2^{{\cal O}(s^2\cdot \log s)}}$ and $c=2^{2^{{\cal O}(s^{24}\cdot \log s)}},$   respectively (recall that $s$ is the maximum size of a minor-obstruction of ${\cal G}$).
We tend to believe that this dependence is unavoidable if we want to use the  {\sl irrelevant vertex technique},
as it reflects the {\em price of homogeneity}, mentioned in~\autoref{label_definitionen}.
Having homogeneous walls is critical for the
application of this technique (see \autoref{label_appressavamo}) when ${\cal G}$
is more general  than surface-embeddable graphs (in the bounded genus case, all subwalls are already homogeneous).  Is there a way to prove that this behavior is unavoidable
subject to some complexity assumption?
{An interesting result of this flavor, concerning the existence of polynomial kernels for \mnb{\sc  Vertex Deletion to ${\cal G}$}, was given by Giannopoulou et al.~\cite{GiannopoulouJLS17unif} who  proved that, even for minor-closed families ${\cal G}$ that exclude a planar graph, the dependence on ${\cal G}$ of the degree of the polynomial kernel,  which exists because of~\cite{FominLMS12plan}, is unavoidable subject to reasonable complexity assumptions.}

\paragraph{Kernelization.}
As mentioned above, Giannopoulou et al.~\cite{GiannopoulouJLS17unif} provided a polynomial kernel for \mnb{\sc  Vertex Deletion to ${\cal G}$} in the case where ${\cal G}$ excludes a planar graph.
To the best of our knowledge, the existence of a polynomial kernel is open for every family ${\cal G}$ whose obstructions are all non-planar.
In particular, no polynomial kernel is known even for the \mnb{\sc Vertex Planarization} problem.
%

\paragraph{Other modification operations.}
Another direction is to consider graph modification to a minor-closed
graph class for different modification operations. Our approach becomes just simpler in the case  where the modification operation is edge removal or edge contraction. In these two cases, {we immediately get rid of the branching part of our algorithms}  and only the {irrelevant vertex}  part needs to be applied. Another challenge  is to combine all aforementioned modifications. This  is more complicated (and tedious) but not more complex. What is really more complex is to consider as well edge additions. We leave it as an open research challenge (a first step was done for the case of planar graphs~\cite{FominGT19modi}).

\paragraph{Lower bounds.}
Concerning lower bounds for \mnb{\sc  Vertex Deletion to ${\cal G}$} under the Exponential Time Hypothesis~\cite{ImpagliazzoP01whic}, we are not aware of any lower bound stronger than $2^{o(k)} \cdot n^{\bigO{1}}$ for {\sl any} minor-closed class ${\cal G}.$
This lower bound already applies when  ${\cal F}=\{K_{2}\},$
i.e., for the \mnb{\sc Vertex Cover} problem~\cite{ImpagliazzoP01whic,BasteST20hittIII}.

\medskip

\paragraph{Acknowledgements.}
We wish to thank the anonymous reviewers of the conference version of this paper for their comments and remarks that improved the presentation
of this manuscript.
Moreover, we wish to thank Dániel Marx
for his valuable remarks and advises  {regarding the variants of the problem discussed in \autoref{label_ressentiment}}.

%
%

\end{document}